\documentclass[12pt]{article}

\usepackage{amsfonts}
\usepackage{amsmath,amsthm}
\usepackage{amssymb}
\usepackage{color}
\usepackage{graphicx}
\usepackage{longtable}
\usepackage{mathrsfs}
\usepackage{mathtools}
\usepackage{natbib}
\usepackage{stmaryrd}
\usepackage{verbatim}
\usepackage{authblk}

\DeclareMathOperator{\Prob}{\mathbb{P}}
\DeclareMathOperator{\E}{\mathbb{E}}
\newcommand{\del}{\mathrm{\Delta}}
\newcommand{\delhat}{\hat{\mathrm{\Delta}}}
\newcommand{\delsel}{\mathrm{\Delta}_{\mathrm{sel}}}
\newcommand{\delmut}{\mathrm{\Delta}_{\mathrm{mut}}}
\newcommand{\delhatsel}{\hat{\mathrm{\Delta}}_{\mathrm{sel}}}

\newcommand{\vx}{\mathbf{x}}
\newcommand{\vy}{\mathbf{y}}
\newcommand{\vX}{\mathbf{X}}
\newcommand{\va}{\mathbf{a}}
\newcommand{\vA}{\mathbf{A}}
\newcommand{\MSS}{\mathrm{MSS}}
\newcommand{\RMC}{\mathrm{RMC}}

\newcommand{\cM}{\mathcal{M}}
\newcommand{\M}{\mathrm{M}}
\newcommand{\F}{\mathrm{F}}

\newtheorem{theorem}{Theorem}
\newtheorem{corollary}{Corollary}
\newtheorem{lemma}{Lemma}
\newtheorem{proposition}{Proposition}

\theoremstyle{definition}
\newtheorem{definition}{Definition}
\newtheorem*{unityaxiom}{Fixation Axiom}
\newtheorem{assumption}{Assumption}

\date{}

\begin{document}

\title{A mathematical formalism for natural selection with arbitrary spatial and genetic structure}

\author[1,2]{Benjamin Allen}
\author[2]{Alex McAvoy}

\affil[1]{Department of Mathematics\\
			Emmanuel College\\
			Boston, MA 02115}
			
\affil[2]{Program for Evolutionary Dynamics\\
			Harvard University\\
			Cambridge, MA 02138}

\maketitle

\begin{abstract}
	We define a general class of models representing natural selection between two alleles. The population size and spatial structure are arbitrary, but fixed. Genetics can be haploid, diploid, or otherwise; reproduction can be asexual or sexual. Biological events (e.g.~births, deaths, mating, dispersal) depend in arbitrary fashion on the current population state. Our formalism is based on the idea of genetic sites. Each genetic site resides at a particular locus and houses a single allele. Each individual contains a number of sites equal to its ploidy (one for haploids, two for diploids, etc.). Selection occurs via replacement events, in which  alleles in some sites are replaced by copies of others. Replacement events depend stochastically on the population state, leading to a Markov chain representation of natural selection. Within this formalism, we define reproductive value, fitness, neutral drift, and fixation probability, and prove relationships among them. We identify four criteria for evaluating which allele is selected and show that these become equivalent in the limit of low mutation. We then formalize the method of weak selection. The power of our formalism is illustrated with applications to evolutionary games on graphs and to selection in a haplodiploid population.
\end{abstract}

\section{Introduction}
Ever since the Modern Evolutionary Synthesis, mathematics has played an indispensable role in the theory of evolution. Typically, the contribution of mathematics comes in the development and analysis of mathematical models. By representing evolutionary scenarios in a precise way, mathematical modeling can clarify conceptual issues, elucidate underlying mechanisms, and generate new hypotheses.

However,  conclusions from mathematical models must always be interrogated with respect to their robustness. Often, this interrogation takes place in \emph{ad hoc} fashion: Assumptions are relaxed one at a time (sometimes within the original work, sometimes in later works by the same or other authors), until a consensus emerges as to which conclusions are robust and which are merely artifacts.

An alternative approach is to take advantage of the generality made possible by mathematical abstraction. If one can identify a minimal set of assumptions that apply to a broad class of models, any theorem proven from these assumptions will apply to the entire class. Such theorems eliminate the duplicate work of deriving special cases one model at a time. More importantly, the greater the generality in which a theorem is proven, the more likely it is to represent a robust scientific principle. This mathematically general approach has been applied to a number of fields within evolutionary biology, including demographically-structured populations \citep{MetzdeRoos,DiekmannNonlinear,DiekmannLinear,DiekmannPhysiological,Lessard2018}, group- and deme-structured populations \citep{simon2013towards,lehmann2016invasion}, evolutionary game theory \citep{Corina, CorinaMultiple,wu2013dynamic,mcavoy2016structure}, quantitative trait evolution \citep{ChampagnatUnifying,DurinxPhysiological,allen2013adaptive,van2015social}, population extinction and persistence \citep{schreiber:JMB:2010,roth:JMB:2013,roth:JBD:2014,benaim:arxiv:2018}, and many aspects of population genetics \citep{tavare1984line,burger2000mathematical,ewens2004mathematical}.

Currently, there is great theoretical and empirical interest in understanding how the spatial and/or genetic structure of a population influences its evolution. Here, spatial structure refers to the physical layout of the habitat as well as patterns of interaction and dispersal; genetic structure refers to factors such as ploidy, sex ratio, and mating patterns. These factors can affect the rate of genetic change \citep{allen2015molecular,mcavoy2018stationary}, the balance of selection versus drift \citep{ErezGraphs,broom2010evolutionary,adlam2015amplifiers,pavlogiannis2018construction}, and the evolution of cooperation and other social behavior \citep{NowakMay,TaylorHow,RoussetBilliard,Ohtsuki,Taylor,NowakStructured,debarre2014social,pena2016evolutionary,allen2017evolutionary,fotouhi2018conjoining}.

To study the effects of spatial structure in a mathematically general way, \cite{allen2014measures} introduced a class of models with fixed population size and spatial structure. Each model in this class represents competition between two alleles on a single locus in a haploid, asexually-reproducing population. Replacement depends stochastically on the current population state, subject to general assumptions that are compatible with many established models in the literature. For this class, \cite{allen2014measures} defined three criteria for success under natural selection and proved that they coincide when mutation is rare.

Here, we generalize the class of models studied by \cite{allen2014measures} and significantly extend the results. As in 
\cite{allen2014measures}, selection occurs on a single biallelic locus, in a population of fixed size and structure. However, whereas \cite{allen2014measures} assumed haploid genetics, the class introduced here allows for arbitrary genetic structure, including diploid (monoecious or dioecious), haplodiploid, and polyploid genetics. Arbitrary mating patterns are allowed, including self-fertilization. This level of generality is achieved using the notion of \emph{genetic sites}. Each genetic site houses a single allele copy, and each individual contains a number of genetic sites equal to its ploidy. Spatial structure is also arbitrary, in that the patterns of interaction and replacement among individuals are subject only to a minimal assumption ensuring the unity of the population. We also allow for arbitrary mutational bias.

In this class of models, which we present in Section \ref{sec:formalism}, natural selection proceeds by \emph{replacement events}. Replacement events distill all interaction, mating, reproduction, dispersal, and death events into what ultimately matters for selection---namely, which alleles are replaced by copies of which others. Replacement events occur with probability depending on the current population state, according to a given \emph{replacement rule}. The replacement rule implicitly encodes all relevant aspects of the spatial and genetic structure.

The replacement rule, together with the mutation rate and mutational bias, define an \emph{evolutionary Markov chain} representing natural selection. Basic results on the asymptotic behavior of the evolutionary Markov chain are established in Section \ref{sec:StationarityFixation}.

In Section \ref{sec:selection}, we turn to the question of identifying which of two competing alleles is favored by selection. We compare four criteria: one based on fixation probabilities, one based on time-averaged frequency, and two based on expected frequency change due to selection. We prove (Theorem \ref{thm:rhodelsel}) that these coincide in the limit of low mutation, thereby generalizing the main result of \cite{allen2014measures}.

Sections \ref{sec:RVfit} and \ref{sec:neutral} explore the closely-related concepts of reproductive value, fitness, and neutral drift. We define these notions in the context of our formalism and prove connections among them. Interestingly, to define reproductive value requires an additional assumption that does not necessarily hold for all models; thus, the concept of reproductive value may not be as general as is sometimes thought. We also provide a new proof for the recently-observed principle \citep{maciejewski2014reproductive,allen2015molecular} that the reproductive value of a genetic site is proportional to the fixation probability, under neutral drift, of a mutation arising at that site.

We next turn to weak selection (Section \ref{sec:weaksel}), meaning that the alleles in question have only a small effect on reproductive success. Mathematically, weak selection can be considered a perturbation of neutral drift. Using this perturbative approach, one can obtain closed-form conditions for success under weak selection for models that would be otherwise intractable. This approach has fruitfully been applied in a great many contexts \citep{TaylorHow,RoussetBilliard,leturque2002dispersal,NowakFinite,Ohtsuki,LessardFixation,Taylor,AntalPhenotype,Corina,chen2013sharp,debarre2014social,durrett2014spatial,tarnita2014measures,van2015social,allen2017evolutionary}. Our second main result (Theorem \ref{thm:delhatselweak}) formalizes this weak-selection approach for our class of models. It asserts that, to determine whether an allele is favored under weak selection, one can take the expectation of a quantity describing selection over a probability distribution that pertains to neutral drift. The usefulness of this result stems from the fact that many evolutionary models become much simpler in the case of neutral drift.

The bulk of this work adopts a ``gene's-eye view," in that the analysis is conducted at the level of genetic sites. In Section \ref{sec:ind}, we reframe our results using quantities that apply at the level of the individual. This reframing again requires additional assumptions, such as fair meiosis. Without these additional assumptions, natural selection cannot be characterized solely in terms of individual-level quantities.

We illustrate the power of our formalism with two examples (Section \ref{sec:examples}). The first is a model of evolutionary games on an arbitrary weighted graph. For this model, we recover recent results of \cite{allen2017evolutionary}, using only results proven in this work. The second is a haplodiploid population model in which a mutation may have different selective effects in males and females. We obtain a simple condition to determine whether such a mutation is favored under weak selection.

Although our formalism is quite general in some respects, it still makes a number of simplifying assumptions. For example, we assume a population of fixed size in a constant environment, but real-world populations are subject to demographic fluctuations and ecological feedbacks, which may have significant consequences for their evolution \citep{Dieckmann,Metz,Geritz,pelletier2007evolutionary,wakano2009spatial,schoener2011newest,constable2016demographic,chotibut2017population}. Other limitations arise from our assumptions of fixed spatial structure, single-locus genetics, and trivial demography. Section \ref{sec:discussion} discusses these limitations and the prospects for extending beyond them.

\section{Class of models for natural selection}\label{sec:formalism}
We consider a class of models representing selection, on a single biallelic locus, in a population with arbitrary---but fixed---spatial and genetic structure. Each model within this class is represented by a set of genetic sites (partitioned into individuals), a replacement rule, a mutation probability, and a mutational bias. In this section, we introduce each of these ingredients in detail and discuss how they combine to form a Markov chain representing natural selection. A glossary of our notation is provided in Table \ref{notationtable}.

\subsection{Sites and individuals}\label{sec:sites}
We represent arbitrary spatial and genetic structure by using the concept of \emph{genetic sites} (Figs.~1A, 2). Each genetic site corresponds to a particular locus, on a single chromosome, within an individual. Since we consider only single-locus traits, each individual has a number of sites equal to its ploidy (e.g.~one for haploids, two for diploids). 

The genetic sites in the population are represented by a finite set $G$. The individuals are represented by a finite set $I$. To each individual $i \in I$, there corresponds a set of genetic sites $G_i \subseteq G$ residing in $i$. The collection of these sets, $\{G_i \}_{i \in I}$, forms a partition of $G$. We use the equivalence relation $\sim$ to indicate that two sites reside in the same individual; thus, $g \sim h$ if and only if $g,h \in G_i$ for some $i \in I$.

The total number of sites is denoted $n \coloneqq \left| G\right|$, and the total number of individuals is denoted $N \coloneqq \left| I\right|$. The ploidy of individual $i \in I$ is denoted $n_i \coloneqq \left| G_i\right|$; for example, $n_i=2$ if $i$ is diploid. The total number of sites is equal to the total ploidy across all individuals: $\sum_{i \in I} n_i=n$.

For a particular model within the class defined here, each individual may be labeled with additional information. For example, each individual may be designated as male or female and/or could be understood as occupying a particular location. However, these details are not explicitly represented in our formalism. In particular, we do not specify any representation of spatial structure (lattice, graph, metapopulation, etc.), although our formalism is compatible with all of these. Instead, all relevant aspects of spatial and genetic structure are implicitly encoded in the replacement rule (see Section \ref{sec:replacement} below). The spatial and genetic structure are considered fixed, in the sense that the roles of individuals and genetic sites do not change over time.

\subsection{Alleles and states} \label{sec:states}
There are two competing alleles, $a$ and $A$. Each genetic site holds a single allele copy. The allele currently occupying site $g \in G$ is indicated by the variable $x_g \in \left\{0,1\right\}$, with 0 corresponding to $a$ and 1 corresponding to $A$. The overall population state is represented by the vector $\vx \coloneqq \left(x_g\right)_{g \in G}$, which specifies the allele ($a$ or $A$) occupying each genetic site. The set of all possible states is denoted $\left\{0,1\right\}^G$.

It will sometimes be convenient to label a state by the subset of sites that contain the $A$ allele. Thus, for any subset $S \subseteq G$, we let $\mathbf{1}_S \in \left\{0,1\right\}^G$ denote the state in which sites in $S$ have allele $A$, and sites not in $S$ have allele $a$. That is, the state $\mathbf{1}_S$ is defined by
\begin{align}
	\left( \mathbf{1}_S \right)_g= \begin{cases}
		1 & g \in S , \\ 0 & g \notin S. \end{cases}
\end{align}

Of particular interest are the \emph{monoallelic states} $\va\coloneqq\mathbf{1}_\emptyset$, in which only allele $a$ is present; and $\vA\coloneqq\mathbf{1}_G$, in which only allele $A$ is present.

\subsection{Replacement}\label{sec:replacement}

\begin{figure}
	\centering
	\includegraphics[width=\textwidth]{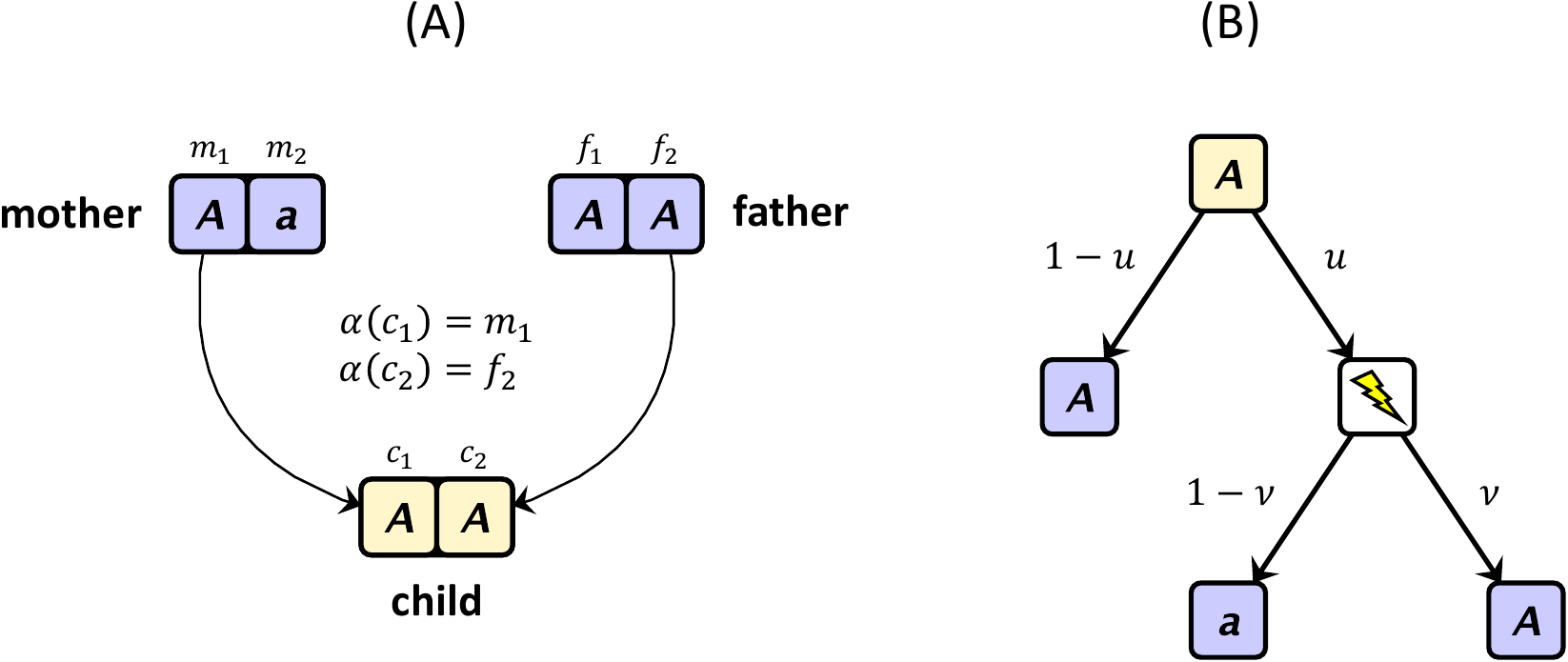}
	\caption{(A) The parentage mapping, $\alpha$, in the case of a diploid, sexually-reproducing population. In a diploid population, each individual contains two genetic sites. Here, site $c_1$ in the child inherits the allele from site $m_1$ in the mother ($\alpha(c_1)=m_1$), while site $c_2$ in the child inherits the allele from site $f_2$ in the father ($\alpha(c_2)=f_2$). Note that, although arrows are drawn from parent to child, the parentage map $\alpha$ is from child to parent. (B) Mutations are resolved as follows: With probability $1-u$, there is no mutation and the allele remains the parental type ($A$ in this case). With probability $u$, the allele mutates (lightning bolt) and becomes either $A$ (probability $\nu$) or $a$ (probability $1-\nu$). \label{fig:mutation}}
\end{figure}

\begin{figure}
	\centering
	\includegraphics[width=\textwidth]{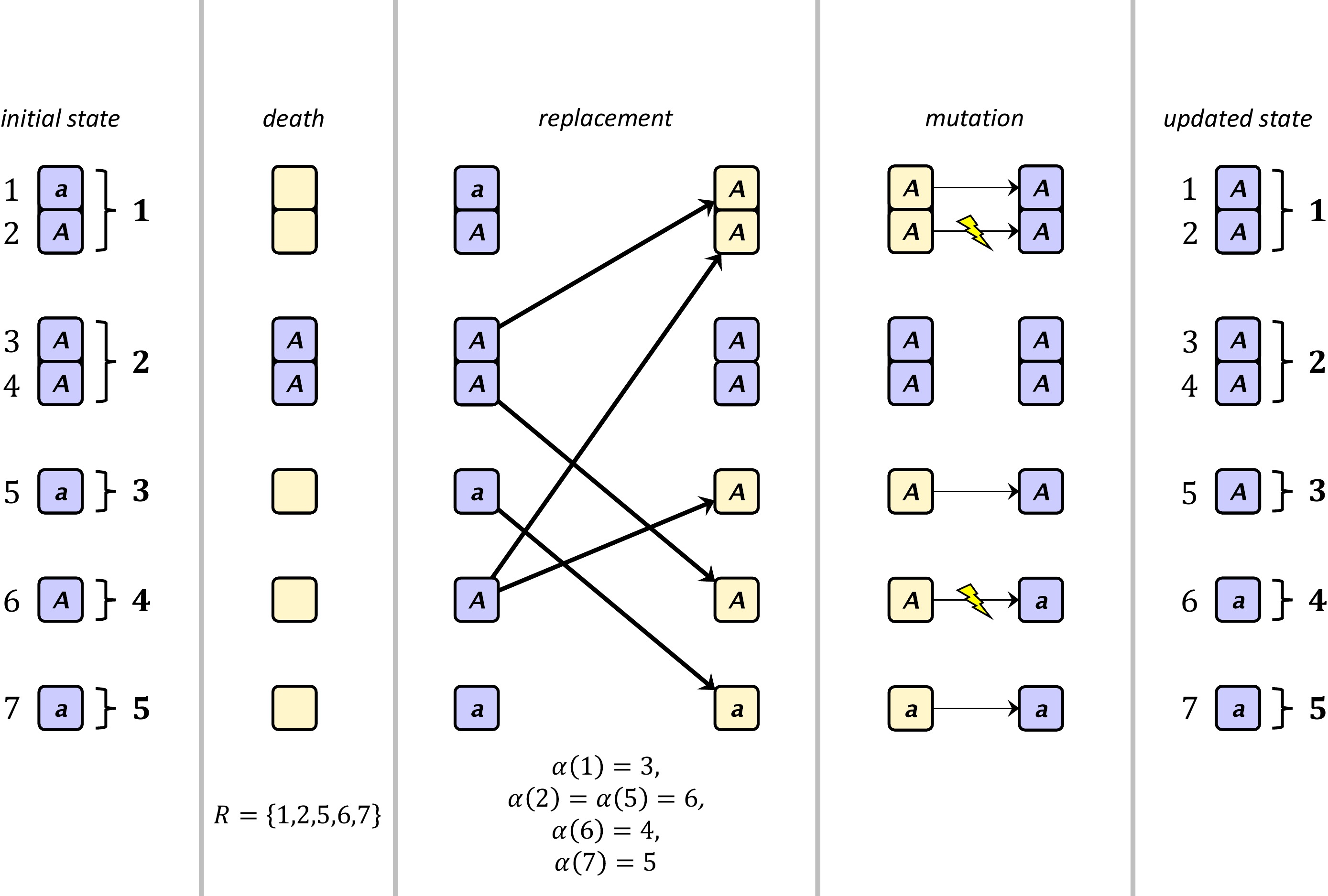}
	\caption{One complete update step in the evolutionary Markov chain. An example population is pictured with two diploid and three haploid individuals. Genetic sites are indicated by numerals to the left of the site, and the individuals in which these sites reside are labeled by bold numerals. First, a replacement event, $\left(R,\alpha\right)$, is chosen according to the distribution $\left\{p_{\left(R,\alpha\right)}\left(\vx\right)\right\}_{\left(R,\alpha\right)}$. In this case, the replaced set, $R$, is shown in yellow. Note that arrows are drawn from parents to children, but the parentage map, $\alpha$, is from child to parent. For every genetic site that is replaced under this event (yellow), the replicated allele is then subjected to possible mutation, resulting in a new state. \label{fig:replacement}}
\end{figure}

Natural selection proceeds by \emph{replacement events}, wherein some individuals are replaced by the offspring of others (Figs.~\ref{fig:mutation}A, \ref{fig:replacement}). We let $R\subseteq G$ denote the set of genetic sites that are replaced in such an event. For example, if only a single individual $i \in I$ dies, then $R=G_i$. If the entire population is replaced, then $R=G$.

The alleles in the sites in $R$ are then replaced by alleles in new offspring. Each new offspring inherits (possibly mutated) copies of alleles from its parents. The parentage of new alleles is recorded in a set mapping $\alpha:R \to G$. For each replaced site $g \in R$, $\alpha\left(g\right)$ indicates the parental site from which $g$ inherits its new allele. In other words, the new allele in $g$ is derived from a parent copy that (in the previous time-step) occupied site $\alpha\left(g\right)$. In haploid asexual models, $\alpha\left(g\right)$ simply indicates the parent of the new offspring in $g$. In models with sexual reproduction, $\alpha$ identifies not only the parents of each new offspring but also which allele were inherited from each parent (Fig.~\ref{fig:mutation}A).

Overall, a replacement event is represented by the pair $(R, \alpha)$, where $R \subseteq G$ is the set of replaced positions and $\alpha:R \to G$ is the parentage mapping. Any pair $\left(R, \alpha\right)$ with $R \subseteq G$ and $\alpha:R \to G$ can be considered a potential replacement event. Whether or not a given replacement event is possible in a given state, and how likely it is to occur, depends on the model in question. The probability that a given replacement event $\left(R,\alpha\right)$ occurs in state $\vx \in \left\{0,1\right\}^G$ is denoted $p_{\left(R, \alpha\right)}\left(\vx\right)$; these satisfy $\sum_{\left(R,\alpha\right)} p_{\left(R, \alpha\right)} \left(\vx\right) =1$ for each fixed $\vx$. The probabilities $\left\{p_{\left(R, \alpha\right)} \left(\vx\right) \right\}_{\left(R,\alpha\right)}$, as functions of $\vx$, are collectively called the \emph{replacement rule}.

All biological events such as births, deaths, mating, dispersal and interaction, and all aspects of spatial and genetic structure, are represented implicitly in the replacement rule. For example, in a model of a diploid population with nonrandom mating, the replacement rule encodes mating probabilities as well as the laws of Mendelian inheritance. In a model of a spatially-structured population with social interactions (see Section \ref{sec:games}), the replacement rule encodes interaction patterns, as well as the effects of interactions on births and deaths. From these biological details, the replacement rule distills what ultimately matters for selection: the transmission and inheritance of alleles.

\subsection{Mutation}\label{sec:mutation}
Each replacement of an allele provides an opportunity for mutation (Fig.~\ref{fig:mutation}B). Mutation is described by two parameters: (i) the \emph{mutation probability}, $0 \leqslant u \leqslant 1$, which is the probability that a given allele copy in a new offspring is mutated from its parent; and (ii) the \emph{mutational bias}, $0 < \nu < 1$, which is the probability that such a mutation results in $A$ rather than $a$.

In each time-step, after the replacement event $\left(R,\alpha\right)$ has been chosen, mutations are resolved and the new state, $\vx'$, is determined as follows. For each replaced site $g \in R$, one of three outcomes occurs:
\begin{itemize}
	\item With probability $1-u$, there is no mutation, and site $g$ inherits the allele of its parent: $x_g' = x_{\alpha(g)}$,
	\item With probability $u\nu$, a mutation to $A$ occurs, and $x_g'=1$,
	\item With probability $u(1-\nu)$, a mutation to $a$ occurs, and $x_g'=0$.
\end{itemize}
Mutation events are assumed to be independent across replaced sites and across time. Each site that is not replaced retains its current allele: $x_g'=x_g$ for all $g \notin R$. In this way, the updated state, $\vx'$, is determined.

\subsection{The evolutionary Markov chain} \label{sec:Markov}
Overall, from a given state $\vx$, first a replacement event is chosen according to the probabilities $\left\{p_{\left(R, \alpha\right)} \left(\vx\right)\right\}_{\left(R,\alpha\right)}$, and then mutations are resolved as described in Section \ref{sec:mutation}. This update leads to a new state $\vx'$, and the process then repeats (Fig.~\ref{fig:replacement}). This process defines a Markov chain $\cM$ on $\left\{0,1\right\}^G$, which we call the \emph{evolutionary Markov chain}. The evolutionary Markov chain is completely determined by the replacement rule $\left\{p_{\left(R, \alpha\right)} \left(\vx\right) \right\}_{\left(R,\alpha\right)}$, the mutation rate $u$, and the mutational bias $\nu$. We denote the transition probability from state $\vx$ to state $\vy$ in $\cM$ by $P_{\vx \to \vy}$.

\subsection{Fixation Axiom}
In order for the population to function as a single evolving unit, it should be possible for an allele to sweep to fixation. To state this principle formally, we introduce some new notation. For a given replacement event, $\left(R, \alpha\right)$, let $\tilde{\alpha}: G \to G$ be the mapping that coincides with $\alpha$ on elements of $R$ and coincides with the identity otherwise:
\begin{align}
	\tilde{\alpha}\left(g\right) = \begin{cases} \alpha\left(g\right) & g \in R , \\ g & g \notin R. \end{cases}
\end{align}
In words, $\tilde{\alpha}$ maps to the parent of each replaced site, and to the site itself for those not replaced.

We now formalize the notion of population unity as an axiom:
\begin{unityaxiom}
	There exists a genetic site $g \in G$, a positive integer $m$, and a finite sequence $\{ (R_k, \alpha_k) \}_{k=1}^m$ of replacement events, such that
	\renewcommand{\labelenumi}{(\alph{enumi})}
	\begin{enumerate}
		\item $p_{(R_k, \alpha_k)}(\vx)>0$ for all $k \in \{1, \ldots, m\}$ and all $\vx \in \{0,1\}^G$, 
		\item $g \in R_k$ for some $k \in \{1, \ldots, m\}$,
		\item For each $h \in G$, $\tilde{\alpha}_1 \circ \tilde{\alpha}_2\circ \cdots \circ \tilde{\alpha}_m (h) = g.$
	\end{enumerate}
\end{unityaxiom}

In words, there should be at least one genetic site $g \in G$ that can eventually spread its contents throughout the population, such that all sites ultimately trace their ancestry back to $g$. Part (b) is included to guarantee that no site is eternal (otherwise no evolution would occur). The Fixation Axiom ensures that the population evolves as a single unit, rather than (for example) being comprised of isolated subpopulations with no gene flow among them. We regard this axiom as a defining property of our class of models. 

\subsection{Relation to \cite{allen2014measures}}
Our formalism extends the class of models introduced by \cite{allen2014measures}, which considered only haploid populations with asexual reproduction, to populations with arbitrary genetic structure. Despite the differences in genetics, the two classes are very similar in their formal structure. Indeed, one can ``forget" the partition of genetic sites into individuals and instead consider the population as consisting of haploid asexual replicators. With this perspective, the results of \cite{allen2014measures} can be applied at the level of genetic sites rather than individuals. 

Beyond genetic structure, our current formalism generalizes that of \cite{allen2014measures} in three ways. First, whereas \cite{allen2014measures} assumed unbiased mutation, we consider here arbitrary mutational bias, $0<\nu<1$. Second, \cite{allen2014measures} assumed that the total birth rate is constant over states; here this assumption is deferred until Section \ref{sec:RV}, by which point we have already established a number of fundamental results. Third, our Fixation Axiom generalizes its analogue in \cite{allen2014measures} (there labeled Assumption 2), which required that fixation be possible from every site. Here, we only require fixation to be possible from at least one site. The current formulation allows for ``dead end" sites, such as those in sterile worker insects, which not allowed in the formalism of \cite{allen2014measures}.

Despite the increase in generality, some proofs from \cite{allen2014measures} carry over to the current formalism with little or no modification. We will not repeat proofs from \cite{allen2014measures} here unless they need to be modified significantly.

\section{Stationarity and Fixation}\label{sec:StationarityFixation}
In this section, we establish fundamental results regarding the asymptotic behavior of the evolutionary Markov chain. We also define fixation probability and introduce probability distributions that characterize the frequency with which states arise under natural selection.

\subsection{Demographic variables} \label{sec:demographic}
We first introduce the following variables as functions of the state $\vx \in \left\{0,1\right\}^G$. The frequency of the allele $A$ is denoted by $x$:
\begin{equation}
	\label{eq:freqdef}
	x \coloneqq \frac{1}{n} \sum_{g \in G} x_g.
\end{equation}
The (marginal) probability that the allele in site $g \in G$ transmits a copy of itself to site $h \in G$ over the next transition is denoted $e_{gh}\left(\vx\right)$:
\begin{equation}
	e_{gh}\left(\vx\right) \coloneqq \sum_{\substack{\left(R, \alpha\right) \\ \alpha\left(h\right) =g}} p_{\left(R, \alpha\right)} \left(\vx\right).
\end{equation}
The expected number of copies that the allele in $g$ transmits, which we call the \emph{birth rate} of site $g$ in state $\vx$, can be calculated as:
\begin{equation}
	b_g \left(\vx\right) \coloneqq \sum_{h\in G} e_{g h} \left(\vx\right) = \sum_{\left(R, \alpha\right)} p_{\left(R, \alpha\right)} \left(\vx\right) \, \left|\alpha^{-1}\left(g\right)\right| .
\end{equation}
The probability that the allele in $g$ is replaced, which we call the \emph{death probability} of site $g$ in state $\vx$, can be calculated as:
\begin{equation}
	\label{eq:ddef}
	d_g \left(\vx\right) \coloneqq \sum_{h \in G} e_{hg} \left(\vx\right) = \sum_{\substack{\left(R, \alpha\right) \\ g \in R}} p_{\left(R, \alpha\right)} \left(\vx\right) .
\end{equation}
The Fixation Axiom guarantees that $d_g \left(\vx\right) >0$ for all $g \in G$ and $\vx \in \left\{0,1\right\}^G$.

The total birth rate in state $\vx$ is denoted $b\left(\vx\right)$. Since the population size is fixed, $b\left(\vx\right)$ also gives the expected number of deaths:
\begin{equation}
	b\left(\vx\right) \coloneqq \sum_{g \in G} b_{g} \left(\vx\right) = \sum_{g \in G} d_{g} \left(\vx\right) = \sum_{g,h \in G} e_{g h} \left(\vx\right).
\end{equation}

\subsection{The mutation-selection stationary distribution} \label{sec:MSS}
When mutation is present ($u > 0$), the evolutionary Markov chain is ergodic (aperiodic and positive recurrent; Theorem 1 of \citealp{allen2014measures}). In this case, the evolutionary Markov chain has a unique stationary distribution called the \emph{mutation-selection stationary distribution}, or \emph{MSS distribution} for short. For any state function $f\left(\vx\right)$, its time-averaged value converges almost surely, as time goes to infinity, to its expectation under this distribution:
\begin{align}
	\lim_{T \to \infty} \frac{1}{T} \sum_{t=0}^{T-1} f\left(\vX \left(t\right)\right) = \E_\MSS\left[f\right] \quad \text{almost surely}.
\end{align}

We denote the probability of state $\vx$ in the MSS distribution by $\pi_\MSS\left(\vx\right) \coloneqq \Prob_\MSS\left[\vX=\vx\right]$. The MSS distribution is uniquely determined by the system of equations
\begin{subequations}
	\label{eq:MSSsystem}
	\begin{align}
		\label{eq:MSSrecur}
		\pi_\MSS\left(\vx\right) & = \sum_{\vy \in \{0,1\}^G} \pi_\MSS\left(\vy\right) P_{\vy \to \vx} , \\
		\label{eq:MSSsum}
		\sum_{\vx \in \left\{0,1\right\}^G} \pi_\MSS\left(\vx\right) & = 1.
	\end{align}
\end{subequations}

\subsection{Fixation probability} \label{sec:fixprob}
When there is no mutation ($u=0$), the monoallelic states $\va$ and $\vA$ are absorbing, and all other states are transient (Theorem 2 of \citealp{allen2014measures}). Thus, from any initial state, the evolutionary Markov chain converges, almost surely as $t \to \infty$, to one of the two monoallelic states. We say that the population has become \emph{fixed for allele $a$} if the state converges to $\va$, and \emph{fixed for allele $A$} if the state converges to $\vA$.

The \emph{fixation probability} of an allele is informally defined as the probability that it becomes fixed when starting from a single copy. A precise definition, however, must take into account that the fate of a mutant allele can depend on the site in which it arises \citep{allen2014measures,maciejewski2014reproductive,adlam2015amplifiers,allen2015molecular,chen2016fixation}. Since each replacement provides an independent opportunity for mutation, new mutations arise in proportion to the rate at which a site is replaced \citep{allen2014measures}. Thus, in state $\va$, $A$ mutations arise in site $g$ at a rate proportional to $d_g\left(\va\right)$, while in state $\vA$, $a$ mutations arise in site $g$ at a rate proportional to $d_g\left(\vA\right)$. The probability of multiple $A$ mutations arising in state $\va$, or multiple $a$ mutations arising in state $\vA$, is of order $u^2$ as $u \to 0$. We formalize these observations as a lemma:

\begin{lemma}
	\label{lem:mutantarise}
	\begin{subequations}\label{eq:mutantarise}
		\begin{align}
			P_{\va \to \vx} & = 
			\begin{cases}
				1 - u\nu \, b\left(\va\right) + \mathcal{O}\left(u^2\right) & \text{if $\vx = \va$}, \\
				u\nu \, d_g\left(\va\right) + \mathcal{O}\left(u^2\right) & \text{if $\vx = \mathbf{1}_{\left\{g\right\}}$ for some $g \in G$}, \\
				\mathcal{O}\left(u^2\right) & \text{otherwise};
			\end{cases} \\
			P_{\vA \to \vx} & = 
			\begin{cases}
				1 - u \left(1-\nu\right) \, b\left(\vA\right) + \mathcal{O}\left(u^2\right) & \text{if $\vx = \vA$}, \\
				u \left(1-\nu\right) \, d_g\left(\vA\right) + \mathcal{O}\left(u^2\right) & \text{if $\vx = \mathbf{1}_{G \setminus \left\{g\right\}}$ for some $g \in G$}, \\
				\mathcal{O}\left(u^2\right) & \text{otherwise}.
			\end{cases}
		\end{align}
	\end{subequations}
\end{lemma}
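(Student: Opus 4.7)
The plan is to expand the one-step transition probabilities out of the monoallelic states in powers of $u$, using the fact that in a monoallelic state, inheritance without mutation cannot change the state. I would work with $\va$ first; the $\vA$ case is symmetric with $\nu$ replaced by $1-\nu$ and $d_g(\va)$ by $d_g(\vA)$.

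First I would decompose $P_{\va \to \vx}$ by conditioning on the chosen replacement event $(R,\alpha)$:
\begin{equation*}
P_{\va \to \vx} = \sum_{(R,\alpha)} p_{(R,\alpha)}(\va)\, Q_{(R,\alpha)}(\vx),
\end{equation*}
where $Q_{(R,\alpha)}(\vx)$ is the probability that, after replacement $(R,\alpha)$ is applied in state $\va$, the resulting state is $\vx$. Because every site of $\va$ carries allele $a$, each $g \in R$ independently becomes $A$ with probability $u\nu$ and $a$ with probability $1-u\nu$ (the no-mutation and $a$-mutation cases both give $a$). Sites outside $R$ remain $a$. Thus $Q_{(R,\alpha)}(\vx)=0$ unless the set $S$ of $A$-sites in $\vx$ is contained in $R$, in which case
\begin{equation*}
Q_{(R,\alpha)}(\mathbf{1}_S) = (u\nu)^{|S|}(1-u\nu)^{|R|-|S|}.
\end{equation*}

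From this, the three cases follow by reading off powers of $u$. For $\vx=\va$ (so $|S|=0$), $Q_{(R,\alpha)}(\va)=(1-u\nu)^{|R|}=1-u\nu|R|+\mathcal{O}(u^2)$; summing against $p_{(R,\alpha)}(\va)$ and using $\sum_{(R,\alpha)}p_{(R,\alpha)}(\va)|R|=\sum_{g\in G}d_g(\va)=b(\va)$ gives the first line. For $\vx=\mathbf{1}_{\{g\}}$, only replacement events with $g\in R$ contribute, and $Q_{(R,\alpha)}(\mathbf{1}_{\{g\}})=u\nu(1-u\nu)^{|R|-1}=u\nu+\mathcal{O}(u^2)$, so the contribution is $u\nu\sum_{(R,\alpha):\,g\in R}p_{(R,\alpha)}(\va)+\mathcal{O}(u^2)=u\nu\,d_g(\va)+\mathcal{O}(u^2)$ by \eqref{eq:ddef}. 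For any other state $\vx$ with $|S|\geq 2$, $Q_{(R,\alpha)}(\vx)$ carries a factor $(u\nu)^{|S|}=\mathcal{O}(u^2)$, so the whole sum is $\mathcal{O}(u^2)$.

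The only nontrivial step is bookkeeping of the $\mathcal{O}(u^2)$ remainders: I would note that $|R|\leq n$ and the number of replacement events with positive probability in a given state is finite, so all the expansions $(1-u\nu)^{|R|-|S|}=1+\mathcal{O}(u)$ have constants uniform in $(R,\alpha)$, legitimizing the term-by-term expansion. The $\vA$ case proceeds identically: in state $\vA$ each replaced site becomes $a$ with probability $u(1-\nu)$ and $A$ with probability $1-u(1-\nu)$, and the $a$-sites of the new state must lie in $R$; the same three-case analysis yields the second half of \eqref{eq:mutantarise}.
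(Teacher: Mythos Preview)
Your argument is correct and is precisely the computation one would carry out; the paper itself omits the proof, remarking only that it is a minor variation on Lemma~3 of \cite{allen2014measures}. Your decomposition by replacement event, identification of the Bernoulli probability $u\nu$ for each replaced site in state $\va$, and the bookkeeping via $\sum_{(R,\alpha)}p_{(R,\alpha)}(\va)|R|=b(\va)$ and $\sum_{(R,\alpha):\,g\in R}p_{(R,\alpha)}(\va)=d_g(\va)$ constitute exactly the intended proof.
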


The proof is a minor variation on the proof of Lemma 3 in \cite{allen2014measures} and is therefore omitted.

Lemma \ref{lem:mutantarise} motivates the following definitions (from \citealp{allen2014measures}), describing the relative likelihoods of initial states when a mutant first arises under rare mutation:

\begin{definition}\label{def:mutapper}
	The \emph{mutant appearance distribution for allele $A$} is a probability distribution on $\left\{0,1\right\}^G$ defined by
	\begin{align}
		\label{eq:mutA}
		\mu_A \left(\vx\right) \coloneqq \begin{cases} 
			\frac{d_g\left(\va\right)}{b\left(\va\right)} & \text{if $\vx = \mathbf{1}_{\left\{g\right\}}$ for some $g \in G$} , \\
			0 & \text{otherwise.}
		\end{cases}
	\end{align}
	Similarly, the \emph{mutant appearance distribution for allele $a$} is a probability distribution on $\left\{0,1\right\}^G$ defined by
	\begin{align}
		\label{eq:muta}
		\mu_a \left(\vx\right) \coloneqq \begin{cases} 
			\frac{d_g\left(\vA\right)}{b\left(\vA\right)} & \text{if $\vx = \mathbf{1}_{G \setminus \left\{g\right\}}$ for some $g \in G$} , \\
			0 & \text{otherwise.}
		\end{cases}
	\end{align}
\end{definition}

Taking these mutant appearance distributions into account, \cite{allen2014measures} defined the overall fixation probabilities of $A$ and $a$ as follows:

\begin{definition}\label{def:meanFixationProbabilities}
	The \emph{fixation probability of $A$}, denoted $\rho_A$, is defined as
	\begin{equation}
		\rho_A \coloneqq  \sum_{\vx \in \left\{0,1\right\}^G} \mu_A\left(\vx\right) \left( \lim_{t \to \infty} P^{\left(t\right)}_{ \vx \to \vA} \right).
	\end{equation}
	Similarly, the \emph{fixation probability of $a$}, denoted $\rho_a$, is defined as
	\begin{equation}
		\rho_a \coloneqq \sum_{\vx \in \left\{0,1\right\}^G} \mu_a\left(\vx\right) \left( \lim_{t \to \infty} P^{\left(t\right)}_{ \vx \to \va} \right).
	\end{equation}
\end{definition}

Above, $P^{\left(t\right)}_{ \vx \to \vy}$ denotes the probability of transition from state $\vx$ to state $\vy$ in $t$ steps. The Fixation Axiom guarantees that there is at least one site $g$ for which $\lim_{t \to \infty} P^{\left(t\right)}_{ \mathbf{1}_{\left\{g\right\}} \to \vA}$,  $\lim_{t \to \infty} P^{\left(t\right)}_{  \mathbf{1}_{G \setminus \left\{g\right\}} \to \va}$, $d_g\left(\va\right)$, and $d_g\left(\vA\right)$ are all positive. It follows that $\rho_A$ and $\rho_a$ are both positive.

\subsection{The limit of rare mutation}\label{sec:lowmutation}
We now consider the limit of low mutation for a fixed replacement rule, $\left\{p_{\left(R, \alpha\right)} \left(\vx\right) \right\}_{\left(R,\alpha\right)}$, and mutational bias $\nu$. There is an elegant relationship between the fixation probabilities and the limiting MSS distribution:

\begin{theorem}\label{thm:lowu}
	Fix a replacement rule $\left\{p_{\left(R, \alpha\right)} \left(\vx\right) \right\}_{\left(R,\alpha\right)}$ and a mutational bias $\nu$. Then for each state $\vx \in \left\{0,1\right\}^G$, $\lim_{u \to 0} \pi_\MSS\left(\vx\right)$ exists and is given by 
	\begin{equation}
		\label{eq:pAndRho}
		\lim_{u \to 0} \pi_\MSS\left(\vx\right) =
		\begin{cases} 
			\displaystyle \frac{\nu b \left(\va\right)\rho_{A}}{ \nu b\left(\va\right) \rho_{A}+\left(1-\nu\right) b \left(\vA\right)\rho_{a}} 
			& \text{for $\vx = \vA$} , \\[5mm]
			\displaystyle \frac{\left(1-\nu\right) b\left(\vA\right)\rho_{a}}{\nu b \left(\va\right)\rho_{A}+\left(1-\nu\right) b (\vA)\rho_{a}} 
			& \text{for $\vx = \va$} , \\[5mm]
			0 & \text{for $\vx \notin \left\{\va, \vA\right\}$}.
		\end{cases}
	\end{equation}
	Above, $\rho_A$ and $\rho_a$ are the fixation probabilities for this replacement rule when $u=0$.
\end{theorem}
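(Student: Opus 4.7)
The plan has two steps: a concentration argument showing that the MSS mass collapses onto $\{\va, \vA\}$ as $u \to 0$, and a first-order balance computation pinning down how this limiting mass splits between $\va$ and $\vA$.

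For concentration, observe that the transition matrix of $\cM$ depends continuously (indeed polynomially) on $u$. By compactness of the probability simplex, any sequence $u_n \to 0$ admits a subsequence along which $\pi_\MSS$ converges to some distribution $\pi^*$; passing to the limit in \eqref{eq:MSSrecur} shows $\pi^*$ is stationary for the chain at $u=0$. By Theorem 2 of \cite{allen2014measures}, every state other than $\va$ and $\vA$ is transient when $u=0$, so $\pi^*$ is supported on $\{\va, \vA\}$. Hence $\lim_{u \to 0} \pi_\MSS(\vx) = 0$ for every $\vx \notin \{\va, \vA\}$.

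For the split, define $h: \{0,1\}^G \to [0,1]$ by $h(\vx) \coloneqq \lim_{t \to \infty} P^{(t)}_{\vx \to \vA}$ in the $u=0$ chain, so that $h(\va) = 0$, $h(\vA) = 1$, and $h$ is harmonic under the $u=0$ kernel. Stationarity of $\pi_\MSS$ under the $u>0$ kernel combined with this harmonicity yields
\begin{equation*}
	0 \;=\; \sum_{\vx} \pi_\MSS(\vx) \sum_{\vy} \bigl(P^{(u)}_{\vx \to \vy} - P^{(0)}_{\vx \to \vy}\bigr)\, h(\vy),
\end{equation*}
where the superscripts indicate the value of the mutation probability. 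For $\vx = \va$, Lemma \ref{lem:mutantarise} collapses the inner sum to $u\nu \sum_g d_g(\va)\, h(\mathbf{1}_{\{g\}}) + O(u^2) = u\nu b(\va)\rho_A + O(u^2)$, using $\rho_A = \sum_g \mu_A(\mathbf{1}_{\{g\}})\, h(\mathbf{1}_{\{g\}})$ from Definition \ref{def:meanFixationProbabilities}. For $\vx = \vA$, writing $h(\mathbf{1}_{G \setminus \{g\}}) = 1 - \lim_{t} P^{(t)}_{\mathbf{1}_{G \setminus \{g\}} \to \va}$ and proceeding symmetrically gives $-u(1-\nu) b(\vA)\rho_a + O(u^2)$. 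For every other $\vx$, $\sum_{\vy} \bigl|P^{(u)}_{\vx \to \vy} - P^{(0)}_{\vx \to \vy}\bigr| = O(u)$ uniformly in $\vx$ (since each $P^{(u)}_{\vx \to \vy}$ is a polynomial in $u$ with $P^{(0)}_{\vx \to \vy}$ as its constant term), so combined with $\pi_\MSS(\vx) \to 0$ from the concentration step, the total contribution from these states is $o(u)$.

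Dividing by $u$ and passing to $u \to 0$ shows that any pair of subsequential limits $(\pi_a^*, \pi_A^*)$ of $\bigl(\pi_\MSS(\va), \pi_\MSS(\vA)\bigr)$ satisfies $\pi_a^*\, \nu b(\va)\rho_A = \pi_A^*\, (1-\nu) b(\vA)\rho_a$ together with $\pi_a^* + \pi_A^* = 1$; since this linear system has a unique solution, the limits exist and agree with the stated formula. The main technical care needed is the uniform-in-$\vx$ bound on the total variation of the perturbation, which ensures the transient-state contribution collapses to $o(u)$; fortunately this is immediate from the polynomial dependence of transition probabilities on $u$ induced by the independent site-wise mutation rule.
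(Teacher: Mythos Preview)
Your proof is correct and takes a genuinely different route from the paper's. The paper applies state-space reduction (Theorem~\ref{thm:statereduction}, citing Kemeny--Snell) to collapse $\cM$ onto $\{\va,\vA\}$, then computes the reduced transition probabilities $Q_{\va\to\vA}$ and $Q_{\vA\to\va}$ to leading order in $u$ and reads off the limiting ratio $\pi_\MSS(\vA)/\pi_\MSS(\va)$ from two-state stationarity. You instead test the stationarity of $\pi_\MSS$ under the $u>0$ kernel against the function $h$ that is harmonic for the $u=0$ kernel; the resulting identity $\sum_\vx \pi_\MSS(\vx)\sum_\vy(P^{(u)}_{\vx\to\vy}-P^{(0)}_{\vx\to\vy})h(\vy)=0$ isolates the same first-order balance without invoking the reduction machinery. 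Your concentration step also differs slightly: the paper cites directly that $\pi_\MSS(\vx)$ is a bounded rational function of $u$, whereas you argue via compactness and uniqueness of subsequential limits. Both are equally rigorous. The trade-off is that the paper's excursion picture is more probabilistically transparent (one literally sees the mutation-then-fixation pathway), while your harmonic-function argument is more self-contained and avoids the auxiliary Theorem~\ref{thm:statereduction}; it also generalizes readily to other observables besides the indicator of fixation.
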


Intuitively, Theorem \ref{thm:lowu} states that as $u\to 0$, the MSS distribution becomes concentrated on the monoallelic states $\vA$ and $\va$, with probabilities determined by the relative rates of transit, $\nu b \left(\va\right)\rho_{A}$ and $\left(1-\nu\right) b \left(\vA\right)\rho_{a}$. Theorem \ref{thm:lowu} result generalizes Theorem 6 of \cite{allen2014measures} and a result of \cite{van2015social}, both of which apply to the special case $\nu=1/2$ and $b\left(\va\right) =b\left(\vA\right)$.

We will prove Theorem \ref{thm:lowu} using the principle of state space reduction. Let $\mathcal{A}$ be a finite  Markov chain and let $S$ be a nonempty subset of the states of $\mathcal{A}$. (In proving Theorem \ref{thm:lowu} we will use $\mathcal{A}=\cM$ and $S=\{\va, \vA\}$.) For any states $s,s' \in S$, let $Q_{s \to s'}$ be the probability that, from initial state $s$, the next visit to $S$ occurs in state $s'$. We define a reduced Markov chain $\mathcal{A}_{|S}$ with set of states $S$ and transition probabilities $Q_{s \to s'}$. The following standard result (e.g.~Theorem 6.1.1~of \citealp{kemeny1960finite}) shows that stationary distributions for the original and reduced Markov chains are compatible in a simple way:

\begin{theorem}
	\label{thm:statereduction}
	Let $\mathcal{A}$ be finite Markov chain with a unique stationary distribution, $\pi_\mathcal{A}$, and let $S$ be a nonempty subset of states of $\mathcal{A}$. Then, the reduced Markov chain, $\mathcal{A}_{|S}$, has a unique stationary distribution, $\pi_{\mathcal{A}_{|S}}$, which is given by conditioning the stationary distribution $\pi_\mathcal{A}$ on the event $S$:
	\begin{equation}
		\pi_{\mathcal{A}_{|S}}\left(s\right) \coloneqq \frac{\pi_\mathcal{A}\left(s\right)}{\sum_{s'\in S} \pi_\mathcal{A}\left(s'\right)} .
	\end{equation}
\end{theorem}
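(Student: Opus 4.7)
The plan is to partition the transition matrix of $\mathcal{A}$ according to whether states lie in $S$ or $S^c$, write the one-step matrix of $\mathcal{A}_{|S}$ in closed form via this partition, and then read off invariance from the block form of $\pi_\mathcal{A} P = \pi_\mathcal{A}$. Writing $P$ for the transition matrix of $\mathcal{A}$ and $\pi_\mathcal{A} = (\pi_S, \pi_{S^c})$ as a row vector, the first step is to establish
\[
Q = P_{SS} + P_{SS^c}\,(I - P_{S^c S^c})^{-1}\,P_{S^c S},
\]
which follows by summing over the number of consecutive steps the chain spends in $S^c$ between successive visits to $S$, i.e., a geometric series in $P_{S^c S^c}$ applied to a first-passage decomposition.

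Next I would read off invariance of $\pi_S$ directly from $\pi_\mathcal{A} P = \pi_\mathcal{A}$. The two block equations are
\begin{align*}
\pi_S P_{SS} + \pi_{S^c} P_{S^c S} &= \pi_S, \\
\pi_S P_{SS^c} + \pi_{S^c} P_{S^c S^c} &= \pi_{S^c}.
\end{align*}
Solving the second for $\pi_{S^c} = \pi_S P_{SS^c}\,(I - P_{S^c S^c})^{-1}$ and substituting into the first gives $\pi_S\, Q = \pi_S$. Dividing by $Z \coloneqq \sum_{s \in S} \pi_\mathcal{A}(s) > 0$ yields a probability vector on $S$ that is invariant under $Q$, which is the claimed formula.

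For uniqueness, I would reverse the construction: given any stationary distribution $\pi'$ of $\mathcal{A}_{|S}$, define $\tilde{\pi}_{S^c} \coloneqq \pi' P_{SS^c}\,(I - P_{S^c S^c})^{-1}$ and verify (using the closed form for $Q$) that the row vector $(\pi', \tilde{\pi}_{S^c})$ satisfies $\tilde{\pi} P = \tilde{\pi}$. Uniqueness of $\pi_\mathcal{A}$ then forces this extended vector to be a positive scalar multiple of $\pi_\mathcal{A}$, and normalization pins down $\pi' = \pi_S / Z$.

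The main obstacle is justifying invertibility of $(I - P_{S^c S^c})$; equivalently, ensuring that from each state in $S^c$ the chain reaches $S$ in finite expected time. Because $\mathcal{A}$ has a unique stationary distribution, its support is a single recurrent communicating class $C$, and what is actually needed is the hypothesis $C \cap S \neq \emptyset$, implicit in the theorem's intended use (for Theorem \ref{thm:lowu}, $\mathcal{A} = \cM$ is ergodic and $S = \{\va,\vA\}$, so $C$ is the entire state space). Granted this, irreducibility of $C$ forces any state in $C \cap S^c$ to return to $S$, and transient states reach $C$ in finite expected time, so the Neumann series $\sum_{k \geq 0} P_{S^c S^c}^k$ converges and the above manipulations are legitimate.
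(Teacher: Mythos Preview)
The paper does not actually prove this theorem; it cites it as a standard result (Theorem 6.1.1 of \citealp{kemeny1960finite}) and invokes it as a black box in the proof of Theorem~\ref{thm:lowu}. Your block-matrix argument is a correct and standard way to establish the result directly: the first-passage decomposition giving $Q = P_{SS} + P_{SS^c}(I - P_{S^cS^c})^{-1}P_{S^cS}$ is exactly right, and the two block equations from $\pi_\mathcal{A} P = \pi_\mathcal{A}$ combine to give $\pi_S Q = \pi_S$ as you describe. The uniqueness argument via lifting a stationary distribution on $S$ back to one on the full state space is also sound.

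You correctly isolate the one genuine hypothesis that the theorem statement leaves implicit: invertibility of $I - P_{S^cS^c}$ (equivalently, convergence of $\sum_{k\ge 0} P_{S^cS^c}^k$) requires that from every state in $S^c$ the chain reaches $S$ almost surely, which amounts to $S$ intersecting the unique recurrent class $C$. If $S \cap C = \emptyset$, the denominator $\sum_{s'\in S}\pi_\mathcal{A}(s')$ vanishes and the reduced ``chain'' is substochastic, so the statement is vacuous or false. In both of the paper's applications ($S = \{\va,\vA\}$ for Theorem~\ref{thm:lowu} and $S = \{0,1\}^G \setminus \{\va,\vA\}$ for Theorem~\ref{thm:RMCrecur}) the chain $\cM$ is ergodic for $u>0$, so the issue does not arise; but you are right that the stated hypotheses do not quite suffice in general, and your proof is complete modulo this implicit assumption that the paper also silently makes.
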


\begin{proof}[Proof of Theorem \ref{thm:lowu}] The limits $\lim_{u \to 0} \pi_\MSS\left(\vx\right)$ exist for each $\vx \in \left\{0,1\right\}^G$ since each $\pi_\MSS\left(\vx\right)$ is a bounded, rational function of $u$ (see Lemma 1 of \citealp{allen2014measures}). Since $\pi_\MSS$ satisfies Eq.~\eqref{eq:MSSsystem} for each $u>0$, it also satisfies Eq.~\eqref{eq:MSSsystem} in the limit $u\rightarrow 0$. Therefore, $\lim_{u \to 0} \pi_\MSS$ is a stationary distribution for the mutation-free ($u=0$) evolutionary Markov chain, $\cM$. Since all states other than $\va$ and $\vA$ are transient when $u=0$, they must have zero probability in any stationary distribution; therefore, $\lim_{u \to 0} \pi_\MSS\left(\vx\right)=0$ for $\vx \notin \left\{\va, \vA\right\}$.
	
	To determine the limiting values of $\pi_\MSS\left(\va\right)$ and $\pi_\MSS\left(\vA\right)$, we temporarily fix some $u>0$ and consider the reduction of $\cM$ to the set of states $\left\{ \va, \vA \right\}$. By Theorem \ref{thm:statereduction}, the reduced Markov chain $\cM_{|\left\{ \va, \vA \right\}}$ has a unique stationary distribution, $\pi_{\cM_{|\left\{ \va, \vA \right\}}}$, satisfying
	\begin{equation}
		\label{eq:piratio1}
		\frac{\pi_{\cM_{| \left\{ \va, \vA \right\}}}\left(\vA\right)}{\pi_{\cM_{| \left\{ \va, \vA \right\}}}\left(\va\right)} =\frac{\pi_\MSS\left(\vA\right)}{\pi_\MSS\left(\va\right)}.
	\end{equation}
	Let $Q_{\va \to \vA}$ and $Q_{\vA \to \va}$ denote the transition probabilities in $\cM_{| \left\{ \va, \vA \right\}}$. Eq.~\eqref{eq:piratio1} and the stationarity of $\pi_{\cM_{|\left\{ \va, \vA \right\}}}$ imply that 
	\begin{equation}
		\label{eq:piratio}
		\frac{\pi_\MSS\left(\vA\right)}{\pi_\MSS\left(\va\right)} = \frac{Q_{\va \to \vA}}{Q_{\vA \to \va}} .
	\end{equation}
	
	We note that $Q_{\va \to \vA}$ equals the probability, in $\cM$ with initial state $\va$, of (i) leaving $\va$ in the initial step, and (ii) subsequently visiting $\vA$ before revisiting $\va$. Step (i) occurs with probability $u \nu b\left(\va\right) + \mathcal{O}\left(u^2\right)$ as $u \to 0$, while step (ii) occurs with probability $\rho_A + \mathcal{O}\left(u\right)$. Thus, overall, we have the expansion
	\begin{equation}
		Q_{\va \to \vA} = u \nu b\left(\va\right) \rho_A + \mathcal{O}\left(u^2\right) \qquad \left(u \to 0\right) .
	\end{equation}
	Similarly, we have
	\begin{equation}
		Q_{\vA \to \va} = u \left(1-\nu\right) b\left(\vA\right) \rho_a + \mathcal{O}\left(u^2\right) \qquad \left(u \to 0\right) .
	\end{equation}
	Substituting these expansions in Eq.~\eqref{eq:piratio} and taking the limit as $u \to 0$ yields
	\begin{equation}\label{eq:Pratio}
		\lim_{u \to 0} \frac{\pi_\MSS\left(\vA\right)}{\pi_\MSS\left(\va\right)} = \frac{\nu b\left(\va\right) \rho_A}{\left(1-\nu\right) b\left(\vA\right) \rho_a} .
	\end{equation}
	The desired result now follows from the fact that, since $\lim_{u \to 0} \pi_\MSS\left(\vx\right) =0$ for $\vx \notin \left\{\va, \vA\right\}$, we must have $\lim_{u \to 0} \pi_\MSS\left(\va\right) + \lim_{u \to 0} \pi_\MSS\left(\vA\right) = 1$.
\end{proof}

Theorem \ref{thm:lowu} implies that the stationary probabilities $\pi_{\MSS}\left(\vx\right)$ extend to smooth functions of the mutation rate $u$ on the interval $0 \leqslant u \leqslant 1$, with the values at $u=0$ defined according to Eq.~\eqref{eq:pAndRho}. As mentioned in the proof, the limiting probabilities in Eq.~\eqref{eq:pAndRho} comprise a stationary distribution for the evolutionary Markov chain with $u=0$. However, this stationary distribution is not unique---indeed, any probability distribution concentrated entirely on states $\vA$ and $\va$ is stationary for $u=0$. We can achieve uniqueness at $u=0$ by augmenting Eq.~\eqref{eq:MSSsystem} by an additional equation: 
\begin{equation}
	\label{eq:augment}
	\pi_\MSS\left(\vA\right) = 
	\begin{cases}
		\displaystyle \frac{Q_{\va \to \vA}}{Q_{\vA \to \va}} \pi_\MSS\left(\va\right) & 0 < u \leqslant 1 , \\[5mm]
		\displaystyle \frac{\nu b\left(\va\right) \rho_A}{\left(1-\nu\right) b\left(\vA\right) \rho_a} \pi_\MSS\left(\va\right) & u=0.
	\end{cases}
\end{equation}
The system of linear equations \eqref{eq:MSSsystem} and \eqref{eq:augment} has a unique solution that varies smoothly with $u$ for $0 \leqslant u \leqslant 1$, coincides with $\pi_{\MSS}\left(\vx\right)$ for $0<u\leqslant 1$, and coincides with the right-hand side of Eq.~\eqref{eq:pAndRho} for $u=0$. We will make use of these observations in Section \ref{sec:weaksel}.

Alternatively, Theorem \ref{thm:lowu} can be proven using Theorem 2 of \citet{fudenberg:JET:2006}, which implies that as $u\rightarrow 0$, the vector $\left(\pi_{\MSS}\left(\vA\right) ,\pi_{\MSS}\left(\va\right)\right)$ converges to the stationary distribution of the embedded Markov chain on the absorbing states, $\vA$ and $\va$. The transition matrix of this embedded Markov chain is
\begin{align}
	\begin{pmatrix}
		1-\gamma \nu b\left(\mathbf{a}\right)\rho_{A} & \gamma \nu b\left(\mathbf{a}\right)\rho_{A} \\
		\gamma\left(1- \nu\right) b\left(\mathbf{A}\right)\rho_{a} & 1-\gamma\left(1- \nu\right) b\left(\mathbf{A}\right)\rho_{a}
	\end{pmatrix} ,
\end{align}
where $\gamma$ is an arbitrary constant chosen small enough to ensure that this matrix has non-negative entries. The stationary distribution of this embedded Markov chain is independent of $\gamma$ and consists of the limiting probabilities for $\pi_{\MSS}\left(\vA\right)$ and $\pi_{\MSS}\left(\va\right)$ specified in Eq.~\eqref{eq:pAndRho}.

\subsection{The rare-mutation conditional distribution} \label{sec:RMC}
According to Theorem \ref{thm:lowu}, as $u\to 0$, the mutation-selection stationary distribution becomes concentrated on the monoallelic states, $\va$ or $\vA$. However, since no selection occurs in the monoallelic states, it is important to  quantify the frequencies with which other states are visited in transit between them. For this purpose, \cite{allen2014measures} introduced the \emph{rare-mutation dimorphic (RMD) distribution} for haploid models with two alleles. Here, we introduce a natural generalization of this distribution, which we call the \emph{rare-mutation conditional (RMC) distribution}. We avoid the term ``dimorphic" because it can be misleading with non-haploid genetics; for example, the genotypes $AA$, $Aa$, and $aa$ could correspond to three different morphologies.

\begin{definition}
	The \emph{rare-mutation conditional (RMC) distribution} is the probability distribution on $\left\{0,1\right\}^G \setminus \left\{\va, \vA\right\}$ obtained by conditioning the MSS distribution on being in states other than $\va$ and $\vA$, and then taking the limit $u \to 0$:
	\begin{equation}
		\pi_\RMC\left(\vx\right) \coloneqq \lim_{u \to 0} \Prob_\MSS\left[\vX=\vx\ |\ \vX \notin\left\{\va, \vA\right\}\right] 
		= \lim_{u \to 0} \frac{\pi_\MSS\left(\vx\right)}{1-\pi_\MSS\left(\vA\right) -\pi_\MSS\left(\va\right)}.
	\end{equation}
\end{definition}
The existence of the above limit was shown by Allen and Tarnita \citeyearpar[Lemma 2]{allen2014measures}.

Allen and Tarnita \citeyearpar[Theorem 3]{allen2014measures} derived a recurrence relation from which the RMC distribution can be computed, in the case of unbiased mutation ($\nu = 1/2$). Here, we show that this recurrence relation---and hence the RMC distribution itself---is in fact independent of the mutational bias $\nu$. Informally speaking, as $u \to 0$, the mutational bias only affects the amount of time spent in the monoalleleic states $\va$ and $\vA$, which are by definition excluded from the RMC distribution. The RMC distribution depends only on transition probabilities in the absence of mutation and is therefore independent of $\nu$.

\begin{theorem}
	\label{thm:RMCrecur}
	For any given replacement rule $\left\{p_{\left(R, \alpha\right)} \left(\vx\right) \right\}_{\left(R,\alpha\right)}$, the RMC distribution is independent of the mutational bias $\nu$ and is uniquely determined by the recurrence relations
	\begin{equation}
		\label{eq:RMCrecur}
		\pi_\RMC \left(\vx\right) = \sum_{\vy \notin \left\{\va, \vA\right\}} \pi_\RMC\left(\vy\right) \left(  P_{\vy \to \vx} + P_{\vy \to \va} \mu_A\left(\vx\right) + P_{\vy \to \vA} \mu_a\left(\vx\right) \right) ,
	\end{equation}
	where the transition probabilities $P_{\vy \to \mathbf{z}}$ above are evaluated at $u=0$.
\end{theorem}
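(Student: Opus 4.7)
The plan is to reduce the evolutionary Markov chain $\cM$ to the set of non-monoallelic states $S \coloneqq \{0,1\}^G \setminus \{\va, \vA\}$ using Theorem \ref{thm:statereduction}, and then pass to the limit $u \to 0$. Fixing $u > 0$ for now, let $\cM_{|S}$ denote the reduced chain, with transition probability $Q^{(u)}_{\vy \to \vx}$ equal to the probability that, starting from $\vy \in S$, the next visit to $S$ occurs at $\vx$. By Theorem \ref{thm:statereduction}, its unique stationary distribution $\pi^{(u)}_{\cM_{|S}}(\vx) = \pi_\MSS(\vx) / (1 - \pi_\MSS(\va) - \pi_\MSS(\vA))$ converges to $\pi_\RMC(\vx)$ as $u \to 0$, by the definition of the RMC distribution. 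The strategy is therefore to identify the limit of $Q^{(u)}_{\vy \to \vx}$ and then pass to the limit in the stationarity equation for $\pi^{(u)}_{\cM_{|S}}$.

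To compute $Q^{(u)}_{\vy \to \vx}$, I would split according to whether the first step from $\vy$ lands in $S$ or in $\{\va, \vA\}$:
\begin{equation}
	Q^{(u)}_{\vy \to \vx} = P_{\vy \to \vx} + P_{\vy \to \va}\, F^{(u)}_{\va \to \vx} + P_{\vy \to \vA}\, F^{(u)}_{\vA \to \vx},
\end{equation}
where $F^{(u)}_{\va \to \vx}$ and $F^{(u)}_{\vA \to \vx}$ denote the probability that the first visit to $S$ occurs at $\vx$, starting from the respective monoallelic state. A one-step analysis at $\va$ yields $F^{(u)}_{\va \to \vx}\, (1 - P_{\va \to \va}) = P_{\va \to \vx} + P_{\va \to \vA}\, F^{(u)}_{\vA \to \vx}$, and symmetrically at $\vA$. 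Applying Lemma \ref{lem:mutantarise} gives $1 - P_{\va \to \va} = u\nu\, b(\va) + \mathcal{O}(u^2)$, $P_{\va \to \vx} = u\nu\, b(\va)\, \mu_A(\vx) + \mathcal{O}(u^2)$, and $P_{\va \to \vA} = \mathcal{O}(u^2)$ (flipping $\va$ to $\vA$ in one step requires at least two simultaneous mutations when $|G| \geqslant 2$). Dividing through yields $F^{(u)}_{\va \to \vx} = \mu_A(\vx) + \mathcal{O}(u)$, and symmetrically $F^{(u)}_{\vA \to \vx} = \mu_a(\vx) + \mathcal{O}(u)$. Substituting and letting $u \to 0$ gives $\lim_{u \to 0} Q^{(u)}_{\vy \to \vx} = P_{\vy \to \vx} + P_{\vy \to \va}\, \mu_A(\vx) + P_{\vy \to \vA}\, \mu_a(\vx)$, with all right-hand-side probabilities evaluated at $u=0$. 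Taking $u \to 0$ in the stationarity equation $\pi^{(u)}_{\cM_{|S}}(\vx) = \sum_{\vy \in S} \pi^{(u)}_{\cM_{|S}}(\vy)\, Q^{(u)}_{\vy \to \vx}$ (a finite sum, so limit-sum interchange is immediate) produces the desired recurrence \eqref{eq:RMCrecur}.

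None of the ingredients on the right-hand side of the limiting recurrence involve $\nu$, so the recurrence is $\nu$-independent; together with the normalization $\sum_{\vx \in S} \pi_\RMC(\vx) = 1$, uniqueness of its solution then yields both the $\nu$-independence of $\pi_\RMC$ and its characterization by \eqref{eq:RMCrecur}. The main obstacle will be establishing this uniqueness: one must verify that the limiting reduced chain on $S$, with transition probabilities $P_{\vy \to \vx} + P_{\vy \to \va}\, \mu_A(\vx) + P_{\vy \to \vA}\, \mu_a(\vx)$ evaluated at $u=0$, admits a unique stationary distribution. This should follow from the Fixation Axiom, whose sequence of replacement events---composed with the re-entry mechanism supplied by $\mu_A$ and $\mu_a$ after passage through $\va$ or $\vA$---guarantees that the support of $\pi_\RMC$ forms a single communicating class in the limiting reduced chain, paralleling the corresponding argument for the unbiased-mutation case in \cite{allen2014measures}.
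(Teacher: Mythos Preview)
Your proposal is correct and follows essentially the same route as the paper: reduce $\cM$ to $S = \{0,1\}^G \setminus \{\va,\vA\}$ via Theorem~\ref{thm:statereduction}, expand the first-return probabilities $F^{(u)}_{\va \to \vx}$ and $F^{(u)}_{\vA \to \vx}$ using Lemma~\ref{lem:mutantarise}, pass to the $u \to 0$ limit in the stationarity equation, and then argue uniqueness of the limiting stationary distribution via the Fixation Axiom. The paper's proof is slightly terser in the middle step (it simply asserts $Q_{\va \to \vx} = \mu_A(\vx) + \mathcal{O}(u)$ directly from Lemma~\ref{lem:mutantarise} rather than writing out the one-step recursion), and slightly more explicit on uniqueness (it exhibits a path in $\cM_\RMC$ from an arbitrary $\vx$ to any state in the support of $\mu_A$ by first reaching $\vA$ in the $u=0$ chain), but the logical structure is identical.
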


\begin{proof} 
	Let us temporarily fix a positive mutation rate, $u>0$, and a mutational bias, $\nu$. We apply Theorem \ref{thm:statereduction} to reduce $\cM$ to the set of states $S \coloneqq \left\{0,1\right\}^G \setminus \left\{\va, \vA\right\}$, i.e.~those states for which both alleles are present. The reduced Markov chain $\cM_{|S}$ has for a stationary distribution $\left\{ \pi_{\cM_{|S}} \right\}$, which is determined by the recurrence relations
	\begin{equation}
		\label{eq:conditionalrecur}
		\pi_{\cM_{|S}} \left(\vx\right) = \sum_{\vy \notin \left\{\va, \vA\right\}}\pi_{\cM_{|S}}\left(\vy\right) \left(  P_{\vy \to \vx} + P_{\vy \to \va} Q_{\va \to \vx} + P_{\vy \to \vA} Q_{\vA \to \vx} \right).
	\end{equation}
	Here, $Q_{\va \to \vx}$ is the probability that, from state $\va$, the first visit to the set $\left\{0,1\right\}^G \setminus \left\{\va, \vA\right\}$ occurs in state $\vx$; $Q_{\vA \to \vx}$ is defined similarly. By Lemma \ref{lem:mutantarise}, these probabilities have the low-mutation expansion
	\begin{subequations}
		\begin{align}
			Q_{\va \to \vx} & = \mu_A\left(\vx\right)  + \mathcal{O}\left(u\right) ; \\
			Q_{\vA \to \vx} & = \mu_a\left(\vx\right) + \mathcal{O}\left(u\right) .
		\end{align}
	\end{subequations}
	Therefore, taking the $u \to 0$ limit of Eq.~\eqref{eq:conditionalrecur} yields Eq.~\eqref{eq:RMCrecur}.
	
	To show that Eq.~\eqref{eq:RMCrecur} uniquely defines the RMC distribution, we note that any solution (in $\left\{\pi_\RMC\left(\vx\right)\right\}$) to Eq.~\eqref{eq:RMCrecur} is a stationary distribution for a new Markov chain $\cM_\RMC$, with states $\left\{0,1\right\}^G \setminus \left\{\va, \vA\right\}$ and transition probabilities
	\begin{align}
		P_{\vx \to \vy}^\RMC \coloneqq P_{\vx \to \vy} + P_{\vx \to \va} \mu_A\left(\vy\right) + P_{\vx \to \vA} \mu_a\left(\vy\right) .
	\end{align}
	Let $\vx, \vy \in \left\{0,1\right\}^G \setminus \left\{\va, \vA\right\}$ be any pair of states with $\mu_A\left(\vy\right)>0$. Using the Fixation Axiom one can show that it is possible to reach state $\vA$ from $\vx$, in the original Markov chain $\cM$, by a finite sequence of transitions with nonzero probability. Therefore, it is also possible to reach $\vy$ from $\vx$ in $\cM_\RMC$ by a finite sequence of transitions with nonzero probability, which shows that $\cM_\RMC$ has only a single closed communicating class and therefore possesses a unique stationary distribution, determined by Eq.~\eqref{eq:RMCrecur}.
	
	Finally, we note that none of the quantities in Eq.~\eqref{eq:RMCrecur} depend on the mutational bias $\nu$ since they are evaluated at $u=0$. Thus, the RMC distribution is independent of $\nu$.
\end{proof}

The following lemma, which relates the RMC distribution to the $u$-derivative of the MSS distribution at $u=0$, is very useful for both proofs and computations:
\begin{lemma}
	\label{lem:lowu}
	For any given replacement rule $\left\{p_{\left(R, \alpha\right)} \left(\vx\right) \right\}_{\left(R,\alpha\right)}$ and mutational bias $\nu$, the limit
	\begin{equation}
		\label{eq:Kdef}
		K \coloneqq \lim_{u \to 0} \frac{u}{\Prob_\MSS\left[\vX \notin \left\{\va, \vA\right\}\right]}
	\end{equation}
	exists and is finite and positive. Furthermore, if $\phi\left(\vx\right)$ is any state function with $\phi\left(\va\right) =\phi\left(\vA\right) =0$, then
	\begin{equation}
		\E_{\RMC}\left[\phi\right] = K \frac{d \E_{\MSS}\left[\phi\right]}{du} \Big|_{u=0} .
	\end{equation}
\end{lemma}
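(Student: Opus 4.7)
The plan is to establish part (1) by leveraging the smoothness of $\pi_\MSS$ in $u$ at $u=0$ (guaranteed by the augmented system \eqref{eq:MSSsystem}-\eqref{eq:augment} discussed after Theorem \ref{thm:lowu}), and then derive part (2) from a Taylor expansion of $\E_\MSS[\phi]$ at $u=0$.

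First I would define $f(u) \coloneqq \Prob_\MSS[\vX \notin \{\va,\vA\}] = 1-\pi_\MSS(\va)-\pi_\MSS(\vA) = \sum_{\vx \notin \{\va,\vA\}} \pi_\MSS(\vx)$. Since each $\pi_\MSS(\vx)$ extends to a smooth (indeed analytic) function of $u$ on $[0,1]$ and since Theorem \ref{thm:lowu} gives $f(0)=0$, we have an expansion $f(u) = c_1 u + \mathcal{O}(u^2)$ with $c_1 \geq 0$. So establishing the existence and positivity of $K = 1/c_1$ reduces to proving $c_1 > 0$.

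To show $c_1 > 0$, I would pick a site $g \in G$ that appears in some replacement event guaranteed by the Fixation Axiom; since that axiom requires $p_{(R_k,\alpha_k)}(\vx) > 0$ for all $\vx$, this forces $d_g(\va) > 0$. Applying the stationarity recurrence \eqref{eq:MSSrecur} to the single-mutant state $\mathbf{1}_{\{g\}}$ and retaining only the $\vy = \va$ term gives
\begin{equation*}
\pi_\MSS(\mathbf{1}_{\{g\}}) \;\geq\; \pi_\MSS(\va)\, P_{\va \to \mathbf{1}_{\{g\}}} \;=\; \pi_\MSS(\va)\bigl(u\nu\, d_g(\va) + \mathcal{O}(u^2)\bigr),
\end{equation*}
by Lemma \ref{lem:mutantarise}. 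Since Theorem \ref{thm:lowu} ensures $\pi_\MSS(\va)$ tends to a strictly positive limit as $u \to 0$, the right-hand side is bounded below by a positive constant times $u$ for small $u>0$. Hence $f(u) \geq \pi_\MSS(\mathbf{1}_{\{g\}})$ is $\Omega(u)$, forcing $c_1 > 0$.

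For part (2), I would note that the hypothesis $\phi(\va)=\phi(\vA)=0$ gives $\E_\MSS[\phi] = \sum_{\vx \notin \{\va,\vA\}} \pi_\MSS(\vx)\phi(\vx)$, which is smooth in $u$ and vanishes at $u=0$, so
\begin{equation*}
\E_\MSS[\phi] \;=\; u\,\frac{d\E_\MSS[\phi]}{du}\Big|_{u=0} + \mathcal{O}(u^2).
\end{equation*}
Dividing by $f(u)$ and using the just-established limit $u/f(u) \to K$, I obtain
\begin{equation*}
\E_\RMC[\phi] \;=\; \lim_{u\to 0}\frac{\E_\MSS[\phi]}{f(u)} \;=\; \lim_{u\to 0}\frac{\E_\MSS[\phi]}{u}\cdot \frac{u}{f(u)} \;=\; K\,\frac{d\E_\MSS[\phi]}{du}\Big|_{u=0},
\end{equation*}
as claimed. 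The main obstacle is Step~2 (proving $c_1>0$); everything else is routine smoothness and Taylor expansion once the augmented-system smoothness is invoked. The positivity argument hinges on locating a single-mutant state whose MSS probability is unambiguously first-order in $u$, which is exactly what the Fixation Axiom together with Lemma \ref{lem:mutantarise} supplies.
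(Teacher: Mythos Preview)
Your proof is correct, and part~(2) is essentially identical to the paper's argument. The difference lies in how you establish positivity of $c_1$ in part~(1). The paper sums the stationarity recurrence \eqref{eq:MSSrecur} over \emph{all} non-monoallelic states to obtain
\[
\Prob_\MSS\bigl[\vX\notin\{\va,\vA\}\bigr]
= \pi_\MSS(\va)\!\!\sum_{\vx\notin\{\va,\vA\}}\!\!P_{\va\to\vx}
+ \pi_\MSS(\vA)\!\!\sum_{\vx\notin\{\va,\vA\}}\!\!P_{\vA\to\vx}
+ \!\!\sum_{\vx,\vy\notin\{\va,\vA\}}\!\!\pi_\MSS(\vy)P_{\vy\to\vx},
\]
then expands the first two terms via Lemma~\ref{lem:mutantarise} and Theorem~\ref{thm:lowu} to obtain the explicit positive leading coefficient
\[
\frac{\nu(1-\nu)\,b(\va)\,b(\vA)\,(\rho_a+\rho_A)}{\nu b(\va)\rho_A+(1-\nu)b(\vA)\rho_a},
\]
plus a nonnegative remainder. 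Your route is more economical: rather than expanding the full sum, you keep a single term $\pi_\MSS(\va)\,P_{\va\to\mathbf{1}_{\{g\}}}$ and show it alone is $\Omega(u)$. This is shorter and avoids tracking the remainder term, at the cost of not exhibiting an explicit lower bound on $1/K$. Since the paper never uses that explicit bound elsewhere, nothing is lost. One minor point: you do not need the Fixation Axiom to single out a particular $g$; as noted after Eq.~\eqref{eq:ddef}, the axiom in fact implies $d_g(\va)>0$ for \emph{every} $g\in G$, so any site works.
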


Lemma \ref{lem:lowu} allows expectations under RMC distribution to be computed, up to the proportionality constant $K$, from the MSS distribution (which is often easier to analyze). For many purposes, it is not necessary to know the value of $K$, only that it exists and is positive.

\begin{proof}
	Summing Eq.~\eqref{eq:MSSrecur} over the states $\vx \notin \left\{\va,\vA\right\}$, we have
	\begin{align}
		\Prob_\MSS\left[\vX \notin \left\{\va, \vA\right\}\right] &= \pi_\MSS\left(\va\right) \sum_{\vx \notin \left\{\va,\vA\right\}} P_{\va \to \vx}  + \pi_\MSS\left(\vA\right) \sum_{\vx \notin \left\{\va,\vA\right\}} P_{\vA \to \vx} \nonumber \\
		&\quad\quad + \sum_{\vx,\vy \notin \left\{\va,\vA\right\}}\pi_\MSS\left(\vy\right) P_{\vy \to \vx} .
	\end{align}
	Applying Theorem \ref{thm:lowu} and Lemma \ref{lem:mutantarise} to the first two terms on the right-hand side, we obtain the following expansion as $u \to 0$:
	\begin{align}
		\Prob_\MSS\left[\vX \notin \left\{\va, \vA\right\}\right] &= u  \frac{ \nu \left(1-\nu\right) b\left(\va\right) b\left(\vA\right)\left(\rho_a + \rho_A\right) }
		{ \nu b\left(\va\right) \rho_{A}+\left(1-\nu\right) b\left(\vA\right) \rho_{a}} \nonumber \\
		&\quad\quad + \sum_{\vx,\vy \notin \left\{\va,\vA\right\}}\pi_\MSS\left(\vy\right) P_{\vy \to \vx} + \mathcal{O}\left(u^2\right) .
	\end{align}
	Dividing by $u$ and taking $u \to 0$, we have
	\begin{align}
		\label{eq:lowuexpand}
		\lim_{u \to 0} \frac{1}{u} \Prob_\MSS\left[\vX \notin \left\{\va, \vA\right\}\right] &= \frac{ \nu \left(1-\nu\right) b\left(\va\right) b\left(\vA\right)\left(\rho_a + \rho_A\right) }
		{ \nu b\left(\va\right) \rho_{A}+\left(1-\nu\right) b\left(\vA\right)\rho_{a}} \nonumber \\
		&\quad\quad + \lim_{u \to 0} \left( \frac{1}{u}  \sum_{\vx,\vy \notin \left\{\va,\vA\right\}} \pi_\MSS\left(\vy\right) P_{\vy \to \vx} \right) .
	\end{align}
	Since $\nu$, $1-\nu$, $b\left(\va\right)$, $b\left(\vA\right)$, $\rho_a$, and $\rho_A$ are all positive, the first term on the right-hand side of Eq.~\eqref{eq:lowuexpand} is positive. The limit in the second term exists and is finite since $\pi_\MSS\left(\vy\right)$ and $P_{\vy \to \vx}$ are both rational functions of $u$ and $\lim_{u \to 0} \pi_\MSS\left(\vy\right)=0$ for $\vy \notin \left\{\va,\vA\right\}$. The limit in the second term is nonnegative since both $\pi_\MSS\left(\vy\right)$ and $P_{\vy \to \vx}$ are. Therefore, the limit on the left-hand side of Eq.~\eqref{eq:lowuexpand} exists and is positive; consequently, the limit in Eq.~\eqref{eq:Kdef} exists and is positive as well.
	
	For the second claim, we have
	\begin{align}
		\E_\RMC\left[\phi\right] & = \lim_{u \to 0} \E_\MSS\left[\phi\left(\vX\right)\ |\ \vX \notin\left\{\va, \vA\right\}\right] \nonumber \\
		& = \lim_{u \to 0} \frac{\E_\MSS\left[\phi\right]}{\Prob_\MSS\left[\vX \notin \left\{\va, \vA\right\}\right]} \qquad \text{(since $\phi\left(\vA\right) =\phi\left(\va\right) =0$)} \nonumber \\
		& =  \left(\lim_{u \to 0} \frac{u}{\Prob_\MSS\left[\vX \notin \left\{\va, \vA\right\}\right]} \right) 
		\left(\lim_{u \to 0} \frac{\E_\MSS\left[\phi\right]}{u} \right) \nonumber \\
		& = K \frac{d \E_\MSS\left[\phi\right]}{du} \Big|_{u=0} ,
	\end{align}
	which completes the proof.
\end{proof}

\section{Selection}
\label{sec:selection}
We turn now to the question of how selection acts on the two competing alleles, $a$ and $A$. We can ask this question on two different time-scales. In the short term, we can look at how natural selection acts to change allele frequencies from a given state. In the longer term, we can look at the fixation probabilities of each allele, or at their stationary frequencies under mutation-selection balance. These notions lead to different criteria for evaluating the success of an allele under natural selection. In this section, we define these criteria and prove (Theorem \ref{thm:rhodelsel}) that they become equivalent in the limit of low mutation when averaged over the RMC distribution.

\subsection{Change due to selection}\label{sec:changesel}
To address questions of short-term selection, we consider an evolutionary process in a given state $\vx \in \left\{0,1\right\}^G$. We let $\del\left(\vx\right)$ denote the expected change in the \emph{absolute} frequency of $A$ (i.e. the number of $A$ alleles) from state $\vx$, over a single transition:
\begin{equation}
	\label{eq:Deltax}
	\del\left(\vx\right) \coloneqq \left(1-u\right) \sum_{g \in G} x_g b_g\left(\vx\right) - \sum_{g \in G} x_g d_g\left(\vx\right) + u \nu b\left(\vx\right).
\end{equation}
We use absolute (rather than relative) frequency in the definition of $\del\left(\vx\right)$ to avoid tedious factors of $1/n$. The three terms on the right-hand side represent the respective contributions of faithful reproduction, death, and mutation. Collecting the terms involving $u$, we have
\begin{equation}
	\label{eq:Deltax2}
	\del\left(\vx\right) = \sum_{g \in G} x_g \left(b_g\left(\vx\right) - d_g\left(\vx\right) \right) + u \sum_{g \in G} \left(\nu - x_g\right) b_g\left(\vx\right) .
\end{equation}
Eq.~\eqref{eq:Deltax2} can be understood as a version of the \cite{price1970selection} equation (but see \citealp{van2005use}). The two terms on the right-hand side of Eq.~\eqref{eq:Deltax2} represent the effects of selection and mutation, respectively, which motivates the following definitions \citep{Eusociality}:

\begin{definition}
	\label{def:del}
	The \emph{expected change due to selection} from state $\vx$ is defined as
	\begin{equation}
		\label{eq:delseldef}
		\delsel\left(\vx\right) \coloneqq \sum_{g \in G}x_g \left( b_g\left(\vx\right) - d_g\left(\vx\right) \right) .
	\end{equation}
	The \emph{expected change due to mutation} from state $\vx$ is defined as
	\begin{equation}
		\label{eq:delmutdef}
		\delmut\left(\vx\right) \coloneqq u \sum_{g \in G} \left(\nu - x_g\right) b_g\left(\vx\right) .
	\end{equation}
\end{definition}

With the above definitions, Eq.~\eqref{eq:Deltax2} can be restated as $\del\left(\vx\right) = \delsel\left(\vx\right) + \delmut\left(\vx\right)$.

\subsection{Equivalence of success criteria}
How does one judge which of the two competing alleles, $a$ and $A$, is favored by selection? There are a number of reasonable criteria to use:
\begin{itemize}
	\item In a given state $\vx$, we could say that $A$ is favored if $\delsel\left(\vx\right) >0$.
	\item For an evolutionary process with no mutation, we could say that $A$ is favored if it has larger fixation probability; that is, if $\rho_A > \rho_a$.
	\item For an evolutionary process with mutation, we could say that that $A$ is favored if its stationary frequency is greater than one would expect by mutation alone; the latter quantity can be obtained by setting $\rho_A=\rho_a$ in Eq.~\eqref{eq:pAndRho}. This leads to the success criterion
	\begin{align}
		\lim_{u \to 0} \E_\MSS\left[x\right] > \frac{\nu b\left(\va\right)}{\nu b\left(\va\right) + \left(1-\nu\right) b\left(\vA\right)}.
	\end{align}
	In the case that the overall birth rates coincide in the two monoallelic states, $b\left(\va\right) =b\left(\vA\right)$, this criterion reduces to $\lim_{u \to 0} \E_\MSS\left[x\right] > \nu$.
\end{itemize}

Our first main result shows that these success criteria become equivalent in the limit $u \to 0$, when $\delsel$ is averaged over the RMC distribution. Alternatively, one may average $\delsel$ over the MSS distribution and take the $u$-derivative at $u=0$.

\begin{theorem}
	\label{thm:rhodelsel}
	For any replacement rule, $\left\{p_{\left(R, \alpha\right)} \left(\vx\right) \right\}_{\left(R,\alpha\right)}$, and any mutational bias, $\nu$, the following success criteria are equivalent:
	\renewcommand{\labelenumi}{(\alph{enumi})}
	\begin{enumerate}
		\item $\rho_A > \rho_a$;
		\item $\displaystyle \lim_{u \to 0} \E_\MSS\left[x\right] > \frac{\nu b\left(\va\right)}{\nu b\left(\va\right) + \left(1-\nu\right) b\left(\vA\right)}$;
		\item $\E_\RMC \left[ \delsel \right]>0$;
		\item $\frac{d}{du}\E_\MSS\left[\delsel\right]  \big|_{u=0}  > 0$.
	\end{enumerate}
\end{theorem}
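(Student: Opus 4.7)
The plan is to prove the chain $(a) \Leftrightarrow (b)$, $(c) \Leftrightarrow (d)$, and $(a) \Leftrightarrow (d)$, with the last equivalence being the substantive one. The first two equivalences are essentially bookkeeping given the machinery already established in the paper.

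For $(a) \Leftrightarrow (b)$, I would simply apply Theorem \ref{thm:lowu}. Since $x(\va)=0$ and $x(\vA)=1$, and all non-monoallelic states have limiting MSS probability zero, we have $\lim_{u \to 0} \E_\MSS[x] = \lim_{u \to 0} \pi_\MSS(\vA)$. Substituting the formula from Theorem \ref{thm:lowu} and cross-multiplying reduces criterion (b) to $\rho_A > \rho_a$ (the common positive factor $\nu(1-\nu)b(\va)b(\vA)$ in the numerator and the common positive denominator cancel cleanly). For $(c) \Leftrightarrow (d)$, the key observation is that $\delsel$ vanishes on both monoallelic states: $\delsel(\va)=0$ is immediate, and $\delsel(\vA) = \sum_{g} (b_g(\vA) - d_g(\vA)) = b(\vA) - b(\vA) = 0$ since total births equal total deaths. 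Thus Lemma \ref{lem:lowu} applies directly and gives $\E_\RMC[\delsel] = K \frac{d}{du}\E_\MSS[\delsel]\big|_{u=0}$ with $K>0$.

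The main work is in showing $(a) \Leftrightarrow (d)$. I would start from the Price-equation decomposition $\del = \delsel + \delmut$ together with the stationarity identity $\E_\MSS[\del] = 0$ (which holds because $\E_\MSS[\sum_g x_g']=\E_\MSS[\sum_g x_g]$ under the stationary distribution). This yields $\E_\MSS[\delsel] = -\E_\MSS[\delmut]$. Now writing $\Phi(\vx) \coloneqq \sum_{g \in G}(\nu - x_g)b_g(\vx)$, which does not depend on $u$, we have $\delmut(\vx) = u\,\Phi(\vx)$, and therefore
\begin{equation}
\frac{d}{du}\E_\MSS[\delsel]\Big|_{u=0} = -\frac{d}{du}\bigl(u\,\E_\MSS[\Phi]\bigr)\Big|_{u=0} = -\lim_{u \to 0}\E_\MSS[\Phi].
\end{equation}
By Theorem \ref{thm:lowu}, the limiting MSS distribution is supported on $\{\va,\vA\}$, so
\begin{equation}
\lim_{u\to 0}\E_\MSS[\Phi] = \Phi(\va)\lim_{u\to 0}\pi_\MSS(\va) + \Phi(\vA)\lim_{u\to 0}\pi_\MSS(\vA).
\end{equation}
A direct computation gives $\Phi(\va) = \nu\,b(\va)$ and $\Phi(\vA) = -(1-\nu)\,b(\vA)$. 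Plugging in the formulas from Theorem \ref{thm:lowu} collapses the expression to
\begin{equation}
\frac{d}{du}\E_\MSS[\delsel]\Big|_{u=0} = \frac{\nu(1-\nu)\,b(\va)\,b(\vA)\,(\rho_A - \rho_a)}{\nu\,b(\va)\,\rho_A + (1-\nu)\,b(\vA)\,\rho_a},
\end{equation}
whose sign matches that of $\rho_A - \rho_a$ since every other factor is strictly positive. This establishes $(a) \Leftrightarrow (d)$ and closes the loop.

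The step I expect to be trickiest to present cleanly is the manipulation $\frac{d}{du}(u\,\E_\MSS[\Phi])|_{u=0} = \lim_{u\to 0}\E_\MSS[\Phi]$: it relies on the fact, noted after Theorem \ref{thm:lowu} and in Lemma \ref{lem:lowu}, that $\pi_\MSS(\vx)$ extends smoothly to $u=0$, so that $u\,\E_\MSS[\Phi]$ is differentiable at $u=0$ with derivative equal to its linear coefficient. Once this regularity is invoked, the remaining algebra is straightforward, and the argument shows that the \emph{magnitude} of the derivative — not only its sign — has a clean expression in terms of $\rho_A - \rho_a$, which is a pleasant byproduct of the proof.
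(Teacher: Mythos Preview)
Your proposal is correct and follows essentially the same route as the paper: both use stationarity to obtain $\E_\MSS[\delsel] = u\,\E_\MSS\bigl[\sum_g (x_g-\nu)b_g\bigr]$, invoke Lemma~\ref{lem:lowu} (via $\delsel(\va)=\delsel(\vA)=0$) to pass to the $u$-derivative, and then evaluate the resulting limit with Theorem~\ref{thm:lowu} to exhibit the factor $\rho_A-\rho_a$. The only cosmetic difference is that the paper handles $(a)\Leftrightarrow(c)\Leftrightarrow(d)$ in one stroke rather than separating $(c)\Leftrightarrow(d)$ from $(a)\Leftrightarrow(d)$, and writes the mutation term with the opposite sign convention (your $\Phi$ is the negative of the paper's summand), but the algebra and the regularity appeal are identical.
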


The equivalence of (b) and (d), in the case that $b(\va)=b(\vA)$, was previously shown by \cite{tarnita2014measures}. Under the further assumption that $\nu =1/2$, \citeauthor{Eusociality} \citeyearpar[Corollary 1 of Appendix A]{Eusociality} showed the equivalence of (b) and (d); \citeauthor{allen2014measures} \citeyearpar[Theorem 6 and Corollary 2]{allen2014measures} showed the equivalence of (a), (b), and (c); and \cite{van2015social} showed the equivalence of (a), (b), and a variant of (c). Special cases of this result for particular models were also proven by \cite{RoussetBilliard} and \cite{TaylorFixProb}.

\begin{proof}
	We begin by assuming a fixed $u>0$ and rewriting Eq.~\eqref{eq:Deltax2} as 
	\begin{equation}
		\label{eq:mutseldecomp}
		\del\left(\vx\right) = \delsel\left(\vx\right) - u \sum_{g \in G} \left(x_g-\nu\right) b_g\left(\vx\right) .
	\end{equation}
	We now take the expectation of both sides under the MSS distribution. The left-hand side vanishes since the expected change in any quantity is zero when averaged over a stationary distribution. We therefore have
	\begin{equation}
		\E_\MSS\left[ \delsel \right] = u \E_\MSS \left[ \sum_{g \in G} \left(x_g-\nu\right) b_g \right] .
	\end{equation}
	We also observe from Eq.~\eqref{eq:delseldef} that $\delsel\left(\va\right) =\delsel\left(\vA\right) =0$. Applying Lemma \ref{lem:lowu}, we have
	\begin{equation}
		\E_\RMC\left[ \delsel \right] = K \frac{d \E_\MSS\left[\delsel\right]}{du} \Big|_{u=0} = K \lim_{u \to 0}\E_\MSS \left [ \sum_{g \in G} \left(x_g-\nu\right) b_g \right],
	\end{equation}
	with $K>0$. Theorem \ref{thm:lowu} now gives
	\begin{align}
		\lim_{u \to 0}\E_\MSS \left [ \sum_{g \in G} \left(x_g-\nu\right) b_g \right]
		& = \left(1-\nu\right) b\left(\vA\right) \lim_{u \to 0} \pi_\MSS\left(\vA\right)  - \nu b\left(\va\right) \lim_{u \to 0} \pi_\MSS\left(\va\right) \nonumber \\
		& = \frac{\nu\left(1-\nu\right) b\left(\va\right) b\left(\vA\right) }{ \nu b\left(\va\right) \rho_{A}+\left(1-\nu\right) b \left(\vA\right) \rho_{a}} \left(\rho_A - \rho_a\right) .
	\end{align}
	The coefficient of $\rho_A - \rho_a$ above is positive; thus $\E_\RMC\left[ \delsel \right]$, $\frac{d}{du}\E_\MSS\left[\delsel\right]  \big|_{u=0}$, and $\rho_A - \rho_a$ have the same sign. This proves $\text{(a)} \Leftrightarrow \text{(c)} \Leftrightarrow \text{(d)}$. For (b), we write
	\begin{align}
		\lim_{u \to 0} &\E_\MSS\left[x\right] - \frac{\nu b\left(\va\right)}{\nu b\left(\va\right) + \left(1-\nu\right) b\left(\vA\right)} \nonumber \\
		&\quad = \lim_{u \to 0} \pi_\MSS\left(\vA\right) -\frac{\nu b\left(\va\right)}{\nu b\left(\va\right) + \left(1-\nu\right) b\left(\vA\right)} \nonumber \\
		&\quad =  \frac{\nu\left(1-\nu\right) b\left(\va\right) b\left(\vA\right) }
		{ \left( \nu b\left(\va\right) +\left(1-\nu\right) b\left(\vA\right)\right) \left( \nu b\left(\va\right) \rho_{A}+\left(1-\nu\right) b\left(\vA\right)\rho_{a}\right)} 
		\left(\rho_{A} - \rho_{a} \right),
	\end{align}
	by Theorem \ref{thm:lowu}. The last line above has the sign of $\rho_A - \rho_a$, thus $\text{(a)}\Leftrightarrow\text{(b)}$.
\end{proof}

If the conditions of Theorem \ref{thm:rhodelsel} are satisfied, we say that allele $A$ is \emph{favored by selection}. An interesting consequence of Theorem \ref{thm:rhodelsel} is that Condition (b) is independent of the mutational bias, $\nu$; either it holds for all values of $\nu$ or else for none of them. We can understand this result to say that, when mutation is vanishingly rare, mutational bias does not affect the direction of selection.

Theorem \ref{thm:rhodelsel} is our most general equivalence result. It shows that four reasonable measures of selection coincide with each other when mutation is rare. However, for many models of interest, none of the four conditions are analytically or computationally tractable \citep{ibsen2015computational}. In what follows, we will begin to introduce additional assumptions that allow us to define important notions such as reproductive value and fitness and obtain conditions that more tractable than those in Theorem \ref{thm:rhodelsel}.

\section{Reproductive value and fitness}\label{sec:RVfit}
Reproductive value and fitness are ubiquitous concepts in evolutionary theory. Both quantify the expected reproductive success of an individual or a genetic site. Fitness, which is used to quantify selection, takes into account the alleles present in the population. In contrast, reproductive value quantifies reproductive success in the \emph{absence} of selection and is therefore independent of the alleles in the population. Both fitness and reproductive value may depend on other factors such as age, sex, caste, and spatial location.

In this section, we define both notions for our class of models. First, however, we must introduce an additional assumption regarding the consistency of replacement in the monoallelic states $\va$ and $\vA$.

\subsection{Consistency of monoallelic states}\label{sec:consistency}
Since selection does not occur in the monoallelic states $\va$ and $\vA$, the capacity of a site to reproduce (i.e.~its reproductive value) is ascribable to the site itself, not to the alleles in the population. It is therefore reasonable to define reproductive value with respect to states $\va$ and $\vA$. To obtain a consistent definition, we require that the probabilities of replacement events coincide in these states:

\begin{assumption}
	\label{ass:neutral}
	(Consistency of monoallelic states) Each replacement event $\left(R,\alpha\right)$ has the same probability in state $\vA$ as in state $\va$: $p_{\left(R,\alpha\right)}\left(\vA\right) = p_{\left(R,\alpha\right)}\left(\va\right)$.
\end{assumption}

Assumption \ref{ass:neutral} does not necessarily hold for every plausible model of natural selection. For example, if allele $A$ has a positive fitness effect in some sites and a negative fitness effect in others (relative to allele $a$), then the patterns of replacement in state $\vA$ are likely to differ from those in state $\va$. Importantly, if Assumption \ref{ass:neutral} is not satisfied, there may not be a consistent way to define reproductive values. That is, one may obtain two different sets of reproductive values, one for state $\va$ and one for state $\vA$, with no obvious way of reconciling them.

With Assumption \ref{ass:neutral} in force, we denote the probability of a replacement event $(R,\alpha)$ in state $\va$ or $\vA$ by $p^\circ_{\left(R,\alpha\right)} \coloneqq p_{\left(R,\alpha\right)}\left(\vA\right) = p_{\left(R,\alpha\right)}\left(\va\right)$. We will also use the superscript ${}^\circ$ to denote the following other quantities in states $\va$ or $\vA$:
\begin{subequations}
	\label{eq:ebdcirc}
	\begin{align}
		e_{gh}^\circ & \coloneqq e_{gh}\left(\va\right) = e_{gh}\left(\vA\right) = \sum_{\substack{\left(R, \alpha\right) \\ \alpha\left(h\right) =g}} p^\circ_{\left(R, \alpha\right)} ; \\
		b_g^\circ & \coloneqq b_g\left(\va\right) = b_g\left(\vA\right) = \sum_{h\in G} e_{gh}^\circ ; \\
		d_g^\circ & \coloneqq d_g\left(\va\right) = d_g\left(\vA\right) = \sum_{h \in G} e_{hg}^\circ .
	\end{align}
\end{subequations}

\subsection{Reproductive value} \label{sec:RV}
Reproductive value \citep{fisher1930genetical,taylor1990allele,maciejewski2014reproductive} quantifies the expected contribution of an individual or a genetic site to the future gene pool of the population in the absence of selection, depending on factors such as age, sex, location, and caste. Reproductive values indicate the relative importance of different individuals to the process of natural selection. For example, a sterile worker in an insect colony has zero reproductive value, since it has no opportunity to transmit its genetic material.

We will first define reproductive value on the level of genetic sites, and later (in Section \ref{sec:ind}) extend to individuals. The reproductive value $v_g$ of a genetic site $g \in G$ is defined as follows:
\begin{definition}
	For a replacement rule satisfying Assumption \ref{ass:neutral}, the \emph{reproductive values} $\left\{v_g\right\}_{g \in G}$ are defined as the unique solution to the system of equations
	\begin{subequations}
		\label{eq:vsystem}
		\begin{align}
			\label{eq:vrecur}
			d_g^\circ v_g = \sum_{h \in G} e_{gh}^\circ v_h& \qquad \text{for all $g \in G$} ;\\
			\label{eq:vsum}
			\sum_{g \in G} v_g = n&.
		\end{align}
	\end{subequations}
\end{definition}
Eq.~\eqref{eq:vrecur} can be understood as saying that, for an allele occupying site $g \in G$, under one transition in either of the monoallelic states, the expected loss of reproductive value due to death, $d_g^\circ v_g$, is balanced by the expected reproductive value of new copies produced, $\sum_{h \in G} e_{gh}^\circ v_h$. The normalization in Eq.~\eqref{eq:vsum} is arbitrary, chosen so that the average reproductive value of each site is one. 

We prove in Proposition \ref{lem:RVunique} below that reproductive values are uniquely defined by Eq.~\eqref{eq:vsystem} and are nonnegative. We show in Section \ref{sec:neutral} below that, under neutral drift, the reproductive value of a site $g$ is proportional to the probability that a mutation arising at site $g$ becomes fixed.

Reproductive value (RV) provides a natural weighting for genetic sites when computing quantities related to natural selection. We indicate RV-weighted quantities with a hat; for example, the RV-weighted frequency $\hat{x}$ is defined as
\begin{equation}
	\hat{x} \coloneqq \frac{1}{n} \sum_{g \in G} v_g x_g .
\end{equation}
We also define the RV-weighted birth and death rates of each site $g \in G$:
\begin{subequations}
	\begin{align}
		\label{eq:bhat}
		\hat{b}_g \left(\vx\right) & \coloneqq \sum_{h\in G} e_{gh} \left(\vx\right) v_h ;  \\
		\label{eq:dhat}
		\hat{d}_g \left(\vx\right) & \coloneqq v_g d_g \left(\vx\right) .
	\end{align}
\end{subequations}

It follows from Eq.~\eqref{eq:vsystem} that in the monoallelic states, the RV-weighted birth and death rates are equal for each state:
\begin{equation}
	\label{eq:RVbdneutral}
	\hat{b}^\circ_g = \hat{d}^\circ_g.
\end{equation}

We let $\hat{b}\left(\vx\right)$ denote the total RV-weighted birth rate in state $\vx$, which is equal to the total RV-weighted death rate:
\begin{equation}
	\label{eq:sumbd}
	\hat{b}\left(\vx\right) = \sum_{g \in G} \hat{b}_g \left(\vx\right) = \sum_{g,h \in G} e_{gh}\left(\vx\right) v_h = \sum_{h \in G} \hat{d}_h \left(\vx\right) .
\end{equation}

\subsection{Fitness} \label{sec:fitness}
Fitness quantifies reproductive success under natural selection. Although the concept of fitness is fundamental in evolutionary biology \citep{haldane1924mathematical,fisher1930genetical}, it can be difficult to define for a general evolutionary process \citep{MetzFitness,RandESAs,doebeli2017towards}.

As with other quantities, we first define fitness on the level of genetic sites. Intuitively, the fitness of site $g$ in state $\vx$ should quantify the expected reproductive success of the allele occupying $g$ in this state. This success can be quantified in terms of its own reproductive value (if it survives) plus the expected reproductive value of copies it produces and transmits, which leads to the following definition:
\begin{definition}
	The \emph{fitness} of site $g \in G$ in state $\vx$ is defined as
	\begin{align}
		w_g \left(\vx\right) \coloneqq & \left( 1 - \sum_{h \in G} e_{hg}\left(\vx\right) \right)v_g + \sum_{h \in G}  e_{gh}\left(\vx\right) v_h \nonumber \\
		= & v_g - \hat{d}_g \left(\vx\right) + \hat{b}_g \left(\vx\right) . \label{eq:fitdef}
	\end{align}
\end{definition}

The definition of fitness used here differs from the definition in \cite{allen2014measures}, which does not take reproductive value into account.

We observe that, by Eq.~\eqref{eq:vsum}, the total fitness is $n$ in every state $\vx$, i.e.
\begin{equation}
	\label{eq:fitsum}
	\sum_{g \in G} w_g \left(\vx\right) = n.
\end{equation}
For the monoallelic states, it follows from Eq.~\eqref{eq:RVbdneutral} that each site has fitness equal to its reproductive value:
\begin{equation}
	\label{eq:fitneutral}
	w_g^\circ = v_g - \hat{d}_g^\circ + \hat{b}_g^\circ = v_g.
\end{equation}

\subsection{Selection with reproductive value}
To quantify selection using reproductive value, we use the expected change in the absolute RV-weighted frequency from a given state $\vx$, denoted $\delhat\left(\vx\right)$, which we define as follows:
\begin{align}\label{eq:delhatexpand}
	\delhat\left(\vx\right) \coloneqq  & - \sum_{g \in G} x_g v_g d_g \left(\vx\right) + \left(1-u\right) \sum_{g,h \in G} x_g e_{gh} \left(\vx\right) v_h 
	+ u \nu \sum_{g \in G} v_g d_g\left(\vx\right) \nonumber \\ 
	= & - \sum_{g \in G} x_g \hat{d}_g \left(\vx\right) + \left(1-u\right) \sum_{g \in G} x_g \hat{b}_g \left(\vx\right) + u \nu \sum_{g \in G} \hat{d}_g\left(\vx\right) .
\end{align}
(Note that $\delhat\left(\vx\right)$, like $\del\left(\vx\right)$, is defined using absolute rather than relative weights, in order to avoid tedious factors of $1/n$.) 

We rewrite Eq.~\eqref{eq:delhatexpand}, in analogy to Eq.~\eqref{eq:mutseldecomp}, as
\begin{equation}
	\label{eq:mutselRVdecomp}
	\delhat\left(\vx\right) = \delhatsel\left(\vx\right) - u \sum_{g \in G} \left(x_g - \nu\right) \hat{b}_g \left(\vx\right) .
\end{equation}
Above (following \citealp{tarnita2014measures}) we have introduced $\delhatsel\left(\vx\right)$, the \emph{expected change in RV-weighted frequency due to selection} from state $\vx$, which can be defined in a number of equivalent ways:
\begin{align}
	\label{eq:delhatsel}
	\delhatsel\left(\vx\right) &= \sum_{g \in G}x_g \left ( \hat{b}_g\left(\vx\right) -\hat{d}_g\left(\vx\right) \right) \nonumber \\
	& =  \sum_{g \in G} x_g \left( w_g\left(\vx\right) - v_g \right) \nonumber \\
	& = \sum_{g,h \in G}x_g \left( e_{gh}\left(\vx\right) v_h - e_{hg}\left(\vx\right) v_g \right) \nonumber \\
	& = \frac{1}{2} \sum_{g,h \in G} \left(x_g - x_h\right) \left( e_{gh}\left(\vx\right) v_h - e_{hg}\left(\vx\right) v_g \right) .
\end{align}

We now extend our equivalence of success criteria (Theorem \ref{thm:rhodelsel}) to include measures weighted by reproductive value:
\begin{theorem}
	\label{thm:rhodelselhat}
	For any replacement rule, $\left\{p_{\left(R, \alpha\right)} \left(\vx\right) \right\}_{\left(R,\alpha\right)}$, satisfying Assumption \ref{ass:neutral}, and any mutational bias $\nu$, the following success criteria are equivalent:
	\renewcommand{\labelenumi}{(\alph{enumi})}
	\begin{enumerate}
		\item $\rho_A > \rho_a$;
		\item $\E_\RMC\left[ \delsel \right] > 0$;
		\item $\E_\RMC\left[ \delhatsel \right] > 0$;
		\item $\frac{d}{du}  \E_\MSS\left[\delsel\right]\big|_{u=0} > 0$;
		\item $\frac{d}{du}  \E_\MSS\left[\delhatsel\right] \big|_{u=0} > 0$;
		\item $\lim_{u \to 0} \E_\MSS\left[ x \right] \; (= \lim_{u \to 0} \E_\MSS\left[ \hat{x} \right]) > \nu$.
	\end{enumerate}
\end{theorem}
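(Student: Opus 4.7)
My plan is to mirror the proof of Theorem \ref{thm:rhodelsel}, with the reproductive-value-weighted decomposition \eqref{eq:mutselRVdecomp} playing the role that \eqref{eq:mutseldecomp} played there. The first observation is that Assumption \ref{ass:neutral} gives $b(\va) = b(\vA) = b^\circ$ and $\hat{b}(\va) = \hat{b}(\vA) = \hat{b}^\circ$, since the $e_{gh}$'s agree on the two monoallelic states. Consequently, Condition (b) of Theorem \ref{thm:rhodelsel} collapses to $\lim_{u\to 0}\E_\MSS[x] > \nu$, and Theorem \ref{thm:rhodelsel} immediately yields $\text{(a)} \Leftrightarrow \text{(b)} \Leftrightarrow \text{(d)}$, as well as the $x$-half of (f). To close out (f), I would note that both $x$ and $\hat{x}$ equal $1$ at $\vA$ and $0$ at $\va$, so the limit MSS distribution (which is concentrated on these two states by Theorem \ref{thm:lowu}) gives $\lim_{u\to 0}\E_\MSS[x] = \lim_{u\to 0}\E_\MSS[\hat{x}] = \lim_{u\to 0}\pi_\MSS(\vA)$.

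The main new work is (a) $\Leftrightarrow$ (c) $\Leftrightarrow$ (e). First I would check that $\delhatsel$ vanishes on the monoallelic states: the case $\va$ is trivial because all $x_g = 0$, while for $\vA$ the representation $\delhatsel = \sum_g x_g(\hat{b}_g - \hat{d}_g)$ together with the neutral identity $\hat{b}_g^\circ = \hat{d}_g^\circ$ from Eq.~\eqref{eq:RVbdneutral} gives $\delhatsel(\vA) = 0$. This is precisely the hypothesis needed to apply Lemma \ref{lem:lowu} to $\delhatsel$. Next I would take the expectation of Eq.~\eqref{eq:mutselRVdecomp} under the MSS distribution and use the fact that $\delhat$ is the expected one-step change of the bounded function $\sum_g v_g x_g$, which forces $\E_\MSS[\delhat] = 0$ by stationarity. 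This yields
\begin{equation*}
\E_\MSS[\delhatsel] \;=\; u\,\E_\MSS\!\left[\textstyle\sum_{g\in G}(x_g-\nu)\hat{b}_g\right],
\end{equation*}
and applying Lemma \ref{lem:lowu} gives $\E_\RMC[\delhatsel] = K\lim_{u\to 0}\E_\MSS[\sum_g(x_g-\nu)\hat{b}_g]$ with $K>0$, which already delivers (c) $\Leftrightarrow$ (e).

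Finally, I would evaluate the limit on the right using Theorem \ref{thm:lowu}: in the $u\to 0$ limit the MSS distribution concentrates on $\va$ and $\vA$, where $\hat{b}_g$ takes the common value $\hat{b}_g^\circ$. Substituting the limiting probabilities from Eq.~\eqref{eq:pAndRho} (with the simplification $b(\va)=b(\vA)=b^\circ$) and simplifying should yield
\begin{equation*}
\lim_{u\to 0}\E_\MSS\!\left[\textstyle\sum_{g\in G}(x_g-\nu)\hat{b}_g\right] \;=\; \frac{\nu(1-\nu)\,\hat{b}^\circ}{\nu \rho_A + (1-\nu)\rho_a}\,(\rho_A - \rho_a),
\end{equation*}
whose coefficient of $\rho_A-\rho_a$ is strictly positive (since $\hat{b}^\circ = \sum_g v_g d_g^\circ > 0$). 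Hence $\E_\RMC[\delhatsel]$, $\frac{d}{du}\E_\MSS[\delhatsel]\big|_{u=0}$, and $\rho_A-\rho_a$ share a common sign, completing the chain of equivalences.

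I do not anticipate a serious obstacle: once $\delhatsel(\va) = \delhatsel(\vA) = 0$ is in hand, everything is a mechanical reproductive-value analogue of the argument for Theorem \ref{thm:rhodelsel}, with Assumption \ref{ass:neutral} ensuring that the $\hat{b}$-coefficients cleanly equal $\hat{b}^\circ$ in the two monoallelic states. The only conceptual point worth emphasizing is why Assumption \ref{ass:neutral} is needed here but not in Theorem \ref{thm:rhodelsel}: it is what makes the reproductive values $v_g$ well-defined at all, and it is also what guarantees $\hat{b}_g^\circ = \hat{d}_g^\circ$ so that $\delhatsel$ vanishes on monoallelic states.
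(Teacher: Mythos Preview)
Your proposal is correct and follows essentially the same route as the paper: take the MSS expectation of the RV-weighted decomposition \eqref{eq:mutselRVdecomp}, use stationarity to kill $\E_\MSS[\delhat]$, apply Lemma~\ref{lem:lowu} (after checking $\delhatsel$ vanishes on the monoallelic states), and evaluate the resulting limit via Theorem~\ref{thm:lowu} to get a positive multiple of $\rho_A-\rho_a$; the remaining equivalences are inherited from Theorem~\ref{thm:rhodelsel} once Assumption~\ref{ass:neutral} collapses its Condition~(b) to $\lim_{u\to 0}\E_\MSS[x]>\nu$. Your treatment is slightly more explicit than the paper's (in particular the justification of $\lim_{u\to 0}\E_\MSS[x]=\lim_{u\to 0}\E_\MSS[\hat x]$ and of $\hat b^\circ>0$), but the argument is the same.
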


The equivalence of (d) and (f) was previously shown by \cite{tarnita2014measures}.

\begin{proof}
	Analogously to the proof of Theorem \ref{thm:rhodelsel}, we temporarily fix $u>0$ and take the expectation of both sides of Eq.~\eqref{eq:mutselRVdecomp} under the MSS distribution. Upon rearranging, we obtain
	\begin{equation}
		\E_\MSS \left[ \delhatsel \right] = u \E_\MSS \left[ \sum_{g \in G} \left(x_g-\nu\right) \hat{b}_g \right].
	\end{equation}
	Since $\delhatsel\left(\va\right) =\delhatsel\left(\vA\right) =0$, application of Lemma \ref{lem:lowu} yields
	\begin{equation}
		\label{eq:RMCMSS}
		\E_\RMC \left[ \delhatsel \right] = K \frac{d \E_\MSS\left[\delhatsel\right]}{du} \Big|_{u=0} 
		= K \lim_{u \to 0}\E_\MSS \left[ \sum_{g \in G} \left(x_g-\nu\right) \hat{b}_g \right],
	\end{equation}
	for some $K>0$. We observe that, by Assumption \ref{ass:neutral}, $b\left(\vA\right) =b\left(\va\right)$ and $\hat{b}\left(\vA\right) = \hat{b}\left(\va\right)$; we denote the latter quantity by $\hat{b}^\circ$. Theorem \ref{thm:lowu} now gives
	\begin{align}
		\lim_{u \to 0}\E_\MSS \left [ \sum_{g \in G} \left(x_g-\nu\right) \hat{b}_g \right] 
		&= \hat{b}^\circ \left( \left(1-\nu\right) \lim_{u \to 0} \pi_\MSS\left(\vA\right) - \nu \lim_{u \to 0} \pi_\MSS\left(\va\right) \right) \nonumber \\
		\label{eq:MSSrho}
		& = \frac{\hat{b}^\circ \nu\left(1-\nu\right) }{\nu \rho_{A}+\left(1-\nu\right) \rho_{a}} (\rho_A - \rho_a).
	\end{align}
	The coefficient of $\rho_A - \rho_a$ above is positive, which proves the equivalence of (a), (c), and (e). The rest of the proof follows from Theorem \ref{thm:rhodelsel}.
\end{proof}

We note that Theorem \ref{thm:rhodelselhat} requires Assumption \ref{ass:neutral} and is therefore less general than its non-RV-weighted analogue, Theorem \ref{thm:rhodelsel} (see \citealp{tarnita2014measures}, for related discussion).

\section{Neutral drift}\label{sec:neutral}
Neutral drift describes a situation where the alleles present in the population do not affect the drivers of selection (births and deaths). Neutral drift is interesting and important in its own right (e.g.~\citealp{kimura1968evolutionary,allen2015molecular,mcavoy2018stationary}), and will also serve as a baseline for studying weak selection. In our framework, neutral drift is characterized by the property that the probabilities of replacement events are independent of the population state:

\begin{definition}
	A replacement rule $\left\{p_{\left(R,\alpha\right)}\left(\vx\right)\right\}_{\left(R,\alpha\right)}$ \emph{represents neutral drift} if $p_{\left(R,\alpha\right)}\left(\vx\right)$ is independent of the state $\vx$ for every $(R,\alpha)$.
\end{definition}

Clearly, a replacement rule representing neutral drift satisfies Assumption 1; indeed, we have $p_{\left(R,\alpha\right)}\left(\vx\right) = p^\circ_{\left(R,\alpha\right)}$ for each replacement event, $\left(R,\alpha\right)$, and each state, $\vx$. For this reason, we will denote a replacement rule representing neutral drift by its (fixed) probability distribution, $\left\{p_{\left(R,\alpha\right)}^\circ\right\}_{\left(R,\alpha\right)}$, over replacement events.

\subsection{Ancestral random walks and uniqueness of reproductive value}
A convenient property of neutral drift is that it can be analyzed backwards in time, using the perspective of coalescent theory \citep{kingman1982coalescent,WakeleyCoalescent}. Let  $\left\{p_{\left(R,\alpha\right)}^\circ\right\}_{\left(R,\alpha\right)}$ be a replacement rule representing neutral drift. For a given site, $g \in G$, the probability $a_{gh}$ that the parent copy of the allele in $g$ occupied site $h$ is
\begin{align}
	a_{gh} \coloneqq \Prob^\circ\left[\alpha\left(g\right) =h \, | \, g\in R\right] = \frac{e_{hg}^\circ}{d_g^\circ} .
\end{align}
(Note that $d_g^\circ>0$ as a consequence of the Fixation Axiom.) We define the \emph{ancestral random walk} as a Markov chain, $\mathcal{A}$, on $G$ with transition probabilities $a_{gh}$ from $g$ to $h$. The trajectory of the ancestral random walk from a given site $g$ represents the ancestry of $g$ traced backwards in time. The ancestral random walk is a special case of a \emph{coalescing random walk}
\citep{holley1975ergodic,cox1989coalescing} for which there is only one walker.

The ancestral random walk enables an intuitive proof for the uniqueness and non-negativity of reproductive values:

\begin{proposition}
	\label{lem:RVunique}
	For a given replacement rule satisfying Assumption \ref{ass:neutral}, the reproductive values, $\left\{v_g\right\}_{g \in G}$, are uniquely defined by Eq.~\eqref{eq:vsystem} and are nonnegative for every $g \in G$.
\end{proposition}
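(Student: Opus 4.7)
The plan is to reformulate the system (\ref{eq:vsystem}) as a stationary-distribution problem for the ancestral random walk $\mathcal{A}$. Setting $w_g \coloneqq d_g^\circ v_g$ and substituting into \eqref{eq:vrecur} gives
\begin{equation*}
w_g \;=\; \sum_{h \in G} e_{gh}^\circ v_h \;=\; \sum_{h \in G} \frac{e_{gh}^\circ}{d_h^\circ}\, w_h \;=\; \sum_{h \in G} a_{hg}\, w_h,
\end{equation*}
so the row vector $\mathbf{w} = (w_g)_{g \in G}$ is left-fixed by the transition matrix of $\mathcal{A}$; equivalently, once normalized to sum to $1$, $\mathbf{w}$ is a stationary distribution of $\mathcal{A}$. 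Since the Fixation Axiom forces $d_g^\circ > 0$ for all $g \in G$, the correspondence $\mathbf{v} \leftrightarrow \mathbf{w}$ is a bijection, and the proposition reduces to showing that $\mathcal{A}$ has a unique stationary distribution and that this distribution has non-negative entries.

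To obtain uniqueness, I would show that $\mathcal{A}$ possesses a single closed communicating class, by proving that the distinguished site $g$ provided by the Fixation Axiom is reachable in $\mathcal{A}$ from every $h \in G$. Let $\{(R_k, \alpha_k)\}_{k=1}^m$ be the sequence from the axiom, fix $h \in G$, set $h_0 = h$, and define $h_k \coloneqq \tilde{\alpha}_{m-k+1}(h_{k-1})$ inductively for $k = 1, \ldots, m$; part~(c) gives $h_m = g$. When $h_{k-1} \in R_{m-k+1}$, the pair $(R_{m-k+1}, \alpha_{m-k+1})$ appears in the sum defining $e^\circ_{h_k h_{k-1}}$, and by part~(a) we get
\begin{equation*}
e^\circ_{h_k h_{k-1}} \;\geqslant\; p^\circ_{(R_{m-k+1},\,\alpha_{m-k+1})} \;>\; 0,
\end{equation*}
hence $a_{h_{k-1} h_k} > 0$. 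When $h_{k-1} \notin R_{m-k+1}$ we simply have $h_k = h_{k-1}$ and no transition is required. Discarding these ``no-op'' steps yields a positive-probability path of $\mathcal{A}$ from $h$ to $g$. Consequently every closed communicating class of $\mathcal{A}$ contains $g$, and since distinct closed communicating classes are disjoint, $\mathcal{A}$ has exactly one.

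A finite Markov chain with a unique closed communicating class has a unique stationary distribution, which is non-negative as a probability distribution. Thus $\mathbf{w}$ is determined up to a positive scalar and has non-negative entries, and the same then holds for $\mathbf{v}=(w_g/d_g^\circ)_{g \in G}$. Since not every $v_g$ vanishes we have $\sum_g v_g > 0$, so a unique positive rescaling enforces \eqref{eq:vsum} while preserving non-negativity.

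The main obstacle is the bookkeeping in the reachability argument: translating the composition $\tilde{\alpha}_1 \circ \cdots \circ \tilde{\alpha}_m$ furnished by the Fixation Axiom into an honest sequence of positive-probability steps in $\mathcal{A}$, while correctly handling the factors $\tilde{\alpha}_k$ that act as the identity on sites outside $R_k$ and therefore contribute no genuine transition. Once the reachability claim is in hand, the translation to a stationary-distribution problem and the deduction of non-negativity and uniqueness are routine linear algebra.
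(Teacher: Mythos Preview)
Your proposal is correct and follows essentially the same route as the paper: both arguments reduce \eqref{eq:vsystem} to the stationary-measure problem for the ancestral random walk via the substitution $w_g = d_g^\circ v_g$ (the paper goes in the reverse direction, starting from the stationary distribution $z_g$ and setting $v_g \propto z_g/d_g^\circ$), and both deduce uniqueness by showing that the site $g$ from the Fixation Axiom is reachable from every $h$ in $\mathcal{A}$. Your reachability argument is more explicit than the paper's (which simply asserts that the Fixation Axiom yields the required path), and you also sidestep the paper's separate treatment of the non-neutral case by working directly with the $\circ$-quantities, which are already well-defined under Assumption~\ref{ass:neutral}.
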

\begin{proof}
	First suppose that the given replacement rule represents neutral drift. Let $\mathcal{A}$ be the corresponding ancestral random walk, with transition probabilities $a_{gh} = e_{hg}^\circ/d_g^\circ$. The Fixation Axiom implies that there exists a $g \in G$ such that, for all $h \in G$, there is a $k \geqslant 0$ and a sequence $\ell_1, \ldots, \ell_k \in G$ such that $a_{h \ell_k} a_{\ell_k \ell_{k-1}} \cdots a_{\ell_2 \ell_1} a_{\ell_1 g} >0$. In other words, there exists $g \in G$ such that for every $h\in G$, there is a finite sequence of transitions in $\mathcal{A}$, each with positive probability, from $h$ to $g$. It follows that $\mathcal{A}$ has a single closed communicating class, and therefore it has a unique stationary probability distribution, $\{z_g\}_{g \in G}$, which is the unique solution to the system of equations
	\begin{subequations}
		\begin{align}
			z_g &= \sum_{h \in G} \frac{e_{gh}^\circ}{d_h^\circ} z_h ; \\
			\sum_{g \in G} z_g  &= 1.
		\end{align}
	\end{subequations}
	Setting 
	\begin{align}
		v_g \coloneqq \frac{ z_g /d_g^\circ}{\sum_{\ell \in G}\left( z_\ell /d_\ell^\circ\right) } ,
	\end{align}
	it follows that $\left\{v_g\right\}_{g \in G}$ is the unique solution to Eq.~\eqref{eq:vsystem}. The $z_g$ are nonnegative since they comprise a probability distribution, and it follows that the $v_g$ are nonnegative as well.
	
	If the given replacement rule does not represent neutral drift, we define a new replacement rule $\left\{\tilde{p}_{\left(R,\alpha\right)} \right\}_{\left(R,\alpha\right)}$ by $\tilde{p}_{\left(R,\alpha\right)} \coloneqq p_{\left(R,\alpha\right)}\left(\va\right) =p_{\left(R,\alpha\right)}\left(\vA\right)$. This new replacement rule represents neutral drift by definition. The above argument again shows that the reproductive values are uniquely defined by Eq.~\eqref{eq:vsystem} and are nonnegative.
\end{proof}

As a corollary to the proof, we can see that the states with positive reproductive value are precisely those that are recurrent under the ancestral random walk, which in turn are those that are able to spread their contents throughout the population in the sense of the Fixation Axiom. This fact hints at a connection between reproductive value and fixation probability, which we will make explicit in Section \ref{sec:RVfixprob}.

\subsection{Change due selection vanishes under neutral drift}
A key result for neutral drift is that the RV-weighted change due to selection, $\delhatsel\left(\vx\right)$, is zero in every state, $\vx$. This property mathematically expresses the \emph{neutrality} (absence of selection) in neutral drift. Notably, the analogous property does not hold for unweighted change due to selection, $\delsel\left(\vx\right)$. Indeed, this property uniquely defines reproductive value, and is one of the primary motivations studying RV-weighted quantities (see also \citealp{tarnita2014measures}). We formalize these observations in the following proposition:

\begin{theorem}
	\label{thm:delhatselneut}
	If the replacement rule $\left\{p^\circ_{\left(R,\alpha\right)}\right\}_{\left(R,\alpha\right)}$ represents neutral drift, then $\delhatsel\left(\vx\right) = 0$ for each state $\vx \in \left\{0,1\right\}^G$. Furthermore, if $\left\{\tilde{v}_g\right\}_{g \in G}$ is any weighting of genetic sites $g \in G$ such that $\tilde{\del}_\mathrm{sel}\left(\vx\right)$ (the change in $\tilde{v}$-weighted frequency) is zero for each state $\vx$, then $\left\{\tilde{v}_g\right\}_{g \in G}$ are a constant multiple of $\left\{v_g\right\}_{g \in G}$.
\end{theorem}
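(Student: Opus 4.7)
The plan is to derive both claims from the identities in Eq.~\eqref{eq:delhatsel} together with the defining recurrence Eq.~\eqref{eq:vrecur}, using single-site indicator states as test inputs for the converse.

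For the first claim, under neutral drift every $e_{gh}(\vx)$ is replaced by $e_{gh}^\circ$, which I would substitute into the third expression in Eq.~\eqref{eq:delhatsel}. Using $\sum_h e_{hg}^\circ = d_g^\circ$, this collapses to
\begin{equation*}
\delhatsel(\vx) = \sum_{g \in G} x_g \left( \sum_{h \in G} e_{gh}^\circ v_h - d_g^\circ v_g \right),
\end{equation*}
and the parenthesized expression vanishes for every $g$ by Eq.~\eqref{eq:vrecur}. Hence $\delhatsel(\vx) = 0$ for every $\vx$.

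For the converse, suppose $\{\tilde v_g\}_{g \in G}$ is any weighting for which the $\tilde v$-weighted change due to selection, $\tilde{\del}_\mathrm{sel}(\vx)$, is zero in every state. The same computation as above gives
\begin{equation*}
\tilde{\del}_\mathrm{sel}(\vx) = \sum_{g \in G} x_g \left( \sum_{h \in G} e_{gh}^\circ \tilde v_h - d_g^\circ \tilde v_g \right) = 0 \quad \text{for all } \vx \in \{0,1\}^G.
\end{equation*}
Testing with $\vx = \mathbf{1}_{\{g_0\}}$ for each $g_0 \in G$ isolates the $g = g_0$ summand and yields $d_{g_0}^\circ \tilde v_{g_0} = \sum_{h \in G} e_{g_0 h}^\circ \tilde v_h$. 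Thus $\tilde v$ satisfies Eq.~\eqref{eq:vrecur}.

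It remains to show that solutions to Eq.~\eqref{eq:vrecur} (without imposing the normalization Eq.~\eqref{eq:vsum}) form a one-dimensional subspace. Here I would reuse the change of variables $z_g \coloneqq \tilde v_g d_g^\circ$ from the proof of Proposition~\ref{lem:RVunique}; because $d_g^\circ > 0$ for every $g$ by the Fixation Axiom, this substitution is invertible and converts Eq.~\eqref{eq:vrecur} into the invariance equation $z = zA$ for the ancestral random walk $\mathcal{A}$ with transitions $a_{gh} = e_{hg}^\circ / d_g^\circ$. The Fixation Axiom guarantees that $\mathcal{A}$ has a single closed communicating class $R$, so the signed invariant vectors must vanish on the transient part $G \setminus R$ (because $I - A_{T,T}$ is invertible on the transient block) and are proportional to the unique Perron eigenvector on $R$. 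Therefore $z$, and hence $\tilde v$, is determined up to a scalar, giving $\tilde v_g = c\, v_g$.

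The main obstacle I anticipate is the final uniqueness step: Proposition~\ref{lem:RVunique} as stated only asserts uniqueness of the \emph{normalized} solution, and its proof argues via non-negative invariant probability distributions. The theorem, by contrast, places no sign constraint on $\tilde v$, so I must be explicit that signed invariant measures of a finite chain with one closed communicating class still form a one-dimensional space — that is the one small generalization over what is already recorded.
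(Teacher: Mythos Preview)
Your proposal is correct and follows essentially the same route as the paper: the paper also invokes Eq.~\eqref{eq:delhatsel} with the neutral identity $\hat b_g^\circ=\hat d_g^\circ$ for the first claim, then tests $\tilde{\del}_\mathrm{sel}$ on the indicator states $\mathbf{1}_{\{g\}}$ to recover Eq.~\eqref{eq:vrecur} for $\tilde v$, and finally appeals to Proposition~\ref{lem:RVunique} for uniqueness up to scalar. Your extra care in the last step---explicitly arguing that \emph{signed} invariant measures of a finite chain with a single closed communicating class form a one-dimensional space---is a legitimate tightening, since the paper simply cites Proposition~\ref{lem:RVunique} (whose proof, as you note, works with nonnegative stationary distributions); but this is a minor refinement of the same argument, not a different approach.
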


\begin{proof}
	For the first claim, combining Eqs.~\eqref{eq:RVbdneutral} and \eqref{eq:delhatsel} gives
	\begin{equation}
		\delhatsel(\vx) = \sum_{g \in G} x_g \left(\hat{b}_g^\circ - \hat{d}_g^\circ \right) = 0. 
	\end{equation}
	
	For the second claim, consider an arbitrary weighting of genetic sites $\left\{\tilde{v}_g\right\}_{g \in G}$. In analogy with Eq.~\eqref{eq:delhatsel}, the expected change in $\tilde{x} = \sum_{g \in G} \tilde{v}_g x_g$ can be written
	\begin{equation}
		\label{eq:deltildesel}
		\tilde{\del}_\mathrm{sel}\left(\vx\right) = \sum_{g,h \in G}x_g \left( e_{gh}^\circ \tilde{v}_h - e_{hg}^\circ \tilde{v}_g \right).
	\end{equation}
	
	If $\tilde{\del}_\mathrm{sel}\left(\vx\right) =0$ for each state $\vx$, then, by substituting $\vx=\mathbf{1}_{\left\{g\right\}}$ into Eq.~\eqref{eq:deltildesel} and rearranging, we obtain
	\begin{equation}
		\label{eq:tildevrecur}
		\sum_{h \in G} e_{gh}^\circ  \tilde{v}_h = \sum_{h \in G} e_{hg}^\circ  \tilde{v}_g = d_g^\circ \tilde{v}_g \quad
		\text{for each $g \in G$.}
	\end{equation}
	The above equation is equivalent to Eq.~\eqref{eq:vrecur}, with $\tilde{v}_g$ in place of $v_g$. Proposition \ref{lem:RVunique} guarantees the solution to Eq.~\eqref{eq:vrecur} is unique up to a constant multiple. Therefore, the $\left\{\tilde{v}_g\right\}_{g \in G}$ are a constant multiple of $\left\{v_g\right\}_{g \in G}$.
\end{proof}

\subsection{Reproductive value and fixation probability}\label{sec:RVfixprob}
As an application of Theorem \ref{thm:delhatselneut}, we deduce a remarkable relationship between reproductive value and fixation probability: under neutral drift with no mutation, the reproductive value of a site is proportional to the fixation probability of a novel type initiated at that site. This relationship was previously noted by \cite{maciejewski2014reproductive} and \cite{allen2015molecular}; here, we provide an elegant proof using martingales:
\begin{theorem} 
	\label{thm:neutfixprob}
	Let $\left\{p^\circ_{\left(R,\alpha\right)}\right\}_{\left(R,\alpha\right)}$ be a replacement rule representing neutral drift and let $\cM$ be the associated evolutionary Markov chain with $u=0$. Then, for any state $\vx \in \left\{0,1\right\}^G$,
	\begin{equation}
		\label{eq:fixRV1}
		\lim_{t \to \infty} P^{\left(t\right)}_{\vx \to \vA} = \hat{x} .
	\end{equation}
	In particular, the fixation probability of a single mutation arising in site $g$ is
	\begin{equation}
		\label{eq:fixRV2}
		\lim_{t \to \infty} P^{\left(t\right)}_{\mathbf{1}_{\left\{g\right\}} \to \vA} = \frac{v_g}{n} ,
	\end{equation}
	and the overall fixation probabilities of $A$ and $a$ are
	\begin{align}
		\label{eq:neutfixprob}
		\rho_{A} = \rho_{a} = \frac{\hat{b}^{\circ}}{nb^{\circ}} .
	\end{align}
\end{theorem}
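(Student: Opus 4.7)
The plan is to exploit the hint that Theorem \ref{thm:delhatselneut} makes the RV-weighted frequency a martingale under neutral drift with $u=0$, and then read off all three claims by optional stopping.

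First, I set $\hat{X}(t) \coloneqq \hat{x}(\vX(t)) = \frac{1}{n}\sum_{g \in G} v_g X_g(t)$ and verify that it is a martingale for $\cM$. By definition, $n \bigl( \mathbb{E}[\hat{X}(t+1)\mid \vX(t)=\vx] - \hat{X}(t) \bigr) = \delhat(\vx)$. With $u=0$, Eq.~\eqref{eq:mutselRVdecomp} collapses to $\delhat(\vx) = \delhatsel(\vx)$, and Theorem \ref{thm:delhatselneut} gives $\delhatsel(\vx) = 0$ for every $\vx$. So $\hat{X}(t)$ is a martingale, and because $v_g \geqslant 0$ (Proposition \ref{lem:RVunique}) with $\sum_g v_g = n$, it takes values in $[0,1]$.

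Next, I identify its almost-sure limit. Since $u=0$, every state outside $\{\va,\vA\}$ is transient (Section \ref{sec:fixprob}), so $\vX(t)$ converges a.s.\ to one of the monoallelic states. As $\hat{x}(\va)=0$ and $\hat{x}(\vA)=\sum_g v_g/n = 1$, the bounded-martingale limit is $\hat{X}(\infty) = \mathbf{1}\{\vX(\infty)=\vA\}$. Bounded convergence then yields, for every initial state $\vx$,
\begin{equation}
\lim_{t \to \infty} P^{(t)}_{\vx \to \vA} \;=\; \mathbb{E}\bigl[\hat{X}(\infty) \,\big|\, \vX(0)=\vx\bigr] \;=\; \hat{X}(0) \;=\; \hat{x},
\end{equation}
which is Eq.~\eqref{eq:fixRV1}. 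Specializing to $\vx = \mathbf{1}_{\{g\}}$ gives $\hat{x} = v_g/n$, proving Eq.~\eqref{eq:fixRV2}. Complementarity on the same almost-sure dichotomy gives $\lim_{t \to \infty} P^{(t)}_{\vx \to \va} = 1-\hat{x}$.

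Finally, I plug these into Definition \ref{def:meanFixationProbabilities}. For $\rho_A$,
\begin{equation}
\rho_A \;=\; \sum_{g \in G} \mu_A(\mathbf{1}_{\{g\}}) \cdot \frac{v_g}{n} \;=\; \frac{1}{n b^\circ} \sum_{g \in G} d_g^\circ v_g \;=\; \frac{1}{n b^\circ} \sum_{g \in G} \hat{d}_g^\circ \;=\; \frac{\hat{b}^\circ}{n b^\circ},
\end{equation}
where the last equality uses Eqs.~\eqref{eq:RVbdneutral} and \eqref{eq:sumbd} to rewrite $\sum_g \hat{d}_g^\circ = \sum_g \hat{b}_g^\circ = \hat{b}^\circ$. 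For $\rho_a$, starting from $\mathbf{1}_{G \setminus \{g\}}$ the RV-weighted frequency of $A$ equals $1 - v_g/n$, so the fixation probability at $\va$ from this state is $v_g/n$, and the same computation produces $\rho_a = \hat{b}^\circ/(n b^\circ) = \rho_A$.

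The only nontrivial step is the martingale argument, and even that is short once Theorem \ref{thm:delhatselneut} is in hand; the rest is bookkeeping with the definitions of $\mu_A$, $\mu_a$, and the monoallelic demographic quantities. I anticipate no real obstacle beyond stating the martingale convergence step carefully and citing the transience of non-monoallelic states from Section \ref{sec:fixprob}.
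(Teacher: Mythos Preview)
Your proof is correct and follows essentially the same approach as the paper: both establish that $\hat{X}(t)$ is a bounded martingale via Theorem~\ref{thm:delhatselneut} and Eq.~\eqref{eq:mutselRVdecomp} at $u=0$, then use absorption in $\{\va,\vA\}$ to identify the limit. You spell out the final computation of $\rho_A$ and $\rho_a$ from Definitions~\ref{def:mutapper} and~\ref{def:meanFixationProbabilities} more explicitly than the paper does, but the argument is the same.
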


\begin{proof}
	Consider $\cM$ started from an arbitrary initial state $\vX\left(0\right) =\vx_0$. Let $\vX\left(t\right)$ denote the state at time $t$, and let $\hat{X}\left(t\right) \coloneqq \frac{1}{n}\sum_{g \in G} v_g X_g\left(t\right)$ denote the RV-weighted frequency at time $t$. We claim that $\hat{X}\left(t\right)$ is a martingale, which can be seen by writing Eq.~\eqref{eq:mutselRVdecomp} as
	\begin{align}
		\E\left[\hat{X}\left(t+1\right) -\hat{X}\left(t\right)\ |\ \vX\left(t\right) =\vx\right] &= \frac{1}{n} \delhat\left(\vx\right) \nonumber \\ 
		& = \frac{1}{n} \delhatsel\left(\vx\right) - \frac{u}{n} \sum_{g \in G} \left(x_g - \nu \right) \hat{b}_g^\circ .
	\end{align}
	The first term in the final expression is zero by Theorem \ref{thm:delhatselneut}, and the second is zero since $u=0$; thus $\hat{X}(t)$ is a martingale.
	
	Since $u=0$, $\cM$ eventually becomes absorbed either in state $\vA$, for which $\hat{x}=1$, or $\va$, for which $\hat{x}=0$. The martingale property then implies
	\begin{align}
		\hat{x}_0 = \lim_{t \to \infty} \E\left[\hat{X}\left(t\right)\ |\ \vX\left(0\right) =\vx_0\right] =
		\lim_{t \to \infty} P^{\left(t\right)}_{\vx_0 \to \vA} ,
	\end{align}
	which proves Eq.~\eqref{eq:fixRV1}. Eq.~\eqref{eq:fixRV2} follows from setting $\vx_0 = \mathbf{1}_{\left\{g\right\}}$, and Eq.~\eqref{eq:neutfixprob} follows from Definitions \ref{def:mutapper} and \ref{def:meanFixationProbabilities}.
\end{proof}

Note that, according to Eq.~\eqref{eq:neutfixprob}, the fixation probability of a neutral mutation is not necessarily $1/N$, and depends on the spatial structure. It follows that spatial structure can affect a population's rate of neutral substitution (or ``molecular clock"; \citealp{kimura1968evolutionary}). This effect is discussed in detail by \cite{allen2015molecular}.

\subsection{Symmetry of neutral distributions}
Since, for neutral drift, the alleles $a$ and $A$ are interchangeable, we expect the neutral RMC and MSS distributions to be insensitive to interchanging the roles of $a$ and $A$. To formalize this property, we define the \emph{complement} of a state $\vx$, denoted $\bar{\vx}$, by $\bar{x}_{g}=1-x_g$ for each $g \in G$. In other words, $\bar{\vx}$ is formed from $\vx$ by replacing all $a$'s with $A$'s and vice versa. In particular, $\bar{\va} = \vA$ and $\bar{\vA} = \va$. The symmetry property for the RMC distribution can then be stated as follows:

\begin{proposition}
	\label{prop:symmetry}
	If the replacement rule $\left\{p^\circ_{\left(R,\alpha\right)}\right\}_{\left(R,\alpha\right)}$ represents neutral drift, then for each state $\vx \in \left\{0,1\right\}^G$, $\pi_\RMC \left(\bar{\vx} \right) = \pi_\RMC\left(\vx\right)$.
\end{proposition}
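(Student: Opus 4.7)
The plan is to show that the map $\vx \mapsto \bar{\vx}$ is a symmetry of the recurrence \eqref{eq:RMCrecur} under neutral drift, and then invoke uniqueness from Theorem \ref{thm:RMCrecur}. Concretely, I would define $\tilde{\pi}(\vx) \coloneqq \pi_\RMC(\bar{\vx})$ for $\vx \notin \{\va, \vA\}$ (noting that complementation permutes this set bijectively since $\bar{\va} = \vA$ and $\bar{\vA} = \va$), verify that $\tilde{\pi}$ satisfies the same recurrence that characterizes $\pi_\RMC$, and conclude $\tilde{\pi} = \pi_\RMC$.

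The core of the argument is two symmetry identities, both using the fact that $p_{(R,\alpha)}(\vx) = p^\circ_{(R,\alpha)}$ is state-independent. First, for the transition probabilities evaluated at $u=0$: a replacement event $(R,\alpha)$ transforms $\vy$ into the state with $y'_g = y_{\alpha(g)}$ on $R$ and $y'_g = y_g$ off $R$. Applying the very same event to $\bar{\vy}$ produces the complement of that image. Since the event probabilities do not depend on the state, this gives $P_{\bar{\vy} \to \bar{\vx}} = P_{\vy \to \vx}$ at $u=0$; in particular $P_{\bar{\vy}\to\va} = P_{\vy \to \vA}$ and $P_{\bar{\vy}\to\vA} = P_{\vy \to \va}$. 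Second, for the mutant appearance distributions: under neutral drift $b(\va) = b(\vA) = b^\circ$ and $d_g(\va) = d_g(\vA) = d_g^\circ$, so $\mu_A(\mathbf{1}_{\{g\}}) = d_g^\circ/b^\circ = \mu_a(\mathbf{1}_{G \setminus \{g\}})$, i.e., $\mu_A(\vx) = \mu_a(\bar{\vx})$ for every $\vx$ (both sides being zero unless $\vx$ has exactly one $A$-site).

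With these identities in hand, I substitute the recurrence for $\pi_\RMC(\bar{\vx})$, perform the change of summation variable $\vy \leftrightarrow \bar{\vz}$ (which sends $\{0,1\}^G \setminus \{\va,\vA\}$ to itself), and apply the two symmetries to rewrite every factor in terms of $\vz$ and $\vx$ rather than $\bar{\vz}$ and $\bar{\vx}$. The result is precisely
\begin{equation*}
\tilde{\pi}(\vx) = \sum_{\vz \notin \{\va,\vA\}} \tilde{\pi}(\vz)\bigl(P_{\vz \to \vx} + P_{\vz \to \va}\mu_A(\vx) + P_{\vz \to \vA}\mu_a(\vx)\bigr),
\end{equation*}
i.e., $\tilde{\pi}$ satisfies Eq.~\eqref{eq:RMCrecur}. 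Since $\tilde{\pi}$ is a probability distribution on $\{0,1\}^G \setminus \{\va,\vA\}$ (complementation being a bijection), uniqueness in Theorem \ref{thm:RMCrecur} forces $\tilde{\pi} = \pi_\RMC$, which is the desired equality $\pi_\RMC(\bar{\vx}) = \pi_\RMC(\vx)$.

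No step looks like a serious obstacle; the only place to be careful is the bookkeeping in the change of variables, making sure the $\va$/$\vA$ roles swap consistently in both the transition-probability factors and the mutant-appearance factors, and that state-independence of the replacement rule (neutral drift) is used exactly where needed to obtain $P_{\bar{\vy}\to\bar{\vx}} = P_{\vy\to\vx}$—this identity fails without neutrality, which is why the proposition requires it.
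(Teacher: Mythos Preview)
Your proposal is correct and follows essentially the same approach as the paper: establish the two symmetry identities $P_{\bar{\vy}\to\bar{\vx}} = P_{\vy\to\vx}$ (at $u=0$, from state-independence of the replacement rule) and $\mu_A(\vx) = \mu_a(\bar{\vx})$, substitute into the recurrence \eqref{eq:RMCrecur}, and invoke the uniqueness in Theorem~\ref{thm:RMCrecur}. Your write-up is somewhat more explicit about why each symmetry holds and about the change-of-variable bookkeeping, but the argument is the same.
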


\begin{proof}
	For a replacement rule representing neutral drift and $u=0$, it follows from how transitions are defined that $P_{\vx \to \vy}= P_{\bar{\vx} \to \bar{\vy}}$ for all states $\vx,\vy \in \left\{0,1\right\}^G$. We further observe that $\mu_a\left(\vx\right) =\mu_A \left( \bar{\vx} \right)$ and $\mu_A\left(\vx\right) =\mu_a \left( \bar{\vx} \right)$. Substituting into the recurrence relation for the RMC distribution, Eq.~\eqref{eq:RMCrecur}, we obtain
	\begin{align}\label{eq:recurrenceRMC}
		\pi_\RMC \left(\vx\right) = \sum_{\vy \notin \left\{\va, \vA\right\}} \pi_\RMC\left(\vy\right) \left( P_{\bar{\vy} \to \bar{\vx}} + P_{\bar{\vy} \to \vA} \mu_a \left(\bar{\vx} \right) + P_{\bar{\vy} \to \va} \mu_A \left(\bar{\vx} \right) \right) ,
	\end{align}
	for all states $\vx$, with the transition probabilities evaluated at $u=0$. Since Eq.~\eqref{eq:recurrenceRMC} uniquely determines $\left\{ \pi_\RMC \left( \bar{\vx} \right) \right\}$, we have $\pi_\RMC \left( \bar{\vx} \right) = \pi_\RMC \left( \vx \right)$ for all $\vx \in \left\{0,1\right\}^G$.
\end{proof}

In particular, under neutral drift, we have $\E_\RMC\left[x_g\right] = 1/2$ for each site $g$; that is, $g$ is equally likely to be occupied by either allele in the RMC distribution for neutral drift.

For the MSS distribution, interchanging the roles of $a$ and $A$ leads to a different symmetry property, one that incorporates the mutational bias, $\nu$:

\begin{proposition}
	\label{prop:MSSsymmetry}
	Let $\left\{p_{\left(R,\alpha\right)}\left(\vx\right)\right\}_{\left(R,\alpha\right)}$ be a replacement rule representing neutral drift and let the mutation probability $u>0$ be fixed. Then, for each state $\vx \in \left\{0,1\right\}^G$, the value of $\pi_\MSS\left(\vx\right)$ with mutational bias $\nu$ equals the value $\pi_\MSS\left(\bar{\vx}\right)$ with mutational bias $1-\nu$.
\end{proposition}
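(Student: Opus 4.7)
The plan is to show that under neutral drift, the one-step transition probabilities of the evolutionary Markov chain satisfy a symmetry under the joint operation of complementing states and swapping $\nu \leftrightarrow 1-\nu$, and then transfer that symmetry to the unique stationary distribution.

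The first step is to establish, for any two states $\vx,\vy \in \{0,1\}^G$, the identity
\begin{equation*}
P_{\vx \to \vy}\big|_\nu = P_{\bar{\vx} \to \bar{\vy}}\big|_{1-\nu}.
\end{equation*}
Because the replacement rule represents neutral drift, every replacement event $(R,\alpha)$ has state-independent probability $p^\circ_{(R,\alpha)}$, so it suffices to check that, conditional on $(R,\alpha)$, the probability of producing $\vy$ from $\vx$ under bias $\nu$ equals the probability of producing $\bar{\vy}$ from $\bar{\vx}$ under bias $1-\nu$. For each replaced site $g \in R$, the probability that $y_g = 1$ equals $(1-u)x_{\alpha(g)} + u\nu$; the corresponding probability that $\bar{y}_g = 0$, starting from $\bar{\vx}$ with bias $1-\nu$, is $(1-u)(1-\bar{x}_{\alpha(g)}) + u(1-(1-\nu)) = (1-u)x_{\alpha(g)} + u\nu$. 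The probabilities at $g\notin R$ agree trivially since $y_g = x_g$ iff $\bar{y}_g = \bar{x}_g$. Multiplying across all sites and summing over $(R,\alpha)$ yields the claimed identity.

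The second step is to define $\tilde{\pi}(\vx) \coloneqq \pi_\MSS(\bar{\vx})\big|_{1-\nu}$ and verify that it is a stationary distribution for the evolutionary Markov chain with bias $\nu$. Using the transition symmetry from Step 1 together with the substitution $\vz = \bar{\vy}$, we compute
\begin{equation*}
\sum_{\vy} \tilde{\pi}(\vy) P_{\vy \to \vx}\big|_\nu = \sum_{\vy} \pi_\MSS(\bar{\vy})\big|_{1-\nu}\, P_{\bar{\vy} \to \bar{\vx}}\big|_{1-\nu} = \sum_{\vz} \pi_\MSS(\vz)\big|_{1-\nu}\, P_{\vz \to \bar{\vx}}\big|_{1-\nu} = \pi_\MSS(\bar{\vx})\big|_{1-\nu} = \tilde{\pi}(\vx),
\end{equation*}
where the penultimate equality uses stationarity of $\pi_\MSS\big|_{1-\nu}$. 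Clearly $\sum_\vx \tilde{\pi}(\vx) = 1$ since complementation is a bijection on $\{0,1\}^G$.

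The final step is to invoke the uniqueness of the MSS distribution (Section~\ref{sec:MSS}), which holds whenever $u > 0$, to conclude that $\tilde{\pi} = \pi_\MSS\big|_\nu$; this is exactly the asserted identity. The only mildly delicate point is the careful bookkeeping in Step 1 to confirm that swapping the mutational bias exactly compensates for complementing the alleles; once that is in hand, Steps 2 and 3 are routine applications of stationarity and uniqueness, so I do not anticipate a genuine obstacle.
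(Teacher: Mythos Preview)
Your proof is correct and follows essentially the same approach the paper indicates: establish the transition symmetry $P_{\vx \to \vy}\big|_\nu = P_{\bar{\vx} \to \bar{\vy}}\big|_{1-\nu}$ under neutral drift, substitute into the stationarity recurrence \eqref{eq:MSSrecur}, and invoke uniqueness of the MSS distribution for $u>0$. This is exactly the analogue of the proof of Proposition~\ref{prop:symmetry} with the MSS recurrence in place of the RMC recurrence, as the paper states.
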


We omit the proof, which is similar to that Proposition \ref{prop:symmetry} but uses the recurrence relations \eqref{eq:MSSrecur} in place of \eqref{eq:RMCrecur}. Proposition \ref{prop:MSSsymmetry} implies in particular that $\E_\MSS\left[x_g\right] = \nu$ for each site $g$. The apparent discrepancy between the results $\E_\RMC\left[x_g\right] = 1/2$ and $\E_\MSS\left[x_g\right] = \nu$ can be resolved by recalling that, as $u \to 0$, the MSS distribution becomes concentrated on state $\vA$ (with probability approaching $\nu$) and state $\va$ (with probability approaching $1-\nu$); in contrast, the RMC distribution excludes these monoallelic states and is independent of $\nu$.

\section{Weak selection}\label{sec:weaksel}
We say that selection is weak if the process of natural selection between $A$ and $a$ approximates neutral drift. Weak selection is mathematically convenient because it allows the use of perturbative techniques; it is also biologically relevant since, for many systems of interest, mutations have a relatively small effect on reproductive success.

\subsection{Formalism} \label{sec:weakseldef}
To formalize the notion of weak selection, we introduce a selection strength parameter, $\delta$, which takes values in some half-open neighborhood $\left[0,\varepsilon\right)$ of zero. We consider a $\delta$-indexed family of replacement rules, subject to the following assumption:

\begin{assumption}[Assumptions for weak selection] For each replacement event $\left(R,\alpha\right)$, the probabilities $p_{\left(R,\alpha\right)}\left(\vx\right)$ satisfy the following:
	\label{ass:weak}
	\renewcommand{\labelenumi}{(\alph{enumi})}
	\begin{enumerate}
		\item $p_{\left(R,\alpha\right)}\left(\vx\right)$ varies smoothly with respect to $\delta\in\left[0,\varepsilon\right)$ for each state $\vx$;
		\item $p_{\left(R,\alpha\right)}\left(\vx\right)$ is independent of $\vx$ for $\delta =0$.
	\end{enumerate}
\end{assumption}

Part (b) guarantees that the replacement rule represents neutral drift when $\delta =0$. For now, we do not require that Assumption \ref{ass:neutral} be satisfied for all values of $\delta$. Assumption \ref{ass:neutral} will appear later as a condition of Corollary \ref{cor:delhatselweak} below.

The following proposition shows that, under Assumption \ref{ass:weak}, other fundamental quantities of interest also vary smoothly with respect to $\delta$:
\begin{proposition}\label{lem:piweak}
	For a $\delta$-indexed family of replacement rules, $\left\{p_{\left(R,\alpha\right)}\left(\vx\right)\right\}_{\left(R,\alpha\right)}$, satisfying Assumption \ref{ass:weak}, and for any mutational bias, $0<\nu <1$,
	\renewcommand{\labelenumi}{(\alph{enumi})}
	\begin{enumerate}
		\item $\rho_A$, $\rho_a$, and $\pi_\RMC\left(\vx\right)$ for each $\vx \in \left\{0,1\right\}^G$ are smooth functions of $\delta \in \left[0,\varepsilon\right)$;
		\item For each $\vx \in \left\{0,1\right\}^G$, $\pi_\MSS\left(\vx\right)$ extends uniquely to a smooth function of $\left(u,\delta\right) \in \left[0,1\right] \times \left[0,\varepsilon\right)$;
		\item For each $\vx \in \left\{0,1\right\}^G \setminus \left\{\va, \vA\right\}$, $\Prob_\MSS\left[\vX = \vx\ |\ \vX \notin \left\{\va, \vA\right\}\right]$ extends uniquely to a smooth function of $\left(u,\delta\right) \in \left[0,1\right] \times \left[0,\varepsilon\right)$.
	\end{enumerate}
\end{proposition}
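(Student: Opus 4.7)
The strategy is to write each quantity as the solution of a linear system with jointly smooth coefficients and then apply Cramer's rule, with invertibility in each case grounded in the Fixation Axiom. The foundation is that every transition probability $P_{\vx\to\vy}$ is a polynomial in $u$ (of degree at most $n$) whose coefficients are smooth functions of $\delta$, so $P_{\vx \to \vy}$ is jointly smooth on $[0,1]\times[0,\varepsilon)$; the same smoothness transfers immediately to $b_g(\vx)$, $d_g(\vx)$, $b(\vx)$, $\mu_A$, and $\mu_a$ (noting that $b(\va),b(\vA)>0$ throughout, by the Fixation Axiom).

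For (a), I would first handle $\rho_A$ by showing smoothness in $\delta$ of the $u=0$ absorption probabilities $\phi_A(\vx) \coloneqq \lim_{t\to\infty} P^{(t)}_{\vx\to\vA}$. These satisfy $\phi_A(\vA)=1$, $\phi_A(\va)=0$, and, on the transient states, a linear system $(I-Q)\phi_A = r$, where $Q$ is the sub-stochastic transition matrix among non-monoallelic states and $r$ records one-step transitions into $\vA$. Since the Fixation Axiom makes all non-monoallelic states transient at $u=0$, $I-Q$ is invertible at every $\delta\in[0,\varepsilon)$, so Cramer's rule gives $\phi_A$ smooth in $\delta$, and hence $\rho_A=\sum_g \mu_A(\mathbf{1}_{\{g\}})\phi_A(\mathbf{1}_{\{g\}})$ is smooth; $\rho_a$ is analogous. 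For $\pi_\RMC$, the proof of Theorem \ref{thm:RMCrecur} exhibits $\cM_\RMC$ as a Markov chain with a unique closed communicating class, so \eqref{eq:RMCrecur} together with normalization forms a linear system whose coefficient matrix is smooth in $\delta$ and invertible throughout $[0,\varepsilon)$, giving smoothness of $\pi_\RMC$.

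For (b), I would invoke the augmented system described after Theorem \ref{thm:lowu}: \eqref{eq:MSSsystem} together with \eqref{eq:augment}. The awkwardness is that \eqref{eq:augment} is written piecewise in $u$; to recover joint smoothness, observe that $Q_{\va\to\vA}$ and $Q_{\vA\to\va}$ are themselves rational functions of $u$ with simple zeros at $u=0$, so $Q_{\va\to\vA}/u$ and $Q_{\vA\to\va}/u$ extend to jointly smooth functions on $[0,1]\times[0,\varepsilon)$ with positive limiting values $\nu b(\va)\rho_A$ and $(1-\nu)b(\vA)\rho_a$ at $u=0$ (using part (a) for smoothness of $\rho_A,\rho_a$). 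Replacing \eqref{eq:augment} by the single smooth equation $(Q_{\vA\to\va}/u)\,\pi_\MSS(\vA) = (Q_{\va\to\vA}/u)\,\pi_\MSS(\va)$ yields a linear system whose coefficients are jointly smooth in $(u,\delta)$ and which, as the paper argues for fixed $\delta$, has a unique solution at every $(u,\delta)$ in the domain; Cramer's rule then delivers joint smoothness of $\pi_\MSS$.

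For (c), write the conditional probability as $\pi_\MSS(\vx)/(1-\pi_\MSS(\va)-\pi_\MSS(\vA))$. By (b) and Theorem \ref{thm:lowu}, both numerator (for $\vx\notin\{\va,\vA\}$) and denominator are jointly smooth on $[0,1]\times[0,\varepsilon)$ and vanish at $u=0$. Applying Hadamard's lemma (smooth division: any smooth function on $[0,1]\times[0,\varepsilon)$ that vanishes on $\{u=0\}$ equals $u$ times a smooth function) yields jointly smooth cofactors. Lemma \ref{lem:lowu} identifies the denominator's cofactor with $1/K>0$ at $u=0$, and ergodicity makes the denominator positive for $u>0$, so the denominator's cofactor is positive throughout the domain; the ratio of the two cofactors is therefore jointly smooth on $[0,1]\times[0,\varepsilon)$ and coincides with the original conditional probability for $u>0$. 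I expect the principal technical care to be needed in (c), specifically in invoking Hadamard's lemma to extract $u$ jointly in $(u,\delta)$ and in verifying positivity of the resulting cofactor across the whole half-open box; once this step is in place, the rest of the argument is routine.
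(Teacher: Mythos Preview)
Your treatment of (a) and (b) matches the paper's approach: each quantity is exhibited as the unique solution of a linear system with smoothly varying coefficients, and smoothness follows. You are in fact more explicit than the paper in (b), where you resolve the piecewise definition of Eq.~\eqref{eq:augment} by dividing through by $u$ and checking that $Q_{\va\to\vA}/u$ and $Q_{\vA\to\va}/u$ extend smoothly with positive limits; the paper simply asserts that the augmented system ``varies smoothly with $u$ and $\delta$.'' The genuine departure is in (c). The paper continues the linear-system strategy, using the recurrence \eqref{eq:conditionalrecur} for $u>0$ and \eqref{eq:RMCrecur} at $u=0$, together with normalization, and argues that this piecewise system has a unique solution for each $(u,\delta)$ which therefore varies smoothly; implicitly this requires that $Q_{\va\to\vx}$ and $Q_{\vA\to\vx}$ themselves extend smoothly across $u=0$ to the values $\mu_A(\vx)$ and $\mu_a(\vx)$, a point the paper does not check. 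Your route via Hadamard division sidesteps that issue entirely by working directly with the ratio $\pi_\MSS(\vx)/(1-\pi_\MSS(\va)-\pi_\MSS(\vA))$, using part (b) and Lemma~\ref{lem:lowu}. Both arguments are correct; the paper's is more uniform across the three parts and immediately identifies the $u=0$ extension with $\pi_\RMC$, while yours is cleaner in that it avoids any piecewise-defined system and rests only on results already in hand.
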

\begin{proof}
	We first observe that, from the definition of the evolutionary Markov chain and the formalism for weak selection, the transition probabilities $P_{\vx \to \vy}$ are smooth functions of $\left(u,\delta\right) \in \left[0,1\right] \times \left[0, \varepsilon\right)$.
	
	The fixation probabilities, $\rho_A$ and $\rho_a$, being absorption probabilities for a finite Markov chain, are bounded, rational functions of the transition probabilities (see, for example, Theorem 3.3.7 of \citealp{kemeny1960finite}), and are therefore  smooth functions of $\delta \in \left[0, \varepsilon\right)$.
	
	We turn now to the stationary probabilities $\pi_\MSS\left(\vx\right)$. The system of Eqs.~\eqref{eq:MSSsystem} and \eqref{eq:augment} define a unique continuous extension of $\pi_\MSS\left(\vx\right)$ to $0 \leqslant u \leqslant 1$. Since the equations of this system vary smoothly with $u$ and $\delta$ and the solution is  unique, this extension of $\pi_\MSS\left(\vx\right)$ is smooth in $\left(u,\delta\right) \in \left[0,1\right] \times \left[0, \varepsilon\right)$.
	
	The argument for $\Prob_\MSS\left[\vX = \vx\ |\ \vX \notin \left\{\va, \vA\right\}\right]$ is similar, except that the relevant system of equations is Eq.~\eqref{eq:conditionalrecur}---which is replaced by Eq.~\eqref{eq:RMCrecur} for $u=0$---together with the additional equation $\sum_{\vx \notin \left\{\va, \vA\right\}} \Prob_\MSS\left[\vX = \vx\ |\ \vX \notin \left\{\va, \vA\right\}\right] = 1$. This system of equations has a unique solution for each $\left(u,\delta\right) \in \left[0,1\right] \times \left[0, \varepsilon\right)$, which coincides with $\{\Prob_\MSS\left[\vX = \vx\ |\ \vX \notin \left\{\va, \vA\right\}\right]\}$ for $u>0$ and with $\{ \pi_\RMC\left(\vx\right) \}$ for $u=0$. Thus, for each $\vx \notin \left\{\va, \vA\right\}$, $\Prob_\MSS\left[\vX = \vx\ |\ \vX \notin \left\{\va, \vA\right\}\right]$ extends uniquely to a smooth function of $\left(u,\delta\right) \in \left[0,1\right] \times \left[0, \varepsilon\right)$, which coincides with $\pi_\RMC\left(\vx\right)$ at $u=0$.
\end{proof}

We will study weak selection as a perturbation of neutral drift ($\delta=0$). In formulating weak-selection expansions of various quantities, we will use a circle (${}^\circ$) to indicate the value at $\delta =0$ and a prime ($'$) to denote the first-order coefficient in $\delta$ as $\delta \to 0^+$. (In light of  Assumption \ref{ass:weak}b, this use of ${}^\circ$ is consistent with the previous use in Section \ref{sec:neutral}.) For example, we have the following weak selection expansions:
\begin{subequations}
	\begin{align}
		p_{\left(R,\alpha\right)}\left(\vx\right) & = p^\circ_{\left(R,\alpha\right)} + \delta p'_{\left(R,\alpha\right)}\left(\vx\right) + \mathcal{O}\left(\delta^2\right) ; \\
		\label{eq:piRMCweak}
		\pi_\RMC\left(\vx\right) & = \pi^\circ_\RMC\left(\vx\right) + \delta \pi'_\RMC\left(\vx\right) + \mathcal{O}\left(\delta^2\right) .
	\end{align}
\end{subequations}
We say that a statement holds \emph{under weak selection} if it holds to first order in $\delta$ as $\delta \to 0^+$. We deal only with first-order expansions here; for the mathematical theory of higher-order perturbations of a Markov chain, see \cite{silvestrov2017nonlinearly}.

\subsection{Success criteria for weak selection}
Turning now to quantities describing selection, we have the following weak-selection expansion of fitness:
\begin{equation}
	\label{eq:fitweak}
	w_g\left(\vx\right) = v_g + \delta w'_g\left(\vx\right) + \mathcal{O}\left(\delta^2\right) ,
\end{equation}
with
\begin{align}
	w'_g\left(\vx\right) & = \hat{b}_g'\left(\vx\right) - \hat{d}_g'\left(\vx\right) = \sum_{h \in G} \left(e_{gh}'\left(\vx\right) v_h - e_{hg}'\left(\vx\right) v_g \right).
\end{align}
Above and throughout this section, the reproductive values $v_g$ are understood to be computed at $\delta = 0$, and not to vary with $\delta$. We note that, in light of Eq.~\eqref{eq:fitsum}, $\sum_{g \in G} w_g' \left(\vx\right) = 0$ for each state, $\vx$.

For the RV-weighted change due to selection, $\delhatsel \left(\vx\right)$, we note that Theorem \ref{thm:delhatselneut} implies that $\delhatsel^\circ \left(\vx\right) =0$ for all states $\vx$. We therefore have the weak-selection expansion
\begin{equation}
	\label{eq:delhatselexpand}
	\delhatsel \left(\vx\right) = \delta \delhatsel' \left(\vx\right) + \mathcal{O}\left(\delta^2\right) ,
\end{equation}
with
\begin{align}\label{eq:delhatselweak}
	\delhatsel' \left(\vx\right) & = \sum_{g\in G}x_{g}w_{g}'\left(\vx\right) \nonumber \\
	& = \sum_{g,h \in G}x_g \left( e'_{gh}\left(\vx\right) v_h - e'_{hg}\left(\vx\right) v_g \right) \nonumber \\
	& = \frac{1}{2} \sum_{g,h \in G}\left(x_g - x_h\right) \left( e'_{gh}\left(\vx\right) v_h - e'_{hg}\left(\vx\right) v_g \right) .
\end{align}

Our second main result is a weak-selection analogue of Theorems \ref{thm:rhodelsel} and \ref{thm:rhodelselhat}. It proves the equivalence of four success criteria: one based on fixation probability, one based on expected frequency, and two based on change due to selection.
\begin{theorem}
	\label{thm:delhatselweak}
	For any replacement rule $\left\{p_{\left(R, \alpha\right)} \left(\vx\right) \right\}_{\left(R,\alpha\right)}$ satisfying Assumption \ref{ass:weak}, and any mutational bias $\nu$, the following success criteria are equivalent:
	\renewcommand{\labelenumi}{(\alph{enumi})}
	\begin{enumerate}
		\item $\rho_A > \rho_a$ under weak selection;
		\item $\displaystyle \lim_{u \to 0} \E_\MSS\left[x\right] > 
		\frac{\nu b\left(\va\right)}{\nu b\left(\va\right) + \left(1-\nu\right) b\left(\vA\right)} $ under weak selection;
		\item $\displaystyle  \E_\RMC^\circ\left[\delhatsel '\right] 
		> K\nu\left(1-\nu\right) \left( \hat{b}'\left(\mathbf{A}\right) -\hat{b}'\left(\mathbf{a}\right)
		- \frac{\hat{b}^{\circ}}{b^\circ}\left(b'\left(\mathbf{A}\right) -b'\left(\mathbf{a}\right)\right) \right)$;
		\item $\displaystyle \frac{d}{du} \, \E_\MSS^\circ\left[\delhatsel '\right]  \Big|_{u=0}
		> \nu\left(1-\nu\right) \left( \hat{b}'\left(\mathbf{A}\right) -\hat{b}'\left(\mathbf{a}\right)
		- \frac{\hat{b}^{\circ}}{b^\circ}\left(b'\left(\mathbf{A}\right) -b'\left(\mathbf{a}\right)\right) \right)$,
	\end{enumerate}
	with $K$ defined as in Eq.~\eqref{eq:Kdef}.
\end{theorem}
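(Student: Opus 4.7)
The plan is to treat the four equivalences in three stages, leveraging Theorems \ref{thm:rhodelsel}, \ref{thm:lowu}, \ref{thm:delhatselneut} and Lemma \ref{lem:lowu}, with the main work being a first-order expansion in $\delta$ around neutral drift.

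First, the equivalence (a) $\Leftrightarrow$ (b) is immediate from Theorem \ref{thm:rhodelsel}, which establishes it at every fixed $\delta \in [0,\varepsilon)$ without any weak-selection assumption. For (c) $\Leftrightarrow$ (d), I note that $\delhatsel'(\va) = 0$ trivially, while $\delhatsel'(\vA) = \sum_g w'_g(\vA) = 0$ follows by differentiating the identity $\sum_g w_g \equiv n$ (Eq.~\eqref{eq:fitsum}) in $\delta$. Applying Lemma \ref{lem:lowu} at $\delta = 0$ to the state function $\delhatsel'$ therefore gives $\E_\RMC^\circ[\delhatsel'] = K\,(d/du)\E_\MSS^\circ[\delhatsel']\big|_{u=0}$, and since $K > 0$ the two conditions differ only by this positive factor.

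The main step is (a) $\Leftrightarrow$ (c). Taking the MSS expectation of Eq.~\eqref{eq:mutselRVdecomp} and invoking stationarity $\E_\MSS[\delhat] = 0$ gives, for every $\delta$ and $u > 0$,
\begin{equation*}
\E_\MSS[\delhatsel] \;=\; u\,\E_\MSS\Big[\sum_{g \in G}(x_g - \nu)\hat{b}_g\Big].
\end{equation*}
Since $\delhatsel$ vanishes at $\va$ and $\vA$, Lemma \ref{lem:lowu} yields $\E_\RMC[\delhatsel] = K\lim_{u\to 0}\E_\MSS[\sum_g(x_g-\nu)\hat{b}_g]$, and Theorem \ref{thm:lowu} evaluates the limit explicitly as
\begin{equation*}
\E_\RMC[\delhatsel] \;=\; K\cdot\frac{\nu(1-\nu)\bigl[\hat{b}(\vA)\,b(\va)\,\rho_A - \hat{b}(\va)\,b(\vA)\,\rho_a\bigr]}{\nu\,b(\va)\,\rho_A + (1-\nu)\,b(\vA)\,\rho_a}.
\end{equation*}
This identity holds throughout the weak-selection range. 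At $\delta = 0$ both sides vanish: the LHS because $\delhatsel^\circ \equiv 0$ (Theorem \ref{thm:delhatselneut}), and the RHS because neutral drift forces $\rho_A = \rho_a = \hat{b}^\circ/(nb^\circ)$, $b(\va) = b(\vA) = b^\circ$, and $\hat{b}(\va) = \hat{b}(\vA) = \hat{b}^\circ$ (Theorem \ref{thm:neutfixprob}). Differentiating both sides in $\delta$ at $0$, applying the product rule to the three $\delta$-dependent factors in each numerator term, and simplifying using $\rho^\circ b^\circ = \hat{b}^\circ/n$, I obtain
\begin{equation*}
\E_\RMC^\circ[\delhatsel'] \;=\; K\,\nu(1-\nu)\Big\{\hat{b}'(\vA) - \hat{b}'(\va) - \frac{\hat{b}^\circ}{b^\circ}\bigl(b'(\vA) - b'(\va)\bigr) + n b^\circ(\rho_A' - \rho_a')\Big\}.
\end{equation*}
Since $K$, $\nu$, $(1-\nu)$, $n$, and $b^\circ$ are all positive, rearranging this equality shows that $\rho_A' > \rho_a'$ (condition (a) under weak selection) is equivalent to condition (c), completing the chain of equivalences.

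I expect the main obstacle to be the bookkeeping in this final expansion: one must distribute the product rule across the three $\delta$-dependent factors in each of the two numerator terms, keep track of the cancellation making both sides vanish at $\delta = 0$, and use the neutral-drift identity $\rho^\circ = \hat{b}^\circ/(nb^\circ)$ so the correction term emerges in precisely the form appearing in (c) and (d). Conceptually the proof is driven entirely by combining the earlier results; the novelty is algebraic.
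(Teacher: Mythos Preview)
Your proposal is correct and follows essentially the same route as the paper's proof: both derive the key identity $\E_\RMC[\delhatsel] = K\nu(1-\nu)\bigl(\hat{b}(\vA)b(\va)\rho_A - \hat{b}(\va)b(\vA)\rho_a\bigr)/\bigl(\nu b(\va)\rho_A + (1-\nu)b(\vA)\rho_a\bigr)$ via Lemma~\ref{lem:lowu} and Theorem~\ref{thm:lowu}, differentiate it in $\delta$ at $0$ using $\delhatsel^\circ\equiv 0$ to obtain the formula linking $\E_\RMC^\circ[\delhatsel']$ to $\rho_A'-\rho_a'$, and close the loop through Lemma~\ref{lem:lowu} applied to $\delhatsel'$. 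Your invocation of Theorem~\ref{thm:rhodelsel} for (a)$\Leftrightarrow$(b) is a legitimate shortcut over the paper's direct recomputation; otherwise the arguments coincide.
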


Conditions (a) and (b) above are the weak-selection versions of the corresponding criteria in Theorem \ref{thm:rhodelsel}. Conditions (c) and (d) involve expectations of $\delhatsel '\left(\vx\right)$ over the neutral RMC and MSS distributions, respectively. However, these latter conditions also involve terms on the right-hand side that were not seen in our previous results. To gain intuition for these additional terms, it is helpful to note that Condition (c) can be rewritten as
\begin{equation}
	\label{eq:weakrewrite}
	\E_\RMC^\circ\left[\delhatsel '\right]  
	+ Kb^\circ \nu\left(1-\nu\right) \frac{d}{d\delta} \left( \frac{\hat{b}\left(\va\right)}{b\left(\va\right)} - \frac{\hat{b}\left(\vA\right)}{b\left(\vA\right)} \right)\bigg|_{\delta=0} > 0.
\end{equation}
Eq.~\eqref{eq:weakrewrite} reveals that $\E_\RMC^\circ\left[\delhatsel '\right]$ captures only part of the effects of weak selection on allele $A$. The other part, represented by the second term in Eq.~\eqref{eq:weakrewrite}, has to do with the average reproductive value of offspring created in the monoallelic states. For example, the average reproductive value of new offspring in state $\va$ is $\hat{b}\left(\va\right) /b\left(\va\right)$. If this quantity increases with $\delta$, new $A$-mutants  arising in state $\va$ will have additional reproductive value for $\delta>0$, relative to the neutral drift ($\delta =0$) case. Such effects are accounted for in the second term of Eq.~\eqref{eq:weakrewrite}, or equivalently, in the right-hand sides of Conditions (c) and (d). In short, fitness-based quantities such as $\delhatsel$ account for only part of the direction of selection. This phenomenon is discussed in detail by \cite{tarnita2014measures}, who also proved the equivalence of (b) and (d) in the special case that $b(\va) = b(\vA)$ for all $\delta \geqslant 0$.

We also observe that Condition (b) above involves two limits, $u \to 0$ and $\delta \to 0$. These limits can be freely interchanged according to Proposition \ref{lem:piweak}, so there is no concern regarding limit orderings.

\begin{proof}[Proof of Theorem \ref{thm:delhatselweak}]
	We first show the equivalence of (a) and (b). Theorem \ref{thm:lowu} gives
	\begin{equation}
		\label{eq:EMSSx}
		\lim_{u \to 0} \E_\MSS\left[x\right] = \frac{\nu b\left(\va\right)\rho_{A}}{ \nu b\left(\va\right) \rho_{A}+\left(1-\nu\right) b\left(\vA\right)\rho_{a}} .
	\end{equation}
	Note that for $\delta =0$, we have $\rho_A = \rho_a = \hat{b}^{\circ}/(n b^\circ)$ by Theorem \ref{thm:neutfixprob}, and both sides of Condition (b) become equal to $\nu$. It therefore suffices to show that the first $\delta$-derivatives of $\rho_A - \rho_a$ and $\lim_{u \to 0} \E_\MSS\left[x\right]- \nu b\left(\va\right) /\left(\nu b\left(\va\right) + \left(1-\nu\right) b\left(\vA\right)\right)$ have the same sign at $\delta = 0$. Applying Eq.~\eqref{eq:EMSSx}, we compute:
	\begin{align}
		\frac{d}{d\delta} &\left( \lim_{u \to 0} \E_\MSS\left[x\right] 
		-  \frac{\nu b\left(\va\right)}{\nu b\left(\va\right) + \left(1-\nu\right) b\left(\vA\right)} \right)\Big|_{\delta=0} \nonumber \\
		& = \nu\left(1-\nu\right) \left( \frac{nb^\circ }{\hat{b}^\circ} \left( \rho_A' - \rho_a'\right)
		+ \frac{b'\left(\va\right) - b'\left(\vA\right)}{b^\circ} \right) - \nu\left(1-\nu\right) \frac{b'\left(\va\right) - b'\left(\vA\right)}{b^\circ} \nonumber \\
		& = \nu\left(1-\nu\right) \frac{nb^\circ }{\hat{b}^\circ} \left( \rho_A' - \rho_a'\right) .
	\end{align}
	This shows (a) $\Leftrightarrow$ (b). We now show (a) $\Leftrightarrow$ (c). From Eq.~\eqref{eq:RMCMSS}, we have
	\begin{align}
		\nonumber
		\E_\RMC \left[  \delhatsel \right ] & = K \lim_{u \to 0}\E_\MSS \left [ \sum_{g \in G} \left(x_g-\nu\right) \hat{b}_g \right ] \nonumber \\
		& = K \left(\left(1-\nu\right)\hat{b}\left(\vA\right)\lim_{u\rightarrow 0}\pi_{\MSS}\left(\vA\right) -\nu\hat{b}\left(\va\right) \lim_{u\rightarrow 0}\pi_{\MSS}\left(\va\right) \right) \nonumber \\
		& = K \nu \left(1-\nu\right) \frac{\hat{b}\left(\vA\right) b\left(\va\right) \rho_A - \hat{b}\left(\va\right) b\left(\vA\right) \rho_a}{\nu b\left(\va\right)\rho_{A}+\left(1-\nu\right) b\left(\vA\right)\rho_{a}} ,
	\end{align}
	where the last line comes from Theorem~\ref{thm:lowu}. Differentiating both sides with respect to $\delta$ at $\delta =0$ gives
	\begin{multline}
		\label{eq:delhatselderiv1}
		\frac{d}{d\delta} \, \mathbb{E}_{\RMC}\left[\delhatsel\right] \Big\vert_{\delta =0} \\
		= K\nu\left(1-\nu\right) \left(nb^{\circ}\left(\rho_{A}'-\rho_{a}'\right) +\hat{b}'\left(\vA\right) -\hat{b}'\left(\va\right)
		- \frac{\hat{b}^{\circ}}{b^{\circ}}\left(b'\left(\vA\right) -b'\left(\va\right)\right)\right).
	\end{multline}
	We rewrite the left-hand side of this equation as
	\begin{align}
		\frac{d}{d\delta} \, \mathbb{E}_{\RMC}\left[\delhatsel\right] \Big\vert_{\delta =0}
		&= \sum_{\vx \in \left\{0,1\right\}^G\setminus\left\{\va ,\vA\right\}} 
		\frac{d}{d\delta} \left( \pi_\RMC\left(\vx\right) \, \delhatsel \left(\vx\right) \right) \Big\vert_{\delta =0} \nonumber \\
		& = \sum_{\vx \in \left\{0,1\right\}^G\setminus\left\{\va ,\vA\right\}} 
		\pi_\RMC^\circ\left(\vx\right) \delhatsel'\left(\vx\right) \nonumber \\
		& = \mathbb{E}_{\RMC}^\circ\left[\delhatsel'\right] . \label{eq:delhatselderiv2}
	\end{align}
	In the second line above, all terms of the form $\pi_\RMC'\left(\vx\right)\delhatsel^\circ\left(\vx\right)$ vanish since $\delhatsel^\circ\left(\vx\right) =0$ for all states $\vx$ by Theorem \ref{thm:delhatselneut}. Combining Eqs.~\eqref{eq:delhatselderiv1} and \eqref{eq:delhatselderiv2} gives the equivalence of (a) and (c). Finally, (a) and (d) are equivalent by Lemma \ref{lem:lowu}, completing the proof.
\end{proof}

If the conditions of Theorem \ref{thm:delhatselweak} are satisfied, we say that allele $A$ is \emph{favored under weak selection}. The power of Theorem \ref{thm:delhatselweak} lies in the fact that, for many models of interest, Conditions (c) and (d) are easier to evaluate than (a) and (b) because (c) and (d) involve expectations taken at neutrality ($\delta=0$), meaning that the probabilities of replacement events are independent of the state. At neutrality, the recurrence relations governing the RMC and MSS distributions simplify greatly, in many cases allowing conditions (c) or (d) to be simplified to closed form (see Section \ref{sec:examples} for examples).

Still, Conditions (c) and (d) are not as simple as one might hope. Evaluating them requires computing the full value (not just the sign) of either $\frac{d}{du} \E_\MSS^\circ\left[\delhatsel '\right]\big|_{u=0}$ or both $\E_\RMC^\circ\left[\delhatsel '\right] $ and $K$. However, if we assume that the replacement probabilities in the monoallelic states satisfy Assumption \ref{ass:neutral} and are independent of $\delta$, only the sign of $\frac{d}{du} \E_\MSS^\circ\left[\delhatsel '\right]\big|_{u=0}$ or $\E_\RMC^\circ\left[\delhatsel '\right]$ is needed, as shown below:

\begin{corollary}
	\label{cor:delhatselweak}
	Let $\left\{p_{\left(R, \alpha\right)} \left(\vx\right) \right\}_{\left(R,\alpha\right)}$ be a replacement rule satisfying Assumption \ref{ass:weak}. Suppose that $p_{\left(R,\alpha\right)}\left(\va\right)$ and $p_{\left(R,\alpha\right)}\left(\vA\right)$ are equal to each other and independent of $\delta$, for all replacement events $\left(R,\alpha\right)$ and all sufficiently small $\delta \geqslant 0$. Then, for any mutational bias $\nu$, the following success criteria are equivalent:
	\renewcommand{\labelenumi}{(\alph{enumi})}
	\begin{enumerate}
		\item $\rho_A > \rho_a$ under weak selection;
		\item $\lim_{u \to 0} \E_\MSS\left[ x \right] \; (= \lim_{u \to 0} \E_\MSS\left[ \hat{x} \right]) > \nu$ under weak selection;
		\item $\E_\RMC^\circ \left[ \delhatsel' \right] >0$;
		\item $\frac{d}{du} \E_\MSS^\circ\left[\delhatsel '\right]\big|_{u=0} > 0$.
	\end{enumerate}
\end{corollary}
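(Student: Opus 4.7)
The plan is to derive this corollary as a direct specialization of Theorem \ref{thm:delhatselweak}, in which the added hypotheses force the right-hand sides of conditions (c) and (d) of that theorem to collapse to zero. First I would observe that the assumption $p_{(R,\alpha)}(\va) = p_{(R,\alpha)}(\vA)$ for every replacement event is precisely Assumption \ref{ass:neutral}, and it holds for every $\delta$ in the weak-selection range, so reproductive values are well-defined throughout and in fact independent of $\delta$. Because the monoallelic replacement probabilities are themselves independent of $\delta$, the derived quantities $b(\va)$, $b(\vA)$, $\hat{b}(\va)$, and $\hat{b}(\vA)$ are constant functions of $\delta$; they satisfy $b(\va) = b(\vA) = b^\circ$ and $\hat{b}(\va) = \hat{b}(\vA) = \hat{b}^\circ$ identically, so their $\delta$-derivatives all vanish:
\begin{equation}
b'(\va) = b'(\vA) = 0, \qquad \hat{b}'(\va) = \hat{b}'(\vA) = 0.
\end{equation}

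Next, substituting these vanishing derivatives into conditions (c) and (d) of Theorem \ref{thm:delhatselweak} immediately reduces the right-hand sides to zero, recovering conditions (c) and (d) of the corollary. For condition (b), the equality $b(\va) = b(\vA)$ reduces the threshold $\nu b(\va)/(\nu b(\va) + (1-\nu) b(\vA))$ to $\nu$. To justify the parenthetical identity $\lim_{u \to 0} \E_\MSS[x] = \lim_{u \to 0} \E_\MSS[\hat{x}]$, I would invoke Theorem \ref{thm:lowu} to conclude that the limiting MSS distribution is supported on $\{\va, \vA\}$, and then note that the normalization $\sum_{g \in G} v_g = n$ forces $x = \hat{x}$ at each monoallelic state (both equal $0$ at $\va$ and $1$ at $\vA$), so the two limits coincide.

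The four equivalences (a) $\Leftrightarrow$ (b) $\Leftrightarrow$ (c) $\Leftrightarrow$ (d) of the corollary then follow directly from the corresponding equivalences already established in Theorem \ref{thm:delhatselweak}. I do not expect a serious obstacle: the hypotheses of this corollary were evidently tailored to eliminate precisely the correction terms that complicate Theorem \ref{thm:delhatselweak}, so the argument amounts to routine bookkeeping once those vanishings are noted. The only point requiring any care is to distinguish between the expressions whose $\delta$-dependence collapses on the monoallelic states (those appearing in the correction terms) and quantities such as $\E_\MSS^\circ[\delhatsel']$ and $\E_\RMC^\circ[\delhatsel']$, whose evaluation still involves the full $\delta$-dependence of the replacement rule at non-monoallelic states.
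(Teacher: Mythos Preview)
Your proposal is correct and matches the paper's approach exactly: the paper states only that ``the proof follows immediately from Theorem \ref{thm:delhatselweak},'' and you have supplied precisely the routine bookkeeping that makes this immediate—namely, that the $\delta$-independence of the monoallelic replacement probabilities forces $b'(\va)=b'(\vA)=\hat{b}'(\va)=\hat{b}'(\vA)=0$, collapsing the correction terms in conditions (c) and (d), while $b(\va)=b(\vA)$ reduces the threshold in (b) to $\nu$. Your justification of the parenthetical $\lim_{u\to 0}\E_\MSS[x]=\lim_{u\to 0}\E_\MSS[\hat{x}]$ via Theorem \ref{thm:lowu} and the normalization $\sum_g v_g=n$ is also correct and is the intended reading.
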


The proof follows immediately from Theorem \ref{thm:delhatselweak}. Aspects of this result, in the special case that all sites have the same reproductive value, were proven in Theorem 2 of \cite{Eusociality} and in Eq.~(16) of \cite{van2015social}. Additionally, a number of instances of this result have been obtained for particular models \citep{RoussetBilliard,leturque2002dispersal,LessardFixation,TaylorFixProb,AntalPhenotype,wakano2013mathematical,debarre2014social,allen2017evolutionary}.

\section{Results for individuals}\label{sec:ind}
All of the above results apply equally to haploid, diploid, haplodiploid, or polyploid populations, because the analysis is based on genetic sites rather than individuals. However, any formalism for natural selection must reserve a special role for the individual, as the entity that carries and is shaped by its genetic material. Here, we connect the individual-level and gene-level perspectives with definitions and results that apply at the level of the individual.

We recall from Section \ref{sec:sites} that the set of genetic sites, $G$, is partitioned into a collection, $\left\{G_i \right\}_{i \in I}$, where $G_i$ is the set of sites residing in individual $i \in I$. The ploidy of individual $i$ is $n_i \coloneqq \left| G_i\right|$. The notation $g \sim h$ indicates that sites $g$ and $h$ reside in the same individual.

\subsection{Assumptions}
Moving to an individual-level perspective requires the introduction of additional assumptions. First, we assume that an individual's alleles survive or die all together (along with the individual itself). Thus, if one of an individual's genetic sites is replaced, then all of them are. This excludes the possibility that, for example, a virus causes a germline mutation in one of a diploid individual's alleles.

\begin{assumption}[Coherence of individuals]\label{ass:ind}
	If genetic sites $g,h \in G$ reside in the same individual, $g \sim h$, then for each state $\vx \in \left\{0,1\right\}^G$ and each replacement event $\left(R,\alpha\right)$ with $p_{\left(R,\alpha\right)}\left(\vx\right) >0$, either $g,h \in R$ or $g,h \notin R$.
\end{assumption}

The second assumption is that meiosis is fair: each of the alleles in an individual is equally likely to be passed on to offspring when this individual reproduces. This assumption excludes the possibility of meiotic drive \citep{sandler1957meiotic,lindholm2016ecology}. 

\begin{assumption}[Fair meiosis]\label{ass:fairmeiosis}
	Let $\left(R,\alpha_1\right)$ and $\left(R, \alpha_2\right)$ be replacement events with the same set $R \subseteq G$ of replaced individuals. If $\alpha_1\left(g\right) \sim \alpha_2\left(g\right)$ for all $g \in R$, then $p_{\left(R, \alpha_1\right)}\left(\vx\right) = p_{\left(R,\alpha_2\right)}\left(\vx\right)$ for each state $\vx \in \left\{0,1\right\}^G$.
\end{assumption}

In words, the probability of a replacement event depends only on the (individual) parent of each replaced site, not on which allele is inherited from that parent.

\subsection{Fitness and selection at the individual level}
As immediate consequences of Assumptions \ref{ass:ind} and \ref{ass:fairmeiosis}, we see that sites residing in the same individual have the same birth rate, death rate, reproductive value, and fitness:
\begin{lemma}
	\label{lem:indsame}
	Suppose the replacement rule $\left\{p_{\left(R,\alpha\right)}\left(\vx\right) \right\}_{\left(R, \alpha\right)}$ satisfies Assumptions \ref{ass:ind} and \ref{ass:fairmeiosis}. If sites $g, h \in G$ reside in the same individual, $g \sim h$, then $d_g\left(\vx\right) =d_h\left(\vx\right)$ and $e_{g \ell}\left(\vx\right) =e_{h \ell}\left(\vx\right)$ for each state $\vx \in \left\{0,1\right\}^G$ and each site $\ell \in G$. If, furthermore, Assumption \ref{ass:neutral} holds, then $v_{g}=v_{h}$ and $w_{g}\left(\vx\right) =w_{h}\left(\vx\right)$ for each $\vx \in \left\{0,1\right\}^G$.
	\renewcommand{\labelenumi}{(\alph{enumi})}
\end{lemma}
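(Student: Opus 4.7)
The plan is to prove the four conclusions in sequence, exploiting the fact that each follows from the preceding ones together with one of the assumptions. The underlying intuition is that Assumptions \ref{ass:ind} and \ref{ass:fairmeiosis} together force all gene-level quantities associated to a single individual to be symmetric under permutation of the sites within that individual.

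First, I would show $d_g(\vx) = d_h(\vx)$ directly from Assumption \ref{ass:ind}. By definition \eqref{eq:ddef}, $d_g(\vx)$ sums $p_{(R,\alpha)}(\vx)$ over replacement events with $g \in R$. Assumption \ref{ass:ind} implies that every such event (with positive probability) also satisfies $h \in R$, and conversely. Hence the indexing sets coincide (up to probability-zero events) and the two sums are equal. Next, for $e_{g\ell}(\vx) = e_{h\ell}(\vx)$, I would construct an explicit bijection between $\{(R,\alpha) : \alpha(\ell) = g\}$ and $\{(R,\alpha) : \alpha(\ell) = h\}$ by sending $(R,\alpha)$ to $(R,\alpha')$ where $\alpha'(\ell) = h$ and $\alpha'(k) = \alpha(k)$ for $k \in R \setminus \{\ell\}$. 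Since $\alpha(\ell) = g \sim h = \alpha'(\ell)$ and $\alpha(k) = \alpha'(k)$ elsewhere, Assumption \ref{ass:fairmeiosis} yields $p_{(R,\alpha)}(\vx) = p_{(R,\alpha')}(\vx)$, so the summed probabilities agree.

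For $v_g = v_h$ under the additional Assumption \ref{ass:neutral}, I would appeal to the uniqueness half of Proposition \ref{lem:RVunique}. Apply the first two parts of the lemma in a monoallelic state to obtain $d_g^\circ = d_h^\circ$ and $e_{g\ell}^\circ = e_{h\ell}^\circ$ for every $\ell \in G$. Subtracting the defining equation \eqref{eq:vrecur} for $g$ from that for $h$ then gives
\begin{equation}
d_g^\circ (v_g - v_h) = \sum_{\ell \in G} (e_{g\ell}^\circ - e_{h\ell}^\circ)\, v_\ell = 0,
\end{equation}
and since $d_g^\circ > 0$ by the Fixation Axiom, we conclude $v_g = v_h$. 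Finally, $w_g(\vx) = w_h(\vx)$ is immediate from the definition \eqref{eq:fitdef}, together with the equalities of $v_g$, $d_g(\vx)$, and $e_{g\ell}(\vx)$ already established (the latter giving $\hat{b}_g(\vx) = \hat{b}_h(\vx)$).

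The only step that requires any genuine care is the bijection argument for $e_{g\ell}$: one needs to verify that the map $\alpha \mapsto \alpha'$ is well-defined on the appropriate domain (in particular that $\ell \in R$ whenever the summand is nonzero) and that it is an involution, hence a bijection, when $g$ and $h$ are interchanged. Everything else is routine substitution.
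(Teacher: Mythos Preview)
Your proposal is correct and follows essentially the same approach as the paper's proof: Assumption~\ref{ass:ind} gives $d_g(\vx)=d_h(\vx)$, an implicit bijection via Assumption~\ref{ass:fairmeiosis} gives $e_{g\ell}(\vx)=e_{h\ell}(\vx)$, and then the recurrence~\eqref{eq:vrecur} together with these equalities yields $v_g=v_h$ and hence $w_g(\vx)=w_h(\vx)$. One small remark: you announce that you will appeal to the uniqueness part of Proposition~\ref{lem:RVunique}, but your actual argument---subtracting the two instances of~\eqref{eq:vrecur} and using $d_g^\circ>0$---does not use uniqueness at all, and neither does the paper (which simply writes $v_g=\tfrac{1}{d_g^\circ}\sum_\ell e_{g\ell}^\circ v_\ell=\tfrac{1}{d_h^\circ}\sum_\ell e_{h\ell}^\circ v_\ell=v_h$); you may want to drop that sentence.
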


\begin{proof}
	Fix sites $g,h \in G$ with $g \sim h$. Assumption \ref{ass:ind} and Eq.~\eqref{eq:ddef} imply that that $d_{g}\left(\vx\right) =d_{h}\left(\vx\right)$ for all states $\vx \in \left\{0,1\right\}^G$. Assumption \ref{ass:fairmeiosis} implies that 
	\begin{equation}
		\sum_{\substack{\left(R,\alpha\right) \\ \alpha\left(\ell\right) =g}} p_{\left(R,\alpha\right)}\left(\vx\right) 
		= \sum_{\substack{\left(R,\alpha\right) \\ \alpha\left(\ell\right) =h}} p_{\left(R,\alpha\right)}\left(\vx\right) ,
	\end{equation}
	which is equivalent to $e_{g\ell}\left(\vx\right) = e_{h\ell}\left(\vx\right)$.
	
	If Assumption \ref{ass:neutral} holds, then the above arguments imply that $d_g^\circ = d_h^\circ$ and $e_{g\ell}^{\circ}=e_{h\ell}^{\circ}$ for all $\ell \in G$. It follows that $v_{g}=\frac{1}{d_{g}^{\circ}}\sum_{\ell\in G}e_{g\ell}^{\circ}v_{\ell}=\frac{1}{d_{h}^{\circ}}\sum_{\ell\in G}e_{h\ell}^{\circ}v_{\ell}=v_{h}$, as desired.
	
	The definitions in Eqs.~\eqref{eq:bhat} and \eqref{eq:dhat} now imply that $\hat{b}_g\left(\vx\right) =\hat{b}_h\left(\vx\right)$ and $\hat{d}_g\left(\vx\right) =\hat{d}_h\left(\vx\right)$ for all states $\vx$, from which it follows that $w_g\left(\vx\right) =w_h\left(\vx\right)$ for all $\vx$.
\end{proof}

Moving to individual-level quantities, we can identify the type of an individual $i \in I$, in state $\vx \in \left\{0,1\right\}^G$, by its fraction of $A$ alleles:
\begin{equation}
	\label{eq:Xdef}
	X_i \coloneqq \frac{1}{n_i} \sum_{g \in G_i} x_g.
\end{equation}
For example, a diploid heterozygote (genotype $Aa$) has $X_i = 1/2$. The reproductive value and fitness of individual $i \in I$ are defined by summing the corresponding quantities over all sites in $I$:
\begin{subequations}
	\label{eq:Wdef}
	\begin{align}
		V_i & \coloneqq \sum_{g \in G_i} v_g ; \\
		W_i\left(\vx\right) & \coloneqq \sum_{g \in G_i} w_g\left(\vx\right) .
	\end{align}
\end{subequations}

If Assumptions \ref{ass:neutral}, \ref{ass:ind}, and \ref{ass:fairmeiosis} hold, then all sites in the same individual have the same reproductive value and fitness according to Lemma \ref{lem:indsame}, and it follows that $V_i = n_i v_g$ and $W_i\left(\vx\right) = n_i w_g\left(\vx\right)$ for any $g \in G_i$.

We can use the above definitions to express the RV-weighted change due to selection, $\delhatsel$, using only quantities that apply at the level of the individual:

\begin{proposition}
	\label{prop:delselind}
	For any replacement rule satisfying Assumptions \ref{ass:neutral}, \ref{ass:ind}, and \ref{ass:fairmeiosis}, and any state $\vx$,
	\begin{equation}
		\delhatsel\left(\vx\right) = \sum_{i \in I} X_i \left(W_i\left(\vx\right) -V_i\right) .
	\end{equation}
\end{proposition}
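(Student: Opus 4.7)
The plan is to start from the second form of the expansion in Eq.~\eqref{eq:delhatsel}, namely
\begin{equation*}
\delhatsel(\vx) = \sum_{g \in G} x_g\bigl(w_g(\vx) - v_g\bigr),
\end{equation*}
and regroup the sum over genetic sites into a double sum over individuals $i \in I$ and the sites $g \in G_i$ residing in $i$, using the partition $\{G_i\}_{i \in I}$ of $G$. The goal is then to collapse the inner sum into the individual-level quantities $X_i$, $V_i$, and $W_i(\vx)$.

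The key input is Lemma \ref{lem:indsame}, which requires all three Assumptions \ref{ass:neutral}, \ref{ass:ind}, and \ref{ass:fairmeiosis}. It tells us that for each individual $i \in I$, all sites $g \in G_i$ share a common reproductive value and a common fitness in state $\vx$. By the definitions in Eq.~\eqref{eq:Wdef}, these common values are necessarily $v_g = V_i/n_i$ and $w_g(\vx) = W_i(\vx)/n_i$ for every $g \in G_i$.

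The computation then proceeds as follows. Substituting the common per-site values into the double sum yields
\begin{equation*}
\delhatsel(\vx) = \sum_{i \in I} \sum_{g \in G_i} x_g \cdot \frac{W_i(\vx) - V_i}{n_i} = \sum_{i \in I} \frac{W_i(\vx) - V_i}{n_i} \sum_{g \in G_i} x_g,
\end{equation*}
and the inner sum equals $n_i X_i$ by the definition of $X_i$ in Eq.~\eqref{eq:Xdef}. The factors of $n_i$ cancel, leaving exactly $\sum_{i \in I} X_i \bigl(W_i(\vx) - V_i\bigr)$, which is the claimed identity.

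There is no real obstacle here; the proposition is essentially a bookkeeping statement that says the three assumptions suffice to make fitness and reproductive value constant across the sites within an individual, so the gene-level weighted change due to selection can be rewritten entirely in individual-level terms. The only thing to be careful about is citing Lemma \ref{lem:indsame} (rather than re-deriving the within-individual equalities) and noting explicitly where each of Assumptions \ref{ass:neutral}, \ref{ass:ind}, and \ref{ass:fairmeiosis} enters through that lemma.
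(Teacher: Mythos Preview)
Your proposal is correct and takes essentially the same approach as the paper: both arguments use Lemma~\ref{lem:indsame} to replace the per-site values $w_g(\vx)$ and $v_g$ by the common values $W_i(\vx)/n_i$ and $V_i/n_i$ within each individual, then collapse the sum using $\sum_{g\in G_i} x_g = n_i X_i$. The only cosmetic difference is direction---the paper starts from $\sum_{i\in I} X_i(W_i(\vx)-V_i)$ and expands to reach $\delhatsel(\vx)$, whereas you start from $\delhatsel(\vx)$ and contract---but the content is identical.
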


\begin{proof}
	For each individual $i \in I$ we calculate
	\begin{align}
		X_i \left(W_i\left(\vx\right) -V_i\right) 
		& = \left( \frac{1}{n_i} \sum_{g \in G_i} x_g \right) \left( \sum_{h \in G_i} w_h(\vx) -   \sum_{h \in G_i} v_h\right) \nonumber \\
		& = \left( \sum_{g \in G_i} x_g \right) \left(  \frac{1}{n_i} \sum_{h \in G_i} \left(w_h\left(\vx\right) -v_h\right) \right) .
	\end{align}
	Lemma \ref{lem:indsame} implies that $\frac{1}{n_i} \sum_{h \in G_i} \left(w_h\left(\vx\right) -v_h\right) =w_g\left(\vx\right) -v_g$ for any $g \in G_i$; thus, the right-hand side above is equal to $\sum_{g \in G_i} x_g \left(w_g\left(\vx\right) -v_g\right)$. Now summing over all individuals $i \in I$ we have
	\begin{align}
		\sum_{i \in I} X_i \left(W_i\left(\vx\right) -V_i\right) & = \sum_{i \in I} \sum_{g \in G_i} x_g \left(w_g\left(\vx\right) -v_g\right) \nonumber \\
		& = \sum_{g \in G} x_g \left(w_g\left(\vx\right) -v_g\right) \nonumber \\
		& = \delhatsel\left(\vx\right) ,
	\end{align}
	as desired.
\end{proof}

We can use Proposition \ref{prop:delselind} to restate the criteria for success in Theorems \ref{thm:rhodelselhat} and \ref{thm:delhatselweak} using individual-level quantities. For example, if Assumptions \ref{ass:neutral}, \ref{ass:ind}, and \ref{ass:fairmeiosis}) hold, then $A$ is favored by selection if and only if $\sum_{i \in I} \E_\RMC \left [  X_i \left(W_i -V_i\right)\right] > 0$. Likewise, if Assumptions \ref{ass:weak}, \ref{ass:ind}, \ref{ass:fairmeiosis}, and the assumptions of Corollary \ref{cor:delhatselweak} hold, then $A$ is favored under weak selection if and only if $\sum_{i \in I} \E_\RMC^\circ \left [ X_i W_i' \right] > 0$.

If Assumptions \ref{ass:ind} and \ref{ass:fairmeiosis} do not hold, then sites in the same individual may have different reproductive value and/or fitness. These differences would not be reflected in the individual-level quantities $V_i$ and $W_i$; thus, the criteria for success may not be expressible in terms of these quantities.

\section{Examples} \label{sec:examples}
We illustrate the application of our formalism to two examples: evolutionary games on an arbitrary weighted graph \citep{allen2017evolutionary}, and a haplodiploid population in which alleles may affect males and females differently. In each case, we show how Conditions (c) and (d) of Corollary \ref{cor:delhatselweak} can be evaluated to obtain tractable conditions for success under natural selection.

\subsection{Games on graphs}\label{sec:games}
Evolutionary games on graphs \citep{NowakMay,blume1993statistical,SantosScaleFree,Ohtsuki,SzaboFath,chen2013sharp,allen2014games,debarre2014social,pena2016evolutionary} are a well-studied mathematical model for the evolution of social behavior. Individuals play a game with neighbors, and payoffs from this game determine reproductive success. Analytical results were first obtained for regular graphs \citep{Ohtsuki,Taylor,cox2013voter,chen2013sharp,allen2014games,debarre2014social,durrett2014spatial,pena2016evolutionary} and have recently been extended to arbitrary weighted graphs \citep{allen2017evolutionary,fotouhi2018conjoining}.

\subsubsection{Model}
Population structure is represented as a weighted (undirected) graph $G$, which we assume to be connected. Each vertex is always occupied by a single haploid individual (see Fig.~\ref{fig:graphs}). The edge weight between vertices $g,h \in G$, denoted $\omega_{gh} \geqslant 0$, indicates the strength of spatial relationship between these vertices. It is helpful to define the \emph{weighted degree} of vertex $g$ as $\omega_g = \sum_{h \in G} \omega_{gh}$. The random-walk step probability from $g$ to $h$ is denoted $p_{gh}=\omega_{gh}/\omega_g$. The probability that an $m$-step random walk from $g$ terminates at $h$ is denoted $p_{gh}^{\left(m\right)}$.

\begin{figure}
	\centering
	\includegraphics[width=\textwidth]{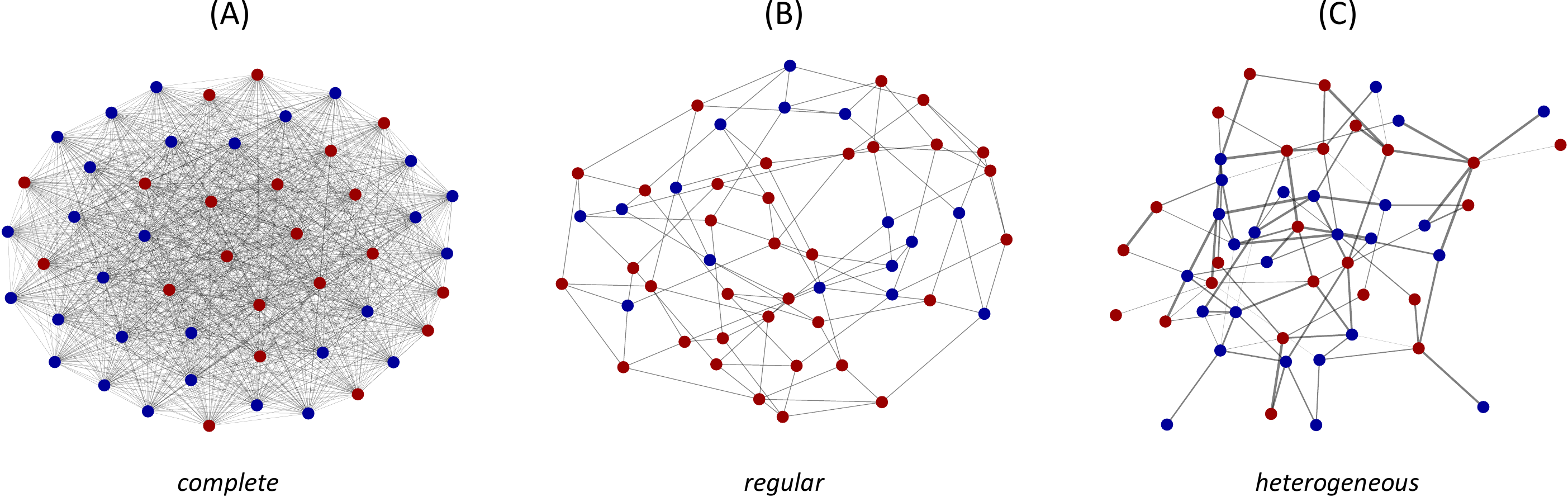}
	\caption{Three examples of graph-structured populations. In each population, blue indicates that a location is occupied by the allele $A$, while red indicates the allele at that location is $a$. (A) A complete graph, wherein each player is a neighbor of (i.e. shares a link with) every other player in the population. (B) A (non-complete) regular graph, for which all individuals have the same number of neighbors ($4$, in this instance). (C) A weighted heterogeneous graph, for which both the number of neighbors and the weights of the connections (given by the shading of the links) may vary from player to player. Historically, the analysis of evolutionary games on graphs has proceeded in order of increasing asymmetry, from (A) to (B) to (C).\label{fig:graphs}}
\end{figure}

In each state of the process, each individual interacts with each of its neighbors according to a game a $2 \times 2$ matrix game of the form
\begin{equation}
	\label{eq:game}
	\bordermatrix{%
		& A & a \cr
		A &\ f_{AA} & \ f_{Aa} \cr
		a &\ f_{aA} & \ f_{aa} \cr
	} .
\end{equation}

In each state $\vx$, each individual $g$ retains the edge-weighted average payoff it receives from neighbors, given by
\begin{equation}
	\label{eq:payoff}
	f_g (\vx) = \sum_{\ell \in G} p_{g\ell} \left(f_{AA} x_g x_\ell + f_{Aa} x_g \left(1-x_\ell\right) + f_{aA} \left(1-x_g\right) x_\ell + f_{aa} \left(1-x_g\right) \left(1-x_\ell\right) \right) .
\end{equation}
Game payoff is translated into fecundity by $F_g\left(\vx\right) = 1 + \delta f_g\left(\vx\right)$, where $\delta>0$ is a parameter representing the strength of selection.

Reproduction and replacement proceed according to a specified update rule \citep{Ohtsuki}. For Birth-Death (BD) updating, an individual $g$ is chosen at random, with probability proportional to reproductive rate $F_g\left(\vx\right)$, to (asexually) produce an offspring. This offspring replaces a random neighbor of $g$, chosen proportionally to edge weight $\omega_{gh}$. For Death-Birth (DB) updating, an individual $h$ is chosen, uniformly at random, to be replaced. Then, a neighbor $g$ is chosen, proportionally to $\omega_{gh} F_g\left(\vx\right)$, to produce an offspring to fill the vacancy. Mutations are resolved in accordance with our framework (Section \ref{sec:mutation}), leading to a new state.

\begin{figure}
	\centering
	\includegraphics[width=\textwidth]{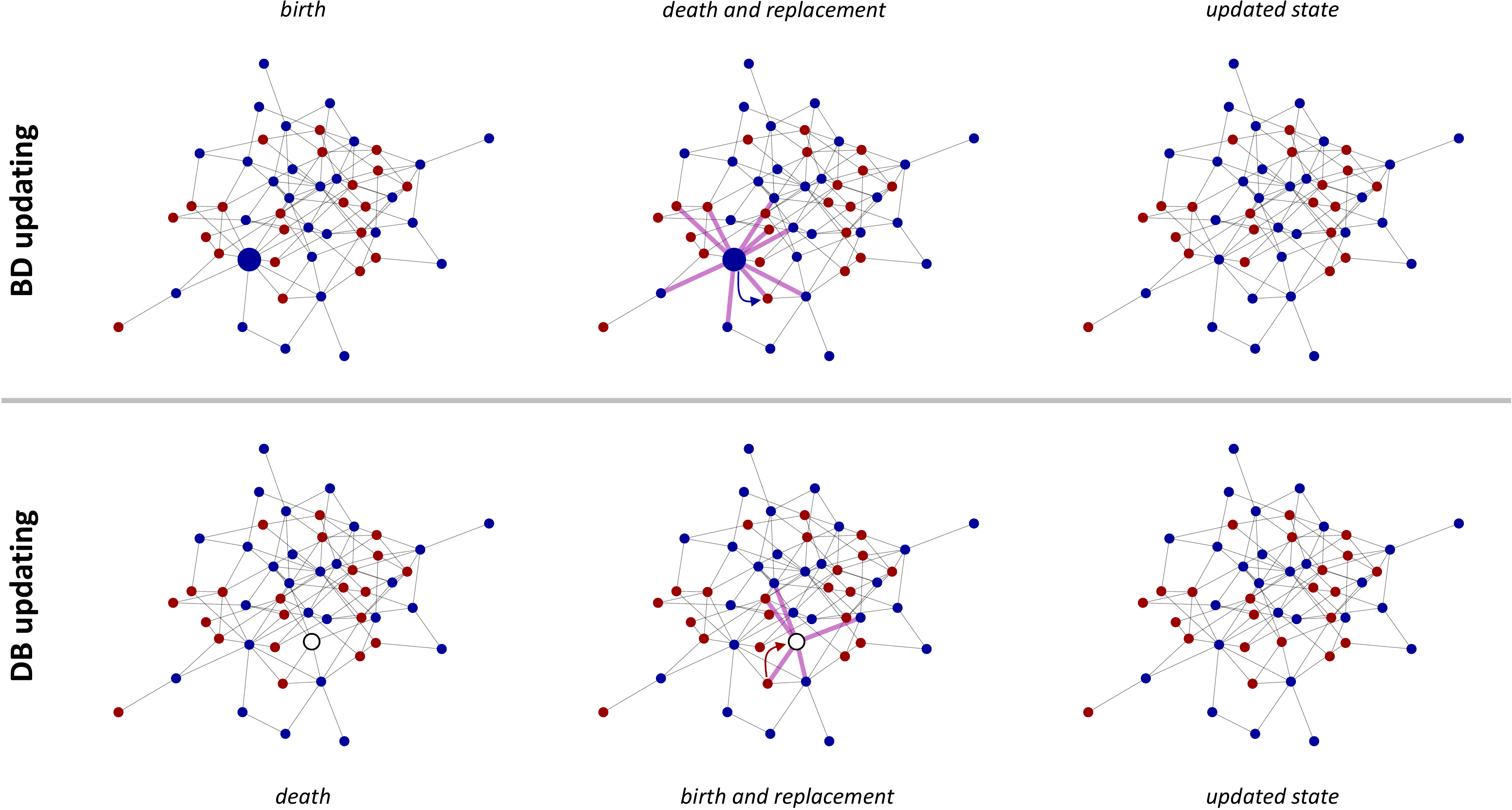}
	\caption{Birth-Death (BD) and Death-Birth (DB) updating on a graph. Each node is occupied by a haploid individual with allele $A$ (blue) or $a$ (red). Under BD updating, an individual is first chosen for reproduction with probability proportional to fecundity (large blue node). The offspring then replaces a neighbor, with probability determined by the weights of the outgoing links. Under DB updating, an individual is chosen for death uniformly at random from the population (empty node). The neighbors of this individual then compete to reproduce and fill the vacancy, with probability determined by both fecundity and the weights of the incoming links to the empty node.}
\end{figure}

\subsubsection{Basic quantities}
Since only one individual is replaced per time-step, any replacement event with positive probability has the form $(\{h\}, h \mapsto g)$, meaning that the occupant of site $h$ is replaced by the offspring of site $g$. The nonzero probabilities in the replacement rule are given by
\begin{equation}
	\label{eq:egames}
	p_{\left(\left\{h\right\},h \mapsto g\right)}\left(\vx\right) = 
	\begin{cases} 
		\displaystyle \left( \frac{1 + \delta f_g\left(\vx\right)}{\sum_{\ell \in G} \left(1 + \delta f_\ell\left(\vx\right)\right)} \right) p_{gh}
		& \text{for BD updating} , \\[5mm]
		\displaystyle \frac{1}{n} \left( \frac{\omega_{gh} \left (1+\delta f_g\left(\vx\right) \right)}{\sum_{\ell \in G} \omega_{\ell h} \left(1+\delta f_\ell \left(\vx\right)\right) }\right) 
		& \text{for DB updating} .
	\end{cases}
\end{equation}
Above, $f_g(\vx)$ is the payoff to $g$ in state $\vx$, given by Eq.~\eqref{eq:payoff}. We note also that $e_{gh}\left(\vx\right) = p_{\left(\left\{h\right\},h \mapsto g\right)}\left(\vx\right)$ for each $g,h \in G$.

For DB updating, new mutants are equally likely to appear at each vertex since each vertex is equally likely to be replaced in each state. Thus, the mutant appearance distribution for DB updating is
\begin{align}
	\mu_A \left(\vx\right) = \begin{cases} 
		\frac{1}{n} & \text{if $\vx = \mathbf{1}_{\left\{g\right\}}$ for some $g \in G$} , \\
		0 & \text{otherwise} .
	\end{cases}
\end{align}
The formula for $\mu_a\left(\vx\right)$ is analogous. For BD updating, the mutant appearance is distribution is nonuniform and given by 
\begin{equation}
	\label{eq:mutappearBD}
	\mu_A \left(\vx\right) = \begin{cases} 
		\frac{1}{n} \sum_{h \in G} p_{hg} & \text{if $\vx = \mathbf{1}_{\left\{g\right\}}$ for some $g \in G$} , \\
		0 & \text{otherwise} ,
	\end{cases}
\end{equation}
and analogously for $\mu_a\left(\vx\right)$.

Both BD and DB updating satisfy Assumption \ref{ass:weak}, and therefore have a well-defined neutral process. The probability that vertex $h$ is replaced by the offspring of vertex $g$, given by Eq.~\eqref{eq:egames}, reduces under the neutral process to 
\begin{equation}
	\label{eq:egamesneut}
	e^\circ_{gh} = 
	\begin{cases}
		p_{gh}/n & \text{for BD updating} , \\
		p_{hg}/n & \text{for DB updating} .
	\end{cases}
\end{equation}

The reproductive value of a vertex $g$ is proportional to its weighted degree for DB updating, and \emph{inversely} proportional to its weighted degree for BD updating:
\begin{align}
	v_g = 
	\begin{cases}
		\displaystyle n \frac{\omega_g^{-1}}{\tilde{\Omega}} & \text{for BD updating} , \\[5mm]
		\displaystyle n \frac{\omega_g}{\Omega} & \text{for DB updating,}
	\end{cases}
\end{align}
where $\Omega= \sum_{h \in G} \omega_h$ and $\tilde{\Omega}= \sum_{h \in G} \omega^{-1}_h$ are the total weighted degree and total inverse weighted degree, respectively (see Fig.~\ref{fig:graphRV}). These reproductive values were first discovered by \cite{maciejewski2014reproductive} for unweighted graphs and generalized to weighted graphs by \cite{allen2017evolutionary}.

\begin{figure}
	\centering
	\includegraphics[width=\textwidth]{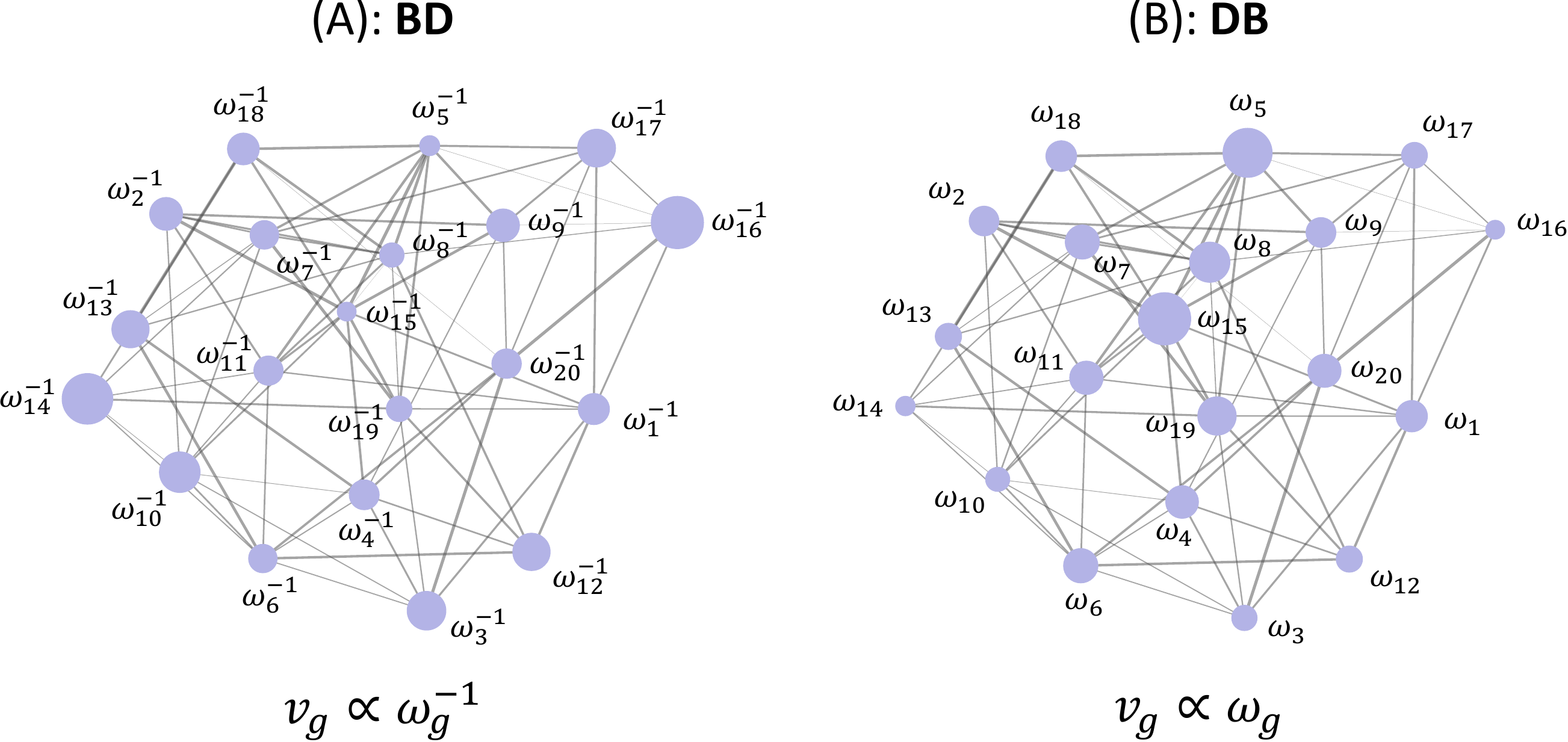}
	\caption{Reproductive values on a weighted graph for BD and DB updating. The \emph{weighted degree} $\omega_g$ of vertex $g$ is defined as $\omega_{g}\coloneqq\sum_{h\in G}\omega_{gh}$, where $\omega_{gh}$ is the weight of the edge between vertices $g$ and $h$. (A) For BD updating, the reproductive value of a vertex is inversely proportional to weighted degree: $v_g = n \omega_g^{-1}/\sum_{g \in G} \omega_g^{-1}$. (B) For DB updating the reproductive value of a vertex is directly proportional to its weighted degree: $v_g = n \omega_g/\sum_{g \in G} \omega_g$. For both panels, vertices are sized proportionally to reproductive value.
		\label{fig:graphRV}}
\end{figure}

A weak-selection expansion of Eq.~\eqref{eq:egames} yields
\begin{equation}
	\label{eq:egamesweak}
	e_{gh}' \left(\vx\right) = \begin{cases}
		\displaystyle \frac{p_{gh}}{n} \left( f_g\left(\vx\right) - \frac{1}{n} \sum_{\ell \in G}  f_\ell\left(\vx\right) \right) 
		& \text{for BD updating} , \\[7mm]
		\displaystyle \frac{p_{hg}}{n}  \left( f_g\left(\vx\right)- \sum_{\ell \in G} p_{h \ell } f_\ell \left(\vx\right) \right)
		& \text{for DB updating} .
	\end{cases}
\end{equation}

The fitness of vertex $g$ in state $\vx$ has the weak-selection expansion
\begin{align}
	w_g\left(\vx\right) = v_g + \delta w_g'\left(\vx\right) + \mathcal{O}\left(\delta^2\right) ,
\end{align}
where
\begin{equation}
	w_g'\left(\vx\right) = \begin{cases}
		\displaystyle \frac{1}{n \tilde{\Omega}}
		\sum_{h \in G} \frac{\omega_{gh}}{\omega_g \omega_h} \left( f_g\left(\vx\right) - f_h\left(\vx\right)\right)
		& \text{for BD updating} , \\[7mm]
		\displaystyle \frac{1}{n \Omega} \sum_{g,h \in G} 
		\omega_g p_{gh}^{\left(2\right)} \left( f_g\left(\vx\right) - f_h\left(\vx\right) \right) 
		& \text{for DB updating}.
	\end{cases}
\end{equation}

The first-order term of the RV-weighted change due to selection can be written
\begin{equation}
	\label{eq:delhatselgraphs}
	\delhatsel'\left(\vx\right) = \begin{cases}
		\displaystyle \frac{1}{2n \tilde{\Omega}} 
		\sum_{g,h \in G} \frac{\omega_{gh}}{\omega_g \omega_h} \left(x_g - x_h\right) \left( f_g\left(\vx\right) - f_h\left(\vx\right)\right)
		& \text{for BD updating} , \\[7mm]
		\displaystyle \frac{1}{2n \Omega} \sum_{g,h \in G} 
		\omega_g p_{gh}^{\left(2\right)} \left(x_g - x_h\right) \left( f_g\left(\vx\right) -  f_h\left(\vx\right) \right) 
		& \text{for DB updating} .
	\end{cases}
\end{equation}

\subsubsection{Condition for success: Birth-Death}\label{sec:BD}
Applying Corollary \ref{cor:delhatselweak}, we obtain that for BD, type $A$ is favored under weak selection if and only if
\begin{equation}
	\label{eq:BDcond1}
	\sum_{g,h \in G} \frac{\omega_{gh}}{\omega_g \omega_h} \E_\RMC^\circ \left[ \left(x_g - x_h\right) \left( f_g\left(\vx\right) - f_h\left(\vx\right) \right)\right] > 0 .
\end{equation}

\begin{comment}
Intermediate steps:
\begin{align*}
\E_\RMC^\circ \left [ (x_g - x_h)  (f_g(\vx) -f_h(\vx) \right] 
& = \sum_{\ell \in G} p_{g\ell} \big( f_{AA}  \E[(1-x_h) x_g x_\ell]\\
& \qquad + f_{Aa}  \E[(1-x_h) x_g (1-x_\ell)]\\
& \qquad + f_{aA}  \E[-x_h (1-x_g) x_\ell]\\
& \qquad + f_{aa}  \E[-x_h (1-x_g) (1-x_\ell)] \big)\\
& \quad - \sum_{\ell \in G} p_{h\ell} \big( f_{AA} \E[(x_g-1)x_h x_\ell]\\
& \qquad + f_{Aa}  \E[(x_g - 1) x_h (1-x_\ell)]\\
& \qquad + f_{aA}  \E[x_g (1-x_h) x_\ell]\\
& \qquad + f_{aA}  \E[x_g (1-x_h) (1-x_\ell)] \big).
\end{align*}

\begin{align*}
\E_\RMC^\circ \left [ (x_g - x_h)  (f_g(\vx) -f_h(\vx) \right] 
& = \sum_{\ell \in G} p_{g\ell} \big( f_{AA}  \E[(1-x_h) x_g x_\ell]\\
& \qquad + f_{Aa}  \E[(1-x_h) x_g (1-x_\ell)]\\
& \qquad - f_{aA}  \E[x_h (1-x_g) x_\ell]\\
& \qquad - f_{aa}  \E[x_h (1-x_g) (1-x_\ell)] \big)\\
& \quad + \sum_{\ell \in G} p_{h\ell} \big( f_{AA} \E[(1-x_g)x_h x_\ell]\\
& \qquad + f_{Aa}  \E[(1-x_g ) x_h (1-x_\ell)]\\
& \qquad - f_{aA}  \E[x_g (1-x_h) x_\ell]\\
& \qquad - f_{aa}  \E[x_g (1-x_h) (1-x_\ell)] \big).
\end{align*}
\end{comment}

To express Condition \eqref{eq:BDcond1} in terms of the entries of the payoff matrix \eqref{eq:game}, we use Eq.~\eqref{eq:payoff} to calculate
\begin{align}
	\label{eq:payoffdiff1}
	& \E_\RMC^\circ \left [ \left(x_g - x_h\right)  \left(f_g\left(\vx\right) -f_h\left(\vx\right) \right) \right] \nonumber \\
	& = \sum_{\ell \in G}  \Big( f_{AA} \left(p_{g\ell} \E_\RMC^\circ \left[x_g \left(1-x_h\right) x_\ell\right] + p_{h\ell} \E_\RMC^\circ \left[\left(1-x_g\right) x_h x_\ell\right]\right)) \nonumber \\
	& \qquad + f_{Aa} \left( p_{g\ell} \E_\RMC^\circ \left[x_g\left(1-x_h\right) \left(1-x_\ell\right)\right] + p_{h\ell} \E_\RMC^\circ \left[\left(1-x_g\right) x_h \left(1-x_\ell\right)\right] \right) \nonumber \\
	& \qquad - f_{aA} \left( p_{g\ell} \E_\RMC^\circ \left[\left(1-x_g\right) x_h  x_\ell\right] + p_{h\ell} \E_\RMC^\circ \left[x_g \left(1-x_h\right) x_\ell\right]\right) \nonumber \\
	& \qquad - f_{aa} \left( p_{g\ell} \E_\RMC^\circ \left[ \left(1-x_g\right) x_h \left(1-x_\ell\right)\right] + p_{h\ell} \E_\RMC^\circ \left[x_g \left(1-x_h\right) \left(1-x_\ell\right)\right]\right) \Big) .
\end{align}

We can reduce to pairwise quantities by noting that Proposition \ref{prop:symmetry} implies
\begin{align}
	\E_\RMC^\circ \left[x_g x_h x_\ell\right] = \E_\RMC^\circ \left[\left(1-x_g\right)\left(1- x_h\right)\left(1- x_\ell\right)\right],
\end{align}
which leads to the identity
\begin{align}
	\E_\RMC^\circ \left[x_g x_h x_\ell\right] = \frac{1}{2}\left( \E_\RMC^\circ \left[x_gx_h\right] + \E_\RMC^\circ \left[x_g x_\ell\right] + \E_\RMC^\circ \left[x_h x_\ell\right] \right) - \frac{1}{4} .
\end{align}

Applying this identity, Eq.~\eqref{eq:payoffdiff1} reduces to
\begin{multline}
	\label{eq:payoffdiff2}
	\E_\RMC^\circ \left[ \left(x_g - x_h\right) \left(f_g\left(\vx\right) -f_h\left(\vx\right)\right) \right] 
	= \frac{1}{2}   \Big(\left(f_{AA} + f_{Aa} - f_{aA} - f_{aa}\right) \left(1 - 2\E_\RMC^\circ \left[x_gx_h\right] \right)\\
	+ \left(f_{AA} - f_{Aa} + f_{aA} - f_{aa}\right) 
	\sum_{\ell \in G} \left(p_{g\ell} - p_{h \ell}\right)\left( \E_\RMC^\circ \left[x_g x_\ell\right] - \E_\RMC^\circ \left[x_h x_\ell\right] \right) \Big).
\end{multline}
Substituting and rearranging, we can rewrite Condition \eqref{eq:BDcond1} as 
\begin{multline}
	\label{eq:BDcond2}
	\sum_{g,h \in G} \frac{\omega_{gh}}{\omega_g \omega_h} \Big(\left(f_{AA} + f_{Aa} - f_{aA} - f_{aa}\right) \left( \frac{1}{2} - \E_\RMC^\circ \left[x_gx_h\right] \right)\\
	+ \left(f_{AA} - f_{Aa} + f_{aA} - f_{aa}\right) 
	\sum_{\ell \in G} p_{g\ell} \left( \E_\RMC^\circ \left[x_g x_\ell\right] - \E_\RMC^\circ \left[x_h x_\ell\right] \right) \Big ) > 0.
\end{multline}

It remains to describe how to compute the pairwise expectations $\E_\RMC^\circ\left[x_gx_h\right]$. Let us define the centered variables $\underline{x}_g = x_g - \nu$. Working through the possibilities of a single replacement event under neutral drift, one arrives at the following recurrence relation: for all pairs of sites $g \neq h$,
\begin{equation}
	\label{eq:MSSrecurBD}
	\E_\MSS^\circ \left[\underline{x}_g\underline{x}_h\right]  = \frac{1-u}{\sum_{\ell \in G} \left(p_{\ell g} + p_{\ell h}\right)}
	\sum_{\ell \in G} \left(p_{\ell g} \E_\MSS^\circ \left[\underline{x}_\ell \underline{x}_h\right] + p_{\ell h}\E_\MSS^\circ\left[ \underline{x}_g \underline{x}_\ell\right] \right) .
\end{equation}

Define the state function
\begin{equation}
	\label{eq:phigh}
	\phi_{gh}\left(\vx\right) = \underline{x}_g\underline{x}_h - \frac{1}{\sum_{\ell \in G} \left(p_{\ell g} + p_{\ell h}\right)}
	\sum_{\ell \in G} \left(p_{\ell g} \underline{x}_\ell \underline{x}_h + p_{\ell h} \underline{x}_g \underline{x}_\ell \right) .
\end{equation}
From the recurrence relation \eqref{eq:MSSrecurBD}, we have 
\begin{equation}
	\label{eq:Ephi}
	\E_\MSS^\circ\left[\phi_{gh}\right] = -\frac{u}{1-u} \E_\MSS^\circ\left[\underline{x}_g\underline{x}_h\right]
\end{equation}
Note also that $\phi_{gh}\left(\va\right) = \phi_{gh}\left(\vA\right) = 0$. Therefore, by Lemma \ref{lem:lowu}, there exists $K>0$ such that, for all $g,h \in G$ with $g \neq h$,
\begin{align}
	\E_\RMC^\circ\left[\phi_{gh}\right] & = K \frac{d}{du} \bigg|_{u=0} \E_\MSS^\circ\left[\phi_{gh}\right] \nonumber \\
	& = K \frac{d}{du} \bigg|_{u=0} \left(  -\frac{u}{1-u} \E_\MSS^\circ\left[ \underline{x}_g\underline{x}_h\right] \right) \nonumber \\
	& = - K \E_\MSS^\circ\left[ \underline{x}_g\underline{x}_h\right]\bigg|_{u=0} \nonumber \\
	& = - K \nu \left(1-\nu\right).
\end{align}
Substituting in from Eq.~\eqref{eq:phigh}, we see that for all $g,h \in G$ with $g \neq h$,
\begin{align}
	\label{eq:RMCrecurBD}
	\E_\RMC^\circ \left[\underline{x}_g\underline{x}_h\right]  &= \frac{1}{\sum_{\ell \in G} \left(p_{\ell g} + p_{\ell h}\right)}
	\sum_{\ell \in G} \left(p_{\ell g} \E_\RMC^\circ \left[\underline{x}_\ell \underline{x}_h\right] 
	+ p_{\ell h}\E_\RMC^\circ\left[ \underline{x}_g \underline{x}_\ell\right] \right) \nonumber \\
	&\quad\quad - K \nu \left(1-\nu\right) .
\end{align}
We define the quantity $\tau_{gh}$, for all pairs $g,h \in G$, by
\begin{equation}
	\label{eq:taudef}
	\tau_{gh} = \frac{\frac{1}{2} - \E_\RMC^\circ \left[x_gx_h\right]}{K \nu \left(1-\nu\right)} 
	= \frac{\frac{1}{2} - \E_\RMC^\circ \left[\underline{x}_g\underline{x}_h\right] - \nu\left(1-\nu\right)}{K \nu \left(1-\nu\right)}.
\end{equation}
Note that $\tau_{gg}=0$ for all $g \in G$ since $\E_\RMC^\circ \left[x_g^2\right]=\E_\RMC^\circ \left[x_g\right]=\frac{1}{2}$ by Proposition \ref{prop:symmetry}. Eq.~\eqref{eq:RMCrecurBD} then leads to the recurrence relations:
\begin{equation}
	\label{eq:taurecur}
	\tau_{gh} = \begin{cases}
		1 + \frac{\sum_{\ell \in G} \left(p_{\ell g} \tau_{\ell h}
			+ p_{\ell h} \tau_{g \ell} \right)}
		{\sum_{\ell \in G} \left(p_{\ell g} + p_{\ell h}\right)} & g \neq h , \\
		0 & g=h .
	\end{cases}
\end{equation}
The system of linear equations \eqref{eq:taurecur} can be solved for the $\tau_{gh}$ on any given graph; the solution exists and is unique provided that $G$ is connected. The condition for success can then be rewritten in terms of the $\tau_{gh}$: By Corollary \ref{cor:delhatselweak}, $A$ is favored under weak selection if and only if
\begin{multline}
	\label{eq:BDcondtau}
	\sum_{g,h \in G} \frac{\omega_{gh}}{\omega_g \omega_h} \Big(\left(f_{AA} + f_{Aa} - f_{aA} - f_{aa}\right)\tau_{gh} \\
	+ (f_{AA} - f_{Aa} + f_{aA} - f_{aa}) 
	\sum_{\ell \in G} p_{g\ell}\left(\tau_{h \ell} - \tau_{g \ell}\right) \Big ) > 0 .
\end{multline}

The $\tau_{gh}$ can be interpreted as coalescence times for a discrete-time coalescing random walk (CRW) process; however, this interpretation is not necessary here. We note that Eq.~\eqref{eq:taurecur} differs from the recurrence for the CRW for BD updating presented in \cite{allen2017evolutionary}. The difference arises from the way initial mutants are introduced; in \cite{allen2017evolutionary} it is assumed that the location of the initial mutant is chosen uniformly among vertices; here, the initial mutant location is chosen according to the mutant appearance distribution given in Eq.~\eqref{eq:mutappearBD}.

We observe that Condition \eqref{eq:BDcondtau} can be written in the form
\begin{equation}
	\label{eq:sigma}
	\sigma f_{AA} + f_{Aa} > f_{aA} + \sigma f_{aa},
\end{equation}
with 
\begin{equation}
	\label{eq:sigmaBDtau}
	\sigma = \frac{ \displaystyle \sum_{g,h \in G} \frac{\omega_{gh}}{\omega_g \omega_h}
		\left( \tau_{gh} - \sum_{\ell \in G} \left(p_{g\ell} - p_{h \ell}\right) \tau_{g \ell} \right)}
	{\displaystyle \sum_{g,h \in G} \frac{\omega_{gh}}{\omega_g \omega_h}
		\left( \tau_{gh} + \sum_{\ell \in G} \left(p_{g\ell} - p_{h \ell}\right) \tau_{g \ell} \right)} .
\end{equation}
Eq.~\eqref{eq:sigma} is an instance of the Structure Coefficient Theorem \citep{Corina,NowakStructured,allen2013adaptive}, which states that, for a quite general class of evolutionary game models, the condition for success under weak selection takes the form \eqref{eq:sigma} for some ``structure coefficient," $\sigma$.

\subsubsection{Conditions for success: Death-Birth}
For DB updating, applying Corollary \ref{cor:delhatselweak} to Eq.~\eqref{eq:delhatselgraphs}, we obtain that $\rho_A > \rho_a$ under weak selection if and only if
\begin{equation}
	\label{eq:DBcond1}
	\sum_{g,h \in G} 
	\omega_g p_{gh}^{\left(2\right)} \E_\RMC^\circ \left[ \left(x_g - x_h\right) \left( f_g\left(\vx\right) - f_h\left(\vx\right) \right)\right] > 0.
\end{equation}
Applying Eq.~\eqref{eq:payoffdiff2}, this condition becomes
\begin{multline}
	\label{eq:DBcond2}
	\sum_{g,h \in G} \omega_g p_{gh}^{(2)} 
	\Bigg(\left(f_{AA} + f_{Aa} - f_{aA} - f_{aa}\right) \left(1 - 2\E_\RMC^\circ \left[x_gx_h\right] \right)\\
	+ \left(f_{AA} - f_{Aa} + f_{aA} - f_{aa}\right) 
	\sum_{\ell \in G} (p_{g\ell} - p_{h \ell})\left( \E_\RMC^\circ \left[x_g x_\ell\right] - \E_\RMC^\circ \left[x_h x_\ell\right] \right) \Bigg) > 0.
\end{multline}

To obtain the quantities $\E_\RMC^\circ \left[x_gx_h\right]$, we note that DB updating has a recurrence equation analogous to Eq.~\eqref{eq:MSSrecurBD}:
\begin{equation}
	\label{eq:MSSrecurDB}
	\E_\MSS^\circ \left[\underline{x}_g\underline{x}_h\right]  = \frac{1-u}{2}
	\sum_{\ell \in G} \left(p_{g\ell} \E_\MSS^\circ \left[\underline{x}_\ell \underline{x}_h\right] + p_{h\ell}\E_\MSS^\circ\left[ \underline{x}_g \underline{x}_\ell\right] \right) ,
\end{equation}
where $\underline{x}_g=x_g-\nu$ as before. Upon defining
\begin{equation}
	\label{eq:phighDB}
	\phi_{gh}(\vx) = \underline{x}_g\underline{x}_h - \frac{1}{2}
	\sum_{\ell \in G} \left(p_{g\ell} \underline{x}_\ell \underline{x}_h + p_{h\ell} \underline{x}_g \underline{x}_\ell \right),
\end{equation}
we find that Eq.~\eqref{eq:Ephi} holds for DB updating as well. Following the argument of Section \ref{sec:BD}, we obtain the following analogue of Eq.~\eqref{eq:RMCrecurBD}:
\begin{equation}
	\label{eq:RMCrecurDB}
	\E_\RMC^\circ \left[\underline{x}_g\underline{x}_h\right]  = \frac{1}{2}
	\sum_{\ell \in G} \left(p_{g \ell} \E_\RMC^\circ \left[\underline{x}_\ell \underline{x}_h\right] 
	+ p_{h\ell}\E_\RMC^\circ\left[ \underline{x}_g \underline{x}_\ell\right] \right)
	- K \nu \left(1-\nu\right).
\end{equation}
Defining the quantities $\tau_{gh}$ for $g,h \in G$ according to Eq.~\eqref{eq:taudef}, we arrive at the DB analogue of the recurrence relations \eqref{eq:taurecur}:
\begin{equation}
	\label{eq:taurecurDB}
	\tau_{gh} = \begin{cases}
		1 + \frac{1}{2}\sum_{\ell \in G} \left(p_{g \ell } \tau_{\ell h} +  p_{h \ell} \tau_{g \ell} \right) & g \neq h , \\
		0 & g=h .
	\end{cases}
\end{equation}
These $\tau_{gh}$ are precisely the pairwise coalescence times studied by \cite{allen2017evolutionary}, and they can be obtained for any given (weighted, connected) graph by solving Eq.~\eqref{eq:taurecurDB} as a linear system of equations. If we now define, for $m \geqslant 0$, 
\begin{align}
	\tau^{\left(m\right)} \coloneqq \sum_{g,h \in G} \frac{\omega_g}{\Omega} p_{gh}^{\left(m\right)} \tau_{gh} , 
\end{align}
then the condition for success \eqref{eq:DBcond2} can be rewritten as 
\begin{equation}
	\left(f_{AA} + f_{Aa} - f_{aA} - f_{aa}\right) \tau^{\left(2\right)} + \left(f_{AA} - f_{Aa} + f_{aA} - f_{aa}\right) \left(\tau^{\left(3\right)}-\tau^{\left(1\right)}\right) > 0 ,
\end{equation}
Again, the condition for success takes the form \eqref{eq:sigma}, with the structure coefficient for DB updating given by
\begin{equation}
	\label{eq:sigmaDB}
	\sigma = \frac{\tau^{(2)}+\tau^{(3)}-\tau^{(1)}}{\tau^{(2)}-\tau^{(3)}+\tau^{(1)}} ,
\end{equation}
which is exactly the result obtained by \cite{allen2017evolutionary}. The appearances of one-step, two-step, and three-step random walks in Eq.~\eqref{eq:sigmaDB} have an elegant interpretation in terms of interactions at various distances; see \cite{allen2017evolutionary}.

\subsection{Haplodiploid population}\label{sec:haplodiploid}
To illustrate the applicability of our framework beyond haploid populations, we analyze a simple model of evolution in a haplodiploid population with one male and one female parent per generation. This could represent, for example, a completely inbred, singly-mated, eusocial insect colony.

\subsubsection{Model}
The population consists of $N_{\F}$ diploid females and $N_{\M}$ haploid males. Thus, there are $n=2N_{\F}+N_{\M}$ genetic sites and $N=N_{\F}+N_{\M}$ individuals. Each genotype has an associated fecundity, denoted $F_{xy}$ for females and $F_{z}$ for males, with $x,y,z \in \left\{a,A\right\}$. (Here and throughout this example, $xy$ is understood as an unordered pair.) The fecundities for females are 
\begin{equation}
	\label{eq:femF}
	F_{aa} = 1 ; \qquad F_{Aa} = 1+\delta hs ; \qquad F_{AA} = 1+\delta s .
\end{equation}
Above, the parameter $s$ quantifies selection on the $A$ allele in females and  the parameter $h$ represents the degree of dominance. Fecundities for males are
\begin{equation}
	\label{eq:maleF}
	F_{a} = 1 ; \qquad F_{A} = 1+ \delta m ,
\end{equation}
where the parameter $m$ quantifies selection on the $A$ allele in males.

Each time-step, one female and one male are chosen at random, with probability proportional to fecundity, to be parents for the next generation. These parents produce a new generation of offspring, replacing all previous individuals. Females are produced sexually while males are produced asexually (parthenogenetically); thus, each female offspring inherits the allele of the male parent and one of the two alleles of the female parent, while each male offspring inherits one of the two alleles of the female parent.

\begin{figure}
	\centering
	\includegraphics[width=0.8\textwidth]{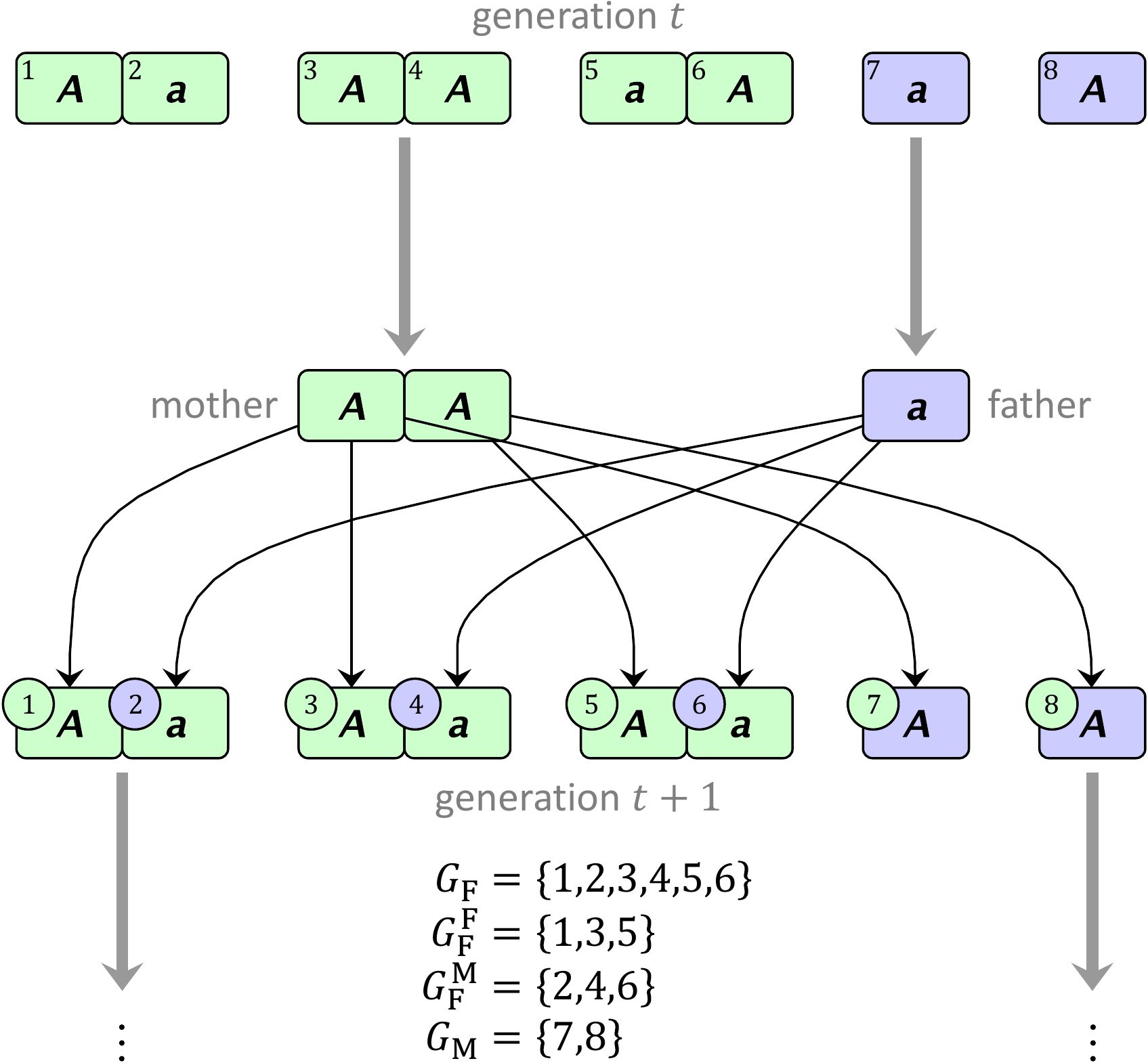}
	\caption{Model of selection in a haplodiploid population. Females (green) are diploid and males (purple) are haploid. At each time step, one female (mother) and one male (father) are selected to populate the subsequent generation. Each parent is chosen with probability proportional to its fecundity, which depends on its genotype according to Eqs.~\eqref{eq:femF}--\eqref{eq:maleF}. Each female in the next generation inherits one allele from the mother and one allele from the father, while each male inherits a single allele from the mother. The process then repeats. The genetic sites are numbered in their upper-left corner. In generation $t+1$, the color of the site label indicates the parent from which the allele was inherited (which, in females, determines whether the site is in $G_{\F}^{\F}$ or $G_{\F}^{\M}$).\label{fig:haplodiploid}}
\end{figure}

\subsubsection{Sites and replacement rule}
To translate this model into our formalism, we partition the set of sites $G$ as $G=G_\F \sqcup G_\M$, where $G_\F$ and $G_\M$ are the sets of sites in females and males, respectively. Similarly, we partition the set $I$ of individuals into females and males: $I = I_\F \sqcup I_\M$. It is notationally convenient (although not strictly necessary) to distinguish the sites in females according to which parent (male or female) they inherit alleles from. We therefore partition the sites in females as $G_\F= G_\F^\F \sqcup G_\F^\M$, where sites in $G_\F^\F$ house alleles from the female parent, and sites in $G_\F^\M$ house alleles from the male parent. 

For a given state $\vx \in \{0,1\}^G$, we denote the number of females of type $xy$ by $n_{xy}$, and the number of males of type $z$ by $n_{z}$, where $x,y,z \in \left\{A,a\right\}$. Clearly, $n_{aa}+n_{Aa}+n_{AA}=N_{\F}$ and $n_{a}+n_{A}=N_{\M}$ in every state.

We recall that, at each time-step, one male and one female are chosen to replace the entire population. Therefore, all replacement events $(R,\alpha)$ with nonzero probability have $R=G$; furthermore, there exists a pair of individuals $i \in I_\F$ and $j \in I_\M$, with genetic sites $G_i = \{g_i, g_i'\}$ and $G_j =\{g_j\}$, such that
\begin{equation}
	\begin{cases}
		\alpha\left(g\right) \in \{g_i, g_i'\} & \text{for } g \in G_{\M} \cup G_{\F}^{\F}, \\
		\alpha\left(g\right) =  g_j & \text{for } g \in  G_{\F}^{\M}.
	\end{cases}
\end{equation}
The probability that a particular replacement event $(G,\alpha)$ of the above form occurs in state $\vx$ can be written as
\begin{equation}
	p_{\left(G,\alpha\right)}(\vx) = \frac{1}{2^{N}} 
	\left( \frac{F_{x_{g_i}x_{g_i'}}} {n_{aa}F_{aa}+n_{Aa}F_{Aa} +n_{AA}F_{AA}} \right) 
	\left( \frac{F_{x_{g_j}}}{n_{a}F_a+n_{A}F_A} \right).
\end{equation}
Above, we have extended our notation for fecundity to numerical genotypes, so that $F_{11}=F_{AA}$, etc. The prefactor $1/2^N$ reflects the $2^{N}$ possible mappings from $G_\M \cup G_\F^\F$ to $\{g_i, g_i'\}$, representing the possible ways that each new offspring could inherit one of the two alleles in the mother.

\subsubsection{Evolutionary Markov chain}
We now establish some basic results for the evolutionary Markov chain. First, we note that the transition probabilities depend only the $5$-tuple \, $\left(n_{aa},n_{Aa},n_{AA},n_{a},n_{A}\right)$. We also observe that transitions consist of two steps: (i) selection of parents and (ii) production of offspring. The probabilities resulting from the second step depend only on the genotypes of the two chosen parents, for which there are six possibilities:
\begin{align}
	S_\mathrm{par} \coloneqq \Big\{ \left(aa,a\right) , \left(Aa,a\right) , \left(AA,a\right) , \left(aa,A\right) , \left(Aa,A\right) , \left(AA,A\right) \Big\}.
\end{align}
Transitions for the evolutionary Markov chain can therefore be written in the form
\begin{align}
	P_{\vx \rightarrow \vx'} &= \sum_{\left(xy,z\right)\in S_\mathrm{par}} P_{\vx \to \left(xy,z\right)} \, P_{\left(xy,z\right) \to \vx'} .
\end{align}
In other words, the transition matrix $\mathbf{M}$ for the evolutionary Markov chain factors as $\mathbf{M}=\mathbf{SR}$, where $\mathbf{S}=\left(P_{\vx \to (xy,z)}\right)$ represents the selection of parents and $\mathbf{R}=\left(P_{(xy,z) \to \vx}\right)$ represents (re)production of offspring.

The probabilities for selection (i.e.~the entries of $\mathbf{S}$) can be written as
\begin{equation}
	P_{\mathbf{x}\rightarrow\left(xy,z\right)} = \frac{n_{xy}F_{xy}}{n_{aa}F_{aa}+n_{Aa}F_{Aa} +n_{AA}F_{AA}} \,
	\frac{n_{z}F_z}{n_{a}F_a+n_{A}F_A}.
\end{equation}
The probabilities for reproduction (i.e. the entries of $\mathbf{R}$) can be written as
\begin{align}\label{eq:parentsToOffspring}
	P_{\left(xy,z\right)\rightarrow\mathbf{x}} &= \left(\left(1-q_{xy}\right)\left(1-q_{z}\right)\right)^{n_{aa}}\left(q_{xy}+q_{z}-2q_{xy}q_{z}\right)^{n_{Aa}} \left(q_{xy}q_{z}\right)^{n_{AA}} \nonumber \\
	&\quad \times \left(1-q_{xy}\right)^{n_{a}}q_{xy}^{n_{A}} .
\end{align}
Above, $q_{xy}$ (resp. $q_{z}$) is the probability that an allele inherited from a mother of type $xy$ (resp. a father of type $z$) is $A$, with mutation taken into account. Specifically,
\begin{subequations}
	\label{eq:haploqs}
	\begin{align}
		q_{aa} &= u\nu ; \\
		q_{Aa} &= \frac{1}{2}\left(1-u+2u\nu\right) ; \\
		q_{AA} &= 1-u\left(1-\nu\right) ; \\
		q_{a} &= u\nu ; \\
		q_{A} &= 1-u\left(1-\nu\right) .
	\end{align}
\end{subequations}

Note that the entries of $\mathbf{R}$ do not depend on the parameters quantifying selection ($h$, $m$, and $s$).

\subsubsection{Reproductive value and selection}
The marginal probability that $g$ transmits a copy of itself to $\ell$ is
\begin{equation}
	\label{eq:eglhaplo}
	e_{g\ell}\left(\mathbf{x}\right) = 
	\begin{cases}
		\displaystyle
		\frac{1}{2} \frac{F_{x_{g}x_{g'}}}{n_{aa}F_{aa}+ n_{Aa}F_{Aa}+ n_{AA}F_{AA}} 
		& g\in G_{\F},\ \ell\in G_{\M} \cup G_\F^\F, \\[5mm]
		\displaystyle
		\frac{F_{x_{g}}}{n_{a}F_{a}+ n_{A}F_{A}} & g\in G_{\M},\ \ell\in G_{\F}^\M , \\[5mm]
		\displaystyle
		0 & \text{otherwise.}
	\end{cases}
\end{equation}
Above, $g'$ denotes the other site in the same individual as site $g \in G_\F$. For neutral drift ($\delta=0$), Eq.~\eqref{eq:eglhaplo} reduces to
\begin{equation}
	\label{eq:eglhaploneut}
	e_{g\ell}^\circ = \begin{cases} 
		\frac{1}{2N_{\F}} & g\in G_{\F},\ \ell\in G_{\M} \cup G_\F^\F, \\
		\frac{1}{N_{\M}} & g \in G_{\M},\ \ell \in G_{\F}^\M , \\
		0 & \text{otherwise.}
	\end{cases}
\end{equation}

A straightforward calculation gives the mutant-appearance distribution,
\begin{align}
	\mu_{A}\left(\vx\right) &= 
	\begin{cases} 
		\frac{1}{n} & \text{if $\vx = \mathbf{1}_{\left\{g\right\}}$ for some $g \in G$} , \\
		0 & \text{otherwise} ,
	\end{cases}
\end{align}
and analogously for $\mu_{a}\left(\vx\right)$.

Substituting from Eq.~\eqref{eq:eglhaploneut} into the recurrence equation \eqref{eq:vrecur} for reproductive values,  using the fact that there are $2N_{\F}$ female genetic sites and $N_{\M}$ male sites, yields
\begin{align}
	v_{\F} &= \frac{1}{2} v_{\F} + \frac{N_{\M}}{2N_{\F}} v_{\M} ,
\end{align}
so $v_{\F}=\frac{N_{\M}}{N_{\F}}v_{\M}$. Since $\sum_{g\in G}v_{g}=n=2N_{\F}+N_{\M}$, we obtain
\begin{subequations}
	\begin{align}
		v_{\F} &= \frac{2N_{\F}+N_{\M}}{3N_{\F}} ; \\
		v_{\M} &= \frac{2N_{\F}+N_{\M}}{3N_{\M}} .
	\end{align}
\end{subequations}
Note that for an even sex ratio ($N_{\F}=N_{\M}$), all sites have equal reproductive value.

Applying Eq.~\eqref{eq:fitdef}, the fitness of site $g$ in state $\vx$ is given by
\begin{equation}
	w_{g}\left(\mathbf{x}\right) =
	\begin{cases} \displaystyle
		\left(\frac{2N_{\F}+N_{\M}}{3}\right) \frac{F_{x_gx_{g'}}}{n_{aa}F_{aa}+ n_{Aa}F_{Aa}+ n_{AA}F_{AA}}
		& g \in G_{\F} , \\[5mm]
		\displaystyle \left(\frac{2N_{\F}+N_{\M}}{3}\right) \frac{F_{x_g}}{n_{a}F_a+ n_{A}F_A}
		& g \in G_{\M} .
	\end{cases}
\end{equation}
On the individual level, the fitness of a female with genotype $xy$ is 
\begin{equation}
	W_{xy}(\vx) = 2 \left( \frac{2N_{\F}+N_{\M}}{3}\right) \frac{F_{xy}}{n_{aa}F_{aa}+ n_{Aa}F_{Aa}+ n_{AA}F_{AA}},
\end{equation}
and the fitness of a male with genotype $z$ is 
\begin{equation}
	W_{z}(\vx) = \left(\frac{2N_{\F}+N_{\M}}{3}\right) \frac{F_{z}}{n_{a}F_a+ n_{A}F_A}.
\end{equation}

The RV-weighted change due to selection is
\begin{multline}
	\delhatsel(\vx) = \left(\frac{2N_{\F}+N_{\M}}{3}\right) \times\\
	\left( \frac{ n_{Aa} F_{Aa} + 2 n_{AA}F_{AA}}{n_{aa}F_{aa}+ n_{Aa}F_{Aa}+ n_{AA}F_{AA}}  + \frac{ n_{A}F_A}{n_{a}F_a+n_{A}F_A} - \frac{n_{Aa}+2n_{AA}}{N_\F} - \frac{n_A}{N_\M} \right) .
\end{multline}
Substituting from Eqs.~\eqref{eq:femF} and \eqref{eq:maleF} and applying a weak-selection expansion, we obtain 
\begin{multline}
	\label{eq:delhatselhaplo}
	\delhatsel '\left(\mathbf{x}\right) = \left(\frac{2N_{\F}+N_{\M}}{3}\right) \times \\
	\left( \frac{\left(hn_{Aa}+2n_{AA}\right) s}{N_{\F}} -  \frac{\left(hn_{Aa}+n_{AA}\right)\left(n_{Aa}+2n_{AA}\right) s}{N_{\F}^{2}} +  \frac{mn_{a}n_{A}}{N_{\M}^{2}} \right).
\end{multline}

\subsubsection{Neutral stationary distributions}
The two-step nature of transitions allows us to define the \emph{parental Markov chain}, which describes the transition from one generation's parents to the next. The parental Markov chain has state space $S_\mathrm{par}$ and transition matrix $\mathbf{P}=\mathbf{RS}$. More explicitly, we can write
\begin{align}
	\label{eq:PRS}
	P_{(xy,z) \rightarrow (x'y',z')} &= \sum_{\vx \in \{0,1\}^N} P_{\left(xy,z\right) \to \vx'} \, P_{\vx \to \left(x'y',z'\right)} .
\end{align}

For neutral drift, probabilities for selection (entries of $\mathbf{S}$) are given by
\begin{equation}
	\label{eq:pxyzneut}
	P_{\vx\rightarrow\left(xy,z\right)}^{\circ} = \frac{n_{xy}}{N_{\F}}\frac{n_{z}}{N_{\M}}.
\end{equation}
Substituting from Eqs.~\eqref{eq:parentsToOffspring}, \eqref{eq:haploqs}, and \eqref{eq:pxyzneut} into Eq.~\eqref{eq:PRS} and applying multinomial expectations, we obtain the following neutral transition probabilities on the parental chain:
\begin{subequations}
	\begin{align}
		P_{\left(xy,z\right)\rightarrow\left(aa,a\right)}^{\circ} &= \left(1-q_{xy}\right)^{2}\left(1-q_{z}\right) ; \\
		P_{\left(xy,z\right)\rightarrow\left(Aa,a\right)}^{\circ} &= \Big(q_{xy}+q_{z}-2q_{xy}q_{z}\Big)\left(1-q_{xy}\right) ; \\
		P_{\left(xy,z\right)\rightarrow\left(AA,a\right)}^{\circ} &= q_{xy}q_{z}\left(1-q_{xy}\right) ; \\
		P_{\left(xy,z\right)\rightarrow\left(aa,A\right)}^{\circ} &= \left(1-q_{xy}\right)\left(1-q_{z}\right) q_{xy} ; \\
		P_{\left(xy,z\right)\rightarrow\left(Aa,A\right)}^{\circ} &= \Big(q_{xy}+q_{z}-2q_{xy}q_{z}\Big) q_{xy} ; \\
		P_{\left(xy,z\right)\rightarrow\left(AA,A\right)}^{\circ} &= q_{xy}^{2}q_{z} .
	\end{align}
\end{subequations}

Since the parental chain has only six states, one can directly solve for its stationary distribution, $\Pi_{\MSS}^{\circ}$:
\begin{subequations}
	\begin{align}
		\Pi_{\MSS}^{\circ}\left(aa,a\right) &= \frac{\left(1 - \nu\right)\left(\substack{1- 2u^{7}\nu^{2} + 3u^{7}\nu - u^{7} + 14u^{6}\nu^{2} - 21u^{6}\nu + 7u^{6} - 48u^{5}\nu^{2} \\ + 70u^{5}\nu - 23u^{5} + 96u^{4}\nu^{2} - 134u^{4}\nu + 43u^{4} - 110u^{3}\nu^{2} \\ + 143u^{3}\nu - 43u^{3} + 58u^{2}\nu^{2} - 61u^{2}\nu + 13u^{2} - 16u\nu + 11u}\right)}{1+11u+13u^{2}-43u^{3}+43u^{4}-23u^{5}+7u^{6}-u^{7}} ; \\
		\Pi_{\MSS}^{\circ}\left(Aa,a\right) &= \frac{4u\nu\left(1 - \nu\right)\left(\substack{3 + 20u - 29u\nu + 55u^{2}\nu - 48u^{3}\nu + 24u^{4}\nu \\ - 7u^{5}\nu + u^{6}\nu - 45u^{2} + 43u^{3} - 23u^{4} + 7u^{5} - u^{6}}\right)}{1+11u+13u^{2}-43u^{3}+43u^{4}-23u^{5}+7u^{6}-u^{7}} ; \\
		\Pi_{\MSS}^{\circ}\left(AA,a\right) &= \frac{u\nu\left(1 - \nu\right)\left(\substack{4+ 58u\nu - 19u - 110u^{2}\nu + 96u^{3}\nu - 48u^{4}\nu \\ + 14u^{5}\nu - 2u^{6}\nu + 37u^{2} - 38u^{3} + 22u^{4} - 7u^{5} + u^{6}}\right)}{1+11u+13u^{2}-43u^{3}+43u^{4}-23u^{5}+7u^{6}-u^{7}} ; \\
		\Pi_{\MSS}^{\circ}\left(aa,A\right) &= \frac{u\nu\left(1-\nu\right)\left(\substack{4 + 39u - 58u\nu + 110u^{2}\nu - 96u^{3}\nu + 48u^{4}\nu \\ - 14u^{5}\nu + 2u^{6}\nu - 73u^{2} + 58u^{3} - 26u^{4} + 7u^{5} - u^{6}}\right)}{1+11u+13u^{2}-43u^{3}+43u^{4}-23u^{5}+7u^{6}-u^{7}} ; \\
		\Pi_{\MSS}^{\circ}\left(Aa,A\right) &= \frac{4u\nu\left(1-\nu\right)\left(\substack{3 + 29u\nu - 9u - 55u^{2}\nu + 48u^{3}\nu \\ - 24u^{4}\nu + 7u^{5}\nu - u^{6}\nu + 10u^{2} - 5u^{3} + u^{4}}\right)}{1+11u+13u^{2}-43u^{3}+43u^{4}-23u^{5}+7u^{6}-u^{7}} ; \\
		\Pi_{\MSS}^{\circ}\left(AA,A\right) &= \frac{\nu\left(\substack{1 - 2u^{7}\nu^{2} + u^{7}\nu + 14u^{6}\nu^{2} - 7u^{6}\nu - 48u^{5}\nu^{2} + 26u^{5}\nu \\ - u^{5} + 96u^{4}\nu^{2} - 58u^{4}\nu + 5u^{4} - 110u^{3}\nu^{2} + 77u^{3}\nu \\ - 10u^{3} + 58u^{2}\nu^{2} - 55u^{2}\nu + 10u^{2} + 16u\nu - 5u}\right)}{1+11u+13u^{2}-43u^{3}+43u^{4}-23u^{5}+7u^{6}-u^{7}} .
	\end{align}
\end{subequations}
Notably, this distribution is independent of both $N_{\F}$ and $N_{\M}$.

The stationary distribution for the full evolutionary Markov chain can be obtained from $\Pi_{\MSS}^\circ$ using the fact that, if $\Pi$ is stationary for $\mathbf{P}=\mathbf{RS}$, then, defining $\pi \coloneqq\mathbf{R}^\intercal \Pi$ we have $\pi^{\intercal}\mathbf{SR}= \Pi^{\intercal}\mathbf{RSR}=\Pi^{\intercal}\mathbf{R}=\pi^{\intercal}$. Applying this idea, we obtain the neutral RMC distribution for the full chain:
\begin{align}
	2\left(16+n-3\cdot 2^{-N+2}\right)\pi_{\RMC}^{\circ}\left(\mathbf{x}\right) &= 4\delta_{n_{aa},0}\,\delta_{n_{AA},0}\left(\delta_{n_{a},0} + \delta_{n_{A},0}\right) \nonumber \\
	&\quad +3\cdot 2^{-N+2}\left(\delta_{n_{aa},0} + \delta_{n_{AA},0}\right) \nonumber \\
	&\quad + 2^{n_{Aa}} \delta_{n_{AA},0}\left(\delta_{n_{Aa},0}\,\delta_{n_{A},1}+\delta_{n_{Aa},1}\,\delta_{n_{A},0}\right) \nonumber \\
	\label{eq:RMChaplo}
	&\quad + 2^{n_{Aa}} \delta_{n_{aa},0}\left(\delta_{n_{Aa},0}\,\delta_{n_{a},1}+\delta_{n_{Aa},1}\,\delta_{n_{a},0}\right).
\end{align}
Above, $\delta_{m,m'}$ is the Kronecker symbol, equal to 1 if $m=m'$ and 0 otherwise.

In contrast, the parental chain has a much simpler RMC distribution:
\begin{subequations}
	\label{eq:parentalRMC}
	\begin{align}
		\Pi_{\RMC}^{\circ}\left(Aa,a\right)  & = \Pi_{\RMC}^{\circ}\left(Aa,A\right) = \frac{3}{8} ; \\ 
		\Pi_{\RMC}^{\circ}\left(AA,a\right) & =  \Pi_{\RMC}^{\circ}\left(aa,A\right) = \frac{1}{8} .
	\end{align}
\end{subequations}

\subsubsection{Condition for success}
Combining Eqs.~\eqref{eq:delhatselhaplo} and \eqref{eq:RMChaplo}, it follows from Corollary \ref{cor:delhatselweak} that
\begin{align}
	\rho_{A} > \rho_{a} &\iff \mathbb{E}_{\RMC}^{\circ}\left[\delhatsel '\right] > 0 \nonumber \\
	&\iff 8\frac{N_{\F}}{N_{\F}-1} m + 5\frac{N_{\M}}{N_{\M}-1} s > 0 .
\end{align}
Interestingly, this condition is independent of the degree $h$ of genetic dominance in females. Moreover, for large $N_{\F}$ and $N_{\M}$, we have the approximate condition
\begin{align}
	\label{eq:haplolargeN}
	\rho_{A} > \rho_{a} &\iff 8m + 5s > 0 .
\end{align}
Condition \eqref{eq:haplolargeN} becomes exact under the limit ordering (i) $\delta \to 0$, (ii) $N_M \to \infty, N_F \to \infty$ (i.e.~the $wN$ limit \emph{sensu} \citealp{jeong2014optional,sample2017limits}).

It is tempting to think that one could more easily obtain Condition \eqref{eq:haplolargeN} by first defining an analogue of $\delhatsel$ on the parental chain, and then averaging this analogue over the RMC distribution on the parental chain, as given by Eq.~\eqref{eq:parentalRMC}. This scheme turns out not to work because it does not account for mutations correctly. Since different numbers of male and female offspring are produced each generation, and each offspring provides an independent opportunity for mutation, the mutant appearance distribution in parents differs from the expression given in Eqs.~\eqref{eq:mutA}--\eqref{eq:muta}; this effect is not captured by the above scheme.

\section{Discussion} \label{sec:discussion}

\subsection{Summary}
\subsubsection{Generality and abstraction}
The aim of this work is to provide a mathematical formalism for describing natural selection that is general enough to encompass a wide variety of biological scenarios and modeling approaches. Dealing in such generality requires ``abstracting away" the details of particular models while preserving what is ultimately relevant for natural selection. To this end, we use the replacement rule (introduced by \citealp{allen2014measures}) to represent birth, death, and inheritance, and we use the formalism of genetic sites to allow for different genetic systems (haploid, diploid, haplodiploid, or polyploid). This level of abstraction, although atypical for evolutionary theory, entails a number of advantages: (i) it provides a common language for natural selection with formal definitions of key concepts; (ii) it enables proofs of general theorems that eliminate the duplicate work of deriving analogous results one model at a time; (iii) it may help distinguish robust theoretical principles from artifacts of particular modeling assumptions.

\subsubsection{Main results}
Our main results, Theorem \ref{thm:rhodelsel} and Theorem \ref{thm:delhatselweak}, show that various criteria for success under natural selection become equivalent in the limit of low mutation. Theorem \ref{thm:rhodelsel}, which shows the equivalence of success criteria based on fixation probability, stationary frequency, and change due to selection, is quite general: it applies under arbitrary strength of selection and requires no assumptions beyond the basic setup of our formalism. However, these criteria may all be intractable for a model of reasonable complexity \citep{ibsen2015computational}.

Our second main result, Theorem \ref{thm:delhatselweak}, applies to weak selection. The advantage of Theorem \ref{thm:delhatselweak} is that two of these criteria involve expectations under neutral drift, for which the recurrence relations for the MSS and RMC distributions simplify considerably. Moreover, if the additional assumptions of Corollary \ref{cor:delhatselweak} hold, the direction of selection is completely determined by the sign of $\E^\circ_\RMC\left[\delhatsel'\right]$ or $\frac{d}{du}\E^\circ_\MSS\left[\delhatsel' \right]\big|_{u=0}$. Evaluating these criteria may not require knowing the full RMC or MSS distributions, but only certain statistics of them. For the example of games on graphs, one only needs to obtain the pairwise expectations, $\E_\RMC^\circ\left[\underline{x}_g\underline{x}_h\right]$, which obey their own recurrence relations \eqref{eq:RMCrecurBD} and \eqref{eq:RMCrecurDB}. In contrast, the other success criteria, 
$\rho_A > \rho_a$ and $\lim_{u\to 0} \E_\MSS\left[x\right] > \nu$, are of clear biological relevance but are difficult to analyze directly.

To be clear, neither of our main results comes as a surprise. Theorem \ref{thm:rhodelsel} generalizes the main result of \cite{allen2014measures}, and special cases of this result are also discussed by \cite{RoussetBilliard}, \cite{TaylorFixProb}, \cite{Eusociality} and \cite{tarnita2014measures}. Theorem \ref{thm:delhatselweak} generalizes and extends the main result of \cite{tarnita2014measures}. Aspects and instances of Corollary \ref{cor:delhatselweak}---which is a special case of Theorem \ref{thm:delhatselweak}---have been obtained in so many contexts that this result might be called a ``folk theorem" \citep{TaylorHow,RoussetBilliard,leturque2002dispersal,NowakFinite,Ohtsuki,LessardFixation,Taylor,AntalPhenotype,chen2013sharp,cox2013voter,wakano2013mathematical,debarre2014social,durrett2014spatial,van2015social,allen2017evolutionary}. The contribution of Theorems \ref{thm:rhodelsel} and \ref{thm:delhatselweak} is to formalize and prove these results in a general setting, and elucidate the assumptions on which upon which they depend. For example, the equivalence $\E^\circ_\RMC\left[\delhatsel '\right]>0 \Leftrightarrow \rho_A > \rho_a$, under weak selection, relies on the addition assumptions of Corollary \ref{cor:delhatselweak}, and is not valid in the more general setting of Theorem \ref{thm:delhatselweak} (see also \citealp{tarnita2014measures}).

\subsubsection{Examples}
Our two examples illustrate how our results (in particular, Corollary \ref{cor:delhatselweak}) can be applied  to obtain conditions for success in models with different spatial and genetic structures. The games on graphs example (Section \ref{sec:games}) recovers the main results of \cite{allen2017evolutionary}, i.e.~the conditions for success under weak selection in two-player games on arbitrary weighted graphs. Notably, while \cite{allen2017evolutionary} relied on sophisticated results regarding perturbations of voter models \citep{chen2013sharp} along with the Structure Coefficient Theorem \citep{Corina}, the present analysis uses only the results of this work. This fact underscores the fact that Corollary \ref{cor:delhatselweak} provides not only an equivalence result, but also a problem-solving methodology general enough to handle reasonably complicated models.

For the haplodiploid model, we have found that a mutation with selection coefficient $m$ in males and $s$ in females is favored under weak selection in large populations if $8m+5s>0$, regardless of the degree of dominance, $h$. This result suggests, intriguingly, that a mutation's effect in males is $5/8$ as important as its effect in females. The applicability of this particular result is limited by the rather artificial assumption that each generation is produced by a single mating pair. However, there is no \emph{a priori} reason the analysis could not be generalized to populations with larger numbers of reproducers.

\subsection{Conceptual issues}
\subsubsection{Gene's-eye view}
Identifying the salient units or levels at which selection operates is a longstanding conversation in evolutionary theory \citep{williams1966adaptation,lewontin1970units,hull1980individuality,gould1999individuality,okasha2006evolution,akccay2016there}. This work employs a ``gene's-eye view" \citep{williams1966adaptation}, in that the analysis is conducted almost exclusively at the level of genetic sites. In particular, the criteria for success in Theorems \ref{thm:rhodelsel} and \ref{thm:delhatselweak} are expressed in terms of gene-level quantities. Under Assumptions \ref{ass:ind} and \ref{ass:fairmeiosis}, these criteria may be rewritten in terms of individual-level quantities via Proposition \ref{prop:delselind}. However, with meiotic drive and/or frequent horizontal gene transfer, these assumptions are violated, and an analysis based solely on individual-level quantities would not accurately describe selection.

Our gene-centric perspective does not preclude the possibility that selection may also operate at other levels, including the individual and the group or colony. However, our analysis makes clear that all effects of individual-level or group-level selection can be analyzed using of gene-level quantities (e.g. $\delhatsel$). Analysis of individual-level or group-level quantities is not mathematically necessary, although it may be conceptually illuminating.

\subsubsection{Reproductive value}
Reproductive value has been a central concept in evolutionary theory since \cite{fisher1930genetical}. Recent work \citep{maciejewski2014reproductive} has extended this notion to populations with heterogeneous spatial structure. The key property of reproductive value---that RV-weighted frequency has zero expected change under neutral drift---is encapsulated in our Theorem \ref{thm:delhatselneut}. A closely related result is that the reproductive value of a site is proportional to the fixation probability of a neutral mutation arising at this site (Theorem \ref{thm:neutfixprob}; see also \citealp{maciejewski2014reproductive,allen2015molecular}).

Our work underscores the importance of reproductive value, but also reveals an important limitation of this concept. Reproductive value can only be defined with reference to some particular process of neutral drift. In Section \ref{sec:RV}, we defined reproductive value with respect to replacement probabilities in the monoallelic states $\va$ and $\vA$. However, without further assumptions, there is no guarantee that these two states will yield the same reproductive value for each genetic site. It may therefore be impossible to assign consistent reproductive values in a given model. This issue may be exacerbated if one considers models with more than two alleles (see Section \ref{sec:twoalleles}), in which each new selective sweep may alter the reproductive values of sites, thereby affecting selection pressure on subsequent mutations. Here, we obtained consistent reproductive values either by invoking Assumption \ref{ass:neutral}, or (in the context of weak selection) by using $\delta=0$ as a reference process of neutral drift. For an arbitrary mathematical model or biological population, there is no guarantee that well-defined reproductive values exist.

\subsubsection{Fitness}
Like reproductive value, the concept of fitness is both fundamental to evolutionary theory and difficult to formalize in a fully consistent way \citep{MetzFitness,akccay2016there,lehmann2016invasion,metz2016frequency,doebeli2017towards}. Here, we have defined fitness as the RV-weighted contribution of a genetic site to the next time-step, which can be decomposed as the sum of the RV-weighted survival probability, $v_g - \hat{d}_g\left(\vx\right) = v_g\left(1-d_g\left(\vx\right)\right)$, and the RV-weighted birth rate, $\hat{b}_g\left(\vx\right)$. This definition is consistent with other definitions of fitness in the literature (e.g. \citealp{tarnita2014measures}). However, we do not intend this definition to be universal or canonical, for two reasons. First, it depends on the existence of well-defined reproductive values, which is not guaranteed if Assumption \ref{ass:neutral} is violated. Second, it quantifies only the one-step contribution of a site in a given state, which may not accurately capture the long-term success of the progeny of this site. In general, there may not be any fully satisfactory definition of fitness of a single genetic site (let alone an individual). Indeed, \cite{akccay2016there} argue that fitness should instead be ascribed to a \emph{genetic lineage}---that is, the progeny of a given allele copy. An interesting direction for future work would be to describe the lineage-eye view of fitness in the context of our formalism.

\subsection{Limitations and possible extensions}\label{sec:extensions}
Although we have aimed for a relatively high level of generality, our formalism still entails a number of limiting assumptions. Here, we discuss these assumptions and the prospects for extending beyond them.

\subsubsection{Fixed population size}
Fluctuations in population size can affect the process of natural selection \citep{lambert2006probability,parsons2007fixation,parsons2007fixationII,wakano2009spatial,parsons2010some,schoener2011newest,uecker2011fixation,waxman2011unified}. These effects cannot be studied in the current framework, which assumes constant population size. In some cases, fluctuations in population size can be safely ignored---for example, if selection depends only on the \emph{relative} frequencies of competing types (as in the classical population genetics setting of \citealp{crow1970introduction,burger2000mathematical,ewens2004mathematical}), or if the population is assumed to remain close to its carrying capacity \citep{burger2005multilocus}. In other cases, however, population dynamics have important consequences for long-term evolution \citep{Dieckmann,Metz,pelletier2007evolutionary,DercoleBook,wakano2009spatial,schoener2011newest,korolev2013fate,constable2016demographic,chotibut2017population}, including the possibility of evolutionary branching \citep{Geritz,DieckmannSpeciation} or evolutionary suicide \citep{GyllenbergSuicide}. 

Technical complications arise when mathematically modeling populations of fluctuating size. For example, it may be possible for the entire population to become extinct, and as a result, there may be no non-trivial stationary distribution. An alternative is to consider the quasi-stationary distribution for the process \citep{haccou:CUP:2005,GyllenbergQuasi,CattiauxQuasi,faure:AAP:2014}, which is conditioned on non-extinction of the population. Quantifying the evolutionary success of an allele also becomes more nuanced when population size can vary \citep{lambert2006probability,parsons2010some,constable2016demographic,mcavoy:TPB:2018}. 

To extend the current framework to populations of changing size would require the set of sites, $G$, to itself be an aspect of the population state. The offspring-to-parent map, $\alpha$, would then map the set of sites at time $t+1$ to the set of sites at time $t$. Such complications are not necessarily insurmountable but would add considerable notational and technical burden to our formalism.

\subsubsection{Fixed spatial structure and environment, trivial demography}
In our framework, the probability of a replacement event depends only on the current population state, $\vx$. This precludes the possibility that the population structure could change over time \citep{PachecoCoevolution,PachecoLinking,AntalPhenotype,CorinaSet,WuDynamicalNetworks,cavaliere2012prosperity,wardil2014origin}, or that replacement events could depend on variable aspects of the environment \citep{cohen1966optimizing,philippi1989hedging,HaccouEstablishment,kussell2005phenotypic,starrfelt2012bet,cvijovic2015fate} or on the demographic stages of individuals (e.g. juvenile versus adult; \citealp{DiekmannLinear,DiekmannNonlinear,parvinen2016fitness,Lessard2018}). Eco-evolutionary feedbacks between a population and its demography and environment can be important drivers of evolutionary change \citep{Dieckmann,Metz,Geritz,DercoleBook,DurinxPhysiological,PercCoevolutionary}, but such phenomena lie outside the scope of the formalism presented here.

These limitations can be overcome by allowing the probabilities of replacement events to depend on additional variables beyond the population state $\vx$. These variables can represent the current population structure, environmental conditions, and/or the demographic stages of individuals. Additional rules must then be provided for updating these variables. In such an extension, natural selection is described by a Markov chain with state space $\left\{0,1\right\}^G \times E$, where $E$ is the set of possible joint values for these extra variables. This extension would bring phenomena such as bet-hedging \citep{cohen1966optimizing,philippi1989hedging,kussell2005phenotypic,starrfelt2012bet} and structure-strategy coevolution \citep{PachecoCoevolution,CorinaSet,PercCoevolutionary} into the purview of our formalism. However,  the incorporation of this additional information may introduce significant difficulties in establishing basic results such as the equivalence of success criteria.

\subsubsection{Two alleles} \label{sec:twoalleles}
We have focused here on the case of two competing alleles; however, generalizing to any number of alleles appears straightforward. In the limit of rare mutation, the outcome of natural selection depends only on the pairwise competitions between alleles. We therefore anticipate that the RMC distribution (generalized for more than two alleles) will be nonzero only for states containing exactly two alleles. Moreover, the results of \cite{fudenberg:JET:2006} imply that the equilibrium mutation-selection balance will be obtainable from the pairwise fixation probabilities of each allele into each other. On the other hand, away from the limit of rare mutation, determining which of multiple alleles is favored by selection is a more nuanced question \citep{antal2009mutation,traulsen2009exploration,CorinaMultiple,wu2012small}.

\subsubsection{One locus}

Although our formalism allows for arbitrary genetics in the sense of ploidy and mating structure, we have focused on selection at a single genetic locus. Extending to multilocus genetics would require considering additional genetic sites---one for each locus on each chromosome. It would then be natural to impose assumptions on the replacement rule so that replacement can only occur between sites at the same locus. The replacement rule could then encode an arbitrary pattern of linkage and recombination. One complication is that quantifying success under natural selection becomes subtler in the context of multiple linked loci \citep{hammerstein1996darwinian,eshel1998long,lehmann2009perturbation}.

\subsubsection{Independence and rarity of mutations}
Although we have generalized the framework of \cite{allen2014measures} to include arbitrary mutational bias, $\nu$, we still assume that mutations occur independently with a constant probability per offspring. This assumption may be violated in natural systems; for example, an adult may acquire a germline mutation that is passed to all offspring. Our formalism can be generalized to accommodate such cases by allowing for adult mutation, or by relaxing the assumption that offspring mutations are independent. One could also allow for the mutation parameters $u$ and $\nu$ to depend on the parental site and/or the population state. The effect of such amendments to our framework would be to alter the mutant appearance distribution (which in turn affects fixation probabilities) as well as the RMC distribution.

Additionally, most of our results assume that mutation is either absent or rare. With nonvanishing mutation, even defining which type is favored under natural selection is a nontrivial question, since distinct, intuitive criteria for success may disagree with each other \citep{allen2014measures}. Furthermore, mutation can alter the direction of selection \citep{traulsen2009exploration}; for example, high mutation rates can impede the evolution of cooperation in spatially-structured populations by diluting the clustering of cooperators \citep{AllenGraphMut,debarre2017fidelity}. General mathematical theorems on natural selection with nonvanishing mutation could shed new light on such results and perhaps uncover new ones.

\subsection{Connections to other approaches}
Beyond the above-mentioned possibilities for generalization, there is significant opportunity to connect our formalism to population genetics and other general approaches in evolutionary theory. For example, one might conceive of a coalescent process \citep{kingman1982coalescent,cox1989coalescing,WakeleyCoalescent} for our formalism, using the neutral replacement probabilities $p_{\left(R,\alpha\right)}^{\circ}$. Likewise, it appears possible to define a notion of identity-by-descent probability \citep{malecot1948IBD,RoussetBilliard,allen2014games} to characterize the extent to which a given pair (or set) of sites are genetically related. More speculatively, one could ask whether diffusion methods \citep{KimuraDiffusion,Chen2018diffusion} could be applied to our formalism in the large-population limit. Developing these connections may provide opportunities to generalize results from population genetics to a wider variety of spatial and genetic structures.

\section{Conclusion} \label{sec:conclusion}
Mathematical modeling of evolution is a robust and growing field, with a rapid pace of new discoveries and rich interplay between theoretical and empirical study. At such moments of expansion, unifying mathematical frameworks are particularly helpful---by clarifying concepts, providing common definitions, and establishing fundamental results. We hope that the formalism presented here will provide a strong foundation for the theory of natural selection in structured populations to build upon.

\begin{longtable}{cp{9cm}c}
	\caption{Glossary of Notation}
	\label{notationtable}\\
	\hline
	Symbol & Description\\
	\hline
	\endhead
	\hline \endfoot
	$\alpha$ & Parentage map in a replacement event & \ref{sec:replacement}\\
	$b\left(\vx\right)$ & Total (population-wide) expected offspring number in state $\vx$ & \ref{sec:demographic}\\
	$b_g\left(\vx\right)$ & Expected offspring number of site $g$ in state $\vx$ & \ref{sec:demographic}\\
	$d_g\left(\vx\right)$ & Death probability of site $g$ in state $\vx$& \ref{sec:demographic}\\
	$\delta$ & Selection strength & \ref{sec:weakseldef}\\
	$\delsel\left(\vx\right)$ & Expected change due to selection in the frequency of allele $A$ & \ref{sec:changesel}\\
	$e_{gh}\left(\vx\right)$ & Marginal probability that allele in site $h$ is replaced by copy of allele in site $g$ in state $\vx$ & \ref{sec:demographic}\\
	$F_{xy}, F_z$ & Fecundity of genotypes $xy$ and $z$ for haplodiploid population; $x,y,z \in \left\{a,A\right\}$ & \ref{sec:haplodiploid}\\
	$f_{xy}$ & Payoff to $x$ interacting with $y$ for games on graphs; $x,y \in \left\{a,A\right\}$ & \ref{sec:games}\\
	$G$ & Set of genetic sites & \ref{sec:sites}\\
	$G_i$ & Set of genetic sites in individual $i$ & \ref{sec:sites}\\
	$I$ & Set of individuals & \ref{sec:sites}\\
	$\cM$ & Evolutionary Markov chain & \ref{sec:Markov}\\
	$n$ & Number of genetic sites & \ref{sec:sites}\\
	$N$ & Number of individuals (population size) & \ref{sec:sites}\\
	$n_i$ & Ploidy (number of genetic sites) of individual $i$ & \ref{sec:sites}\\
	$\nu$ & Mutational bias & \ref{sec:mutation}\\
	$\omega_{gh}$ & Edge weight between vertices $g$ and $h$ for games on graphs & \ref{sec:games}\\
	$R$ & Set of replaced positions in a replacement event & \ref{sec:replacement}\\
	$p_{gh}$ & Probability of stepping from vertex $g$ to vertex $h$ for games on graphs & \ref{sec:games}\\
	$p_{gh}^{\left(m\right)}$ & Probability that $m$-step random walk from $g$ terminates at $h$ for games on graphs & \ref{sec:games}\\
	$\rho_A$, $\rho_a$ & Fixation probabilities of alleles $A$ and $a$, respectively & \ref{sec:fixprob}\\
	$u$ & Mutation probability per reproduction & \ref{sec:mutation}\\
	$v_g$ & Reproductive value of site $g$ & \ref{sec:RV}\\
	$w_g$ & Fitness of site $g$ & \ref{sec:fitness}\\
	$x$ & Frequency of allele $A$ & \ref{sec:demographic}\\
	$\vx$ & Vector of alleles occupying each site; state of $\cM$ & \ref{sec:states}\\
	$x_g$ & Allele occupying site $g \in G$ (0 for $a$, 1 for $A$) & \ref{sec:demographic}\\
	\hline
	$\MSS$ & Mutation-selection stationary distribution & \ref{sec:MSS}\\
	$\RMC$ & Rare-mutation conditional distribution & \ref{sec:RMC}\\
	\hline
	$\hat{}$ & Indicates weighting by reproductive value & \ref{sec:RV}\\
	${}^\circ$ & Indicates the absence of selection & \ref{sec:consistency}\\
	${}'$ & Indicates a quantity to first order in $\delta$ as $\delta \to 0$& \ref{sec:weakseldef}
\end{longtable}

\section*{Acknowledgements}
B.~A.~is supported by National Science Foundation award \#DMS-1715315. A.~M.~is supported by the Office of Naval Research, grant N00014-16-1-2914. We thank Martin A.~Nowak for helpful discussions.

\end{document}